\definecolor{aliceblue}{rgb}{0.94, 0.97, 1.0}
\def\Beweisende{\square}            % Beweisende-Zeichen
\def\BewEnde{\hfill{\Beweisende}}
\def\phm{{\hphantom{-}}} % Laesst genausoviel Platz wie ein Minus!
\def\phi{\varphi}
\def\RR{{\mathbb R}}
\def\NN{{\mathbb N}}
\def\dach#1{\widehat{#1}}
\def\Vkt#1{{\mathbf #1}} 
\newcommand{\kVkt}[1]{\widetilde{\Vkt #1}}  
\newcommand{\mVkt}[1]{\dach{\Vkt #1}}      % mit Dach, "Momentenvektor"
\def\Area{\mbox{Area}}
\def\Vol{\mbox{Vol}}
\newcommand{\ditto}[1][.4pt]{\xrfill{#1}~\textquotedbl~\xrfill{#1}}
\newtheorem{thm}{Theorem}
\newtheorem{lem}{Lemma}
\newtheorem{rmk}{Remark} 
\newtheorem{example}{Example}
\newtheorem{assumption}{Assumption}
\newtheorem{definition}{Definition}
\newtheorem{cor}{Corollary}
\title{Snappability and singularity-distance \\of pin-jointed body-bar frameworks}
\author{Georg Nawratil
    \affiliation{
	Institute of Discrete Mathematics and Geometry \& Center for Geometry and Computational Design\\
	TU Wien\\
	Wiedner Hauptstrasse 8-10/104, Vienna 1040, Austria\\
  Email: nawratil@geometrie.tuwien.ac.at
    }	  
}
\begin{document}

\maketitle    

\pagenumbering{arabic}

%%%%%%%%%%%%%%%%%%%%%%%%%%%%%%%%%%%%%%%%%%%%%%%%%%%%%%%%%%%%%%%%%%%%%%
\begin{abstract}
{\it
It is well-known that there exist rigid frameworks  
whose physical models can snap between different realizations due to non-destructive elastic deformations of material. 
We present a method to measure this 
snapping capability based on the 
total elastic strain energy density of the framework by using the physical 
concept of Green-Lagrange strain. 
As this so-called {\it snappability} only depends on the intrinsic framework geometry, it enables a fair comparison of pin-jointed body-bar frameworks,  
thus it can serve engineers as a criterion within the design process of multistable mechanisms. 
Moreover, it turns out that the value obtained from this intrinsic pseudometric also 
gives the distance to the closest shaky configuration in the case of isostatic frameworks. 
Therefore it is suited for the computation of these singularity-distances for diverse mechanical devices. 
In more detail we study this problem for parallel manipulators of Stewart-Gough type.  
}
\end{abstract}
%%%%%%%%%%%%%%%%%%%%%%%%%%%%%%%%%%%%%%%%%%%%%%%%%%%%%%%%%%%%%%%%%%%%%%

\section{Introduction}

In this paper we study frameworks composed of bars and bodies linked by pin-joints, 
which are rotational joints in the planar case and spherical joints in the spatial case. Note that all joints are assumed to be without clearance. 
A body is either a polyhedron or a polygonal panel\footnote{A polygonal panel can be seen as a body with coplanar vertices according to
\cite{kiraly}.}.  For both of these cases it is assumed that the body does not possess any unnecessary vertices; i.e.\ each of its 
vertices is pin-jointed. An additional assumption is that the inner graph of each 
body is globally rigid, where an inner graph is defined as follows: 

\begin{definition}\label{def:ingr}
Connect all vertices of the polyhedron (polygonal panel) by edges, which are either located on the boundary of the polyhedron (polygon) or in its interior. 
The resulting graph is called inner graph.
\end{definition}

Note that our studied class of geometric structures known as {\it pin-jointed body-bar frameworks} also 
contains hinge-jointed frameworks, as a hinge between two bodies can be replaced by two pin-joints.

By defining the combinatorial structure of the framework as well as the lengths of the bars and the shapes of the bodies, respectively,  
the intrinsic geometry of the framework is fixed. 
In general the assignment of the intrinsic metric does not uniquely determine the embedding of the framework into the Euclidean space, thus such a framework can have 
different incongruent realizations. 

A realization  is called a snapping realization if it is {\it close enough} 
to another incongruent realization such that the physical model can snap into this neighboring realization due to non-destructive elastic deformations of material. 
Shakiness can be seen as the limiting case where two realizations of a framework coincide; e.g.\ \cite{wohlhart,stachel_between}. 

The open problem in this context is the meaning of {\it closeness}, which is tackled in this article. In more detail, 
we present a method to measure the snapping capability (shortly called {\it snappability}) of a realization. 
The provided distance is of interest for practical applications, 
because it can be used in the early design phase of a framework to avoid snapping phenomena 
(e.g.\ engineering of truss structures) or to utilize them  (e.g.\ multistable mechanisms and materials). 
The latter approach has received much attention in the last few years within a wide field of applications; ranging 
from origami structures (e.g.\ \cite{liu2017,gillman,liu}) over mechanical metamaterials (e.g.\ \cite{haghpanah,shang,yang}) to metastructures (e.g.\ \cite{karpov2020}).

But the snappability also provides a distance to the next shaky configuration 
in the case of isostatic frameworks. Therefore this singularity-distance can also be used in the 
context of singularity-free path planing of robotic devices.

\subsection{Review and outline}

In two recent conference articles \cite{ark2020,ima2020} the author already started to investigate this topic. 
In \cite{ark2020} a first attempt towards the computation of the snappability of bar-joint frameworks was done 
based on the definition of Cauchy/Engineering strain. 
In \cite{ima2020} the author extended the approach to 
frameworks composed of bars and triangular panels. For this it was necessary to switch to the concept of 
Green-Lagrange strain, as the elastic strain energy of  triangular panels using Cauchy/Engineering strain is not invariant under rotations. 
In the articles \cite{ark2020,ima2020} the author restricted to planar examples; 
namely the trivial case of a triangular framework and the more sophisticated example of a pinned 3-legged planar 
parallel manipulator. 
In the paper at hand, we render the approach of \cite{ima2020} more precisely and generalize it to polygonal panels and polyhedra, respectively, and study some 
spatial frameworks, which already appear in existing literature on this topic reviewed next.

By the well-known technique of {\it deaveraging} (e.g.\ \cite{stachel_between}, \cite[page 1604]{schulze} and \cite{ivan}) snapping frameworks can be constructed in any dimension $\RR^d$. 
Moreover, for snapping bipartite frameworks in $\RR^d$ an explicit result in terms of confocal hyperquadrics is known 
(cf.\ \cite[page 112]{stachel_between} under consideration of \cite{stachel_palermo}). 
Most results are known for the dimension $d=3$, which are as follows:
There is a series of papers by Walter Wunderlich on snapping spatial structures 
(octahedra \cite{wunderlich_achtflach}, closed 4R loops \cite{wunderlich_gelenksviereck}, antiprisms \cite{wunderlich_antiprism}, 
icosahedra \cite{wunderlich_icosaeder1,wunderlich_icosaeder2}, dodecahedra \cite{wunderlich_dodekaeder}), 
which are reviewed in \cite{stachel_wunderlich}. In this context also the paper \cite{goldberg} 
should be cited, where {\it buckling polyhedral surfaces} and {\it Siamese dipyramids} are introduced. 
Snapping structures are also related to so-called {\it model flexors}\footnote{Mathematically these structures do not posses a 
continuous flexibility but due to free bendings without visible distortions of materials their physical models flex.} 
(cf.\ \cite{milka}) as in some cases the model flexibility can be reasoned by the snapping through different realizations. 
Examples for this phenomenon are the so-called {\it four-horn} \cite{schwabe} or the already mentioned {\it Siamese dipyramid}.
The latter are studied in more detail in \cite{gorkavyy}, especially how minor relative variations on the edge lengths 
produce significant relative variations in the spatial shape. The authors of \cite{gorkavyy} also suggested estimates to quantify these 
intrinsic and extrinsic variations. 
Recently, a more general approach for estimating these kinds of quantities for arbitrary bar-joint frameworks was presented in \cite{almost}, 
where inter alia also the Siamese dipyramid was studied as an example. 

Beside the above reviewed mathematical studies on snapping frameworks, there are also the following application driven approaches. 
Their snapping behavior is studied by 
\begin{enumerate}[(1)]
\item
 numerical simulations based on (a) finite element methods  \cite{haghpanah,shang,yang}  
or (b) force method approaches like \cite{guestII} or a generalized displacement control method \cite{liu2017,gillman,liu}, 
\item
 theoretical approaches based on the variation of the total potential energy \cite{yang,karpov2020,danso,klein}. 
\end{enumerate}
In contrast, our approach only relies on the  total strain energy of the structure  (i.e.\ no a priori assumptions 
on external loads have to be made) paving the way for the definition of the snappability, which only depends on the intrinsic framework geometry. 
Finally it should be noted, that a short review on approaches towards the computation of singular-distances 
is given in the section dealing with Stewart-Gough (SG) manipulators, which brings us straight to the outline of the paper.

After introducing notations and summarizing fundamentals of rigidity theory in Section \ref{sec:outline}, we present the underlying physical model of deformation in 
Section \ref{sec:phymodel}, which is used for building up the pseudometric on the space of intrinsic framework geometries 
in Section \ref{sec:pseudom}. Based on some theoretical considerations, we discuss the computation of the snappability and the singularity-distance in Sections \ref{sec:snap} 
and \ref{sec:dist}, respectively, and demonstrate the presented methods in two examples.  
Afterward we adopt our theoretical results for the singularity-distance computation of SG platforms in Section \ref{sec:SGplatform}, which is also closed by a practical example. 
Finally we conclude the paper in Section \ref{sec:conclusion}.  
Moreover, in Appendix \ref{appendix} the Siamese dipyramid and the four-horn are studied, and the obtained results are compared with existing literature.

\subsection{Notations and fundamentals of rigidity theory}\label{sec:outline}

A pin-jointed body-bar framework $G(\mathcal{K})$ consists of a knot set 
$\mathcal{K}=\Big\{V_{1},\ldots, V_r,$ $B_1^{d_1}(n_1),\ldots,$ $B_q^{d_q}(n_q)\Big\}$ 
and an 2-edge colored (green, red) graph $G$  on $\mathcal{K}$. 
A knot $B_i^{d_i}(n_i)$ represents a body, where $d_i\in\left\{2,3\right\}$ gives the additional information if the body is a polyhedron ($\Leftrightarrow$ $d_i=3$) 
or a polygonal panel ($\Leftrightarrow$ $d_i=2$). 
Without loss of generality we can assume that $d_1=\ldots=d_p=2$ and $d_{p+1}=\ldots =d_q=3$ for $1\leq p \leq q$. 
The number $n_i$ gives the number of vertices of the body $B_i$. A knot $V_i$ corresponds to rotational/spherical joint linking bars. 
A green edge connecting two knots corresponds to a bar. A red edge is only allowed to connect two bodies and represents a pin-joint.

Due to the assumed global rigidity of the inner graph of a body we can replace each body by a globally rigid bar-joint subframework 
according to  \cite[page 437]{kiraly}. 
Note that the combinatorial characterization of global rigidity is only known for $\RR^2$ (cf.\ \cite{jackson}), 
but still open for $\RR^3$ \cite[page 450]{kiraly}.  

\begin{rmk}\label{rem:1}
The completeness of an inner graph is a sufficient condition for global rigidity (cf.\ \cite{asimow}). 
This implies that the body $B_i^{d_i}(n_i)$ has to be convex as all $n_i(n_i-1)/2$ 
edges are in the interior of the polyhedron (polygon) or on its boundary.

Note that the globally rigid bar-joint subframework of a polygonal panel $B_i^2(n_i)$ is not infinitesimal rigid in $\RR^3$ for $n_i>3$ because every vertex can be 
infinitesimally flexed out of the plane spanned by the remaining $n_i-1$ vertices. 
\hfill $\diamond$
\end{rmk}

By replacing the bodies by globally rigid bar-joint subframeworks resulting from the inner graphs, we end up with a bar-joint framework $G_*(\mathcal{K}_*)$ 
which is equivalent to the given pin-jointed body-bar framework $G(\mathcal{K})$. 
This bar-joint framework $G_*(\mathcal{K}_*)$ can be used for defining the 
intrinsic geometry of the framework $G(\mathcal{K})$ in a mathematical rigorous way. For doing this, we introduce the following notation. 

By denoting the vertices of the body $B_i^{d_i}(n_i)$ by $V_{s_i+1}, \ldots ,V_{s_i+n_i}$ with 
$s_i=r+\sum_{j=1}^{i-1} n_j$ we get the set $\mathcal{K}_*=\left\{V_1, \ldots , V_w\right\}$ with 
$w=r+\sum_{j=1}^{q} n_j$. Moreover, we denote the edge connecting  $V_i$ to $V_j$ by $e_{ij}$ with $i<j$. 
Now we can fix the intrinsic metric of the framework $G_*(\mathcal{K}_*)$ (and therefore also of $G(\mathcal{K})$) by assigning a length $L_{ij}\in\RR_{>0}$ to each edge $e_{ij}$. 
Moreover, we collect all these lengths in the $b$-dimensional vector $\Vkt L=(\ldots, L_{ij},\ldots)^T$ of the space $\RR^b$ of 
intrinsic framework metrics, where $b$ gives the number of edges of the graph $G_*$.
Finally we collect the indices $ij$ of edges $e_{ij}$ of $G_*(\mathcal{K}_*)$ which correspond to 
green edges of $G(\mathcal{K})$ in the set $\mathcal{G}$. 

We denote a realization of the framework $G(\mathcal{K})$ and $G_*(\mathcal{K}_*)$ by $G(\Vkt V)$ and $G_*(\Vkt V)$, respectively, 
where the configuration of vertices $\Vkt V=(\Vkt v_1,\ldots ,\Vkt v_w)\in \RR^{wd}$ is composed by the coordinate vectors $\Vkt v_i=(x_i,y_i,z_i)^T$ for $d=3$ and 
$\Vkt v_i=(x_i,y_i)^T$ for $d=2$, respectively, 
of $V_i$ for $i=1,\ldots, w$.

Now we consider a realization  $G_*(\Vkt V)$ of the equivalent bar-joint framework $G_*(\mathcal{K}_*)$. 
In the rigidity community (e.g.\ \cite{connelly_book}) each edge $e_{ij}$ is assigned with a {\it stress (coefficient)} ${\omega}_{ij}\in\RR$.  
For every knot $V_i$ we can associate a so-called {\it equilibrium condition}
\begin{equation}\label{equilibrium}
\sum_{i<j\in N_i}{\omega}_{ij}(\Vkt v_i-\Vkt v_j) + \sum_{i>j\in N_i}{\omega}_{ji}(\Vkt v_i-\Vkt v_j)=\Vkt o
\end{equation}
where $\Vkt o$ denotes the $d$-dimensional zero-vector and $N_i$ the knot neighborhood of $V_i$; i.e. the index set of 
knots $\in\mathcal{K}_*$ connected with $V_i$ by bars. 
If for all $w$ knots this condition is fulfilled, then the $b$-dimensional stress-vector
${\omega}=(\ldots, {\omega}_{ij} , \ldots)^T$ is referred as {\it self-stress} (or {\it equilibrium stress}). \bigskip

\noindent
{\bf Algebraic approach to rigidity theory.}
The relation that two elements of the knot set are edge-connected can also be expressed algebraically. 
They are either quadratic constraints resulting from the squared distances of vertices (implied by a green edge) or they are 
linear conditions, which are stemming from the identification of vertices (implied by a red edge) or the elimination of 
isometries\footnote{This are 6 linear constraints for $d=3$ and 3 linear constraints for $d=2$.}.  
In total this results in a system of $n$ algebraic equations $c_1=0,\ldots ,c_n=0$  in $m$ unknowns\footnote{Note that for bar-joint frameworks this 
number equals $wd$, where $w$ is the number of vertices.},
which constitute an algebraic variety $A$.

If $A(c_1,\ldots ,c_n)$ is positive-dimensional then the framework is flexible; otherwise rigid. 
The framework is called minimally rigid (isostatic) if the removal of any algebraic constraint (resulting from an edge) will make the framework flexible. 
In this case $m=n$ has to hold. Rigid frameworks, which are not isostatic, are called {\it overbraced} or {\it overconstrained} ($n>m$). 
Note that there is also a combinatorial characterization of isostaticity for generic frameworks of dimension 2 according to the work of Laman \cite{laman}, 
but for dimension 3 this is still an open problem.

If $A(c_1,\ldots ,c_n)$ is zero-dimensional, then each real solution corresponds to a realization $G(\Vkt V_i)$ of the framework for $i=1,\ldots ,k$. 
If there is exactly one real solution, then the framework is called globally rigid. 
But one can also consider the complex solutions of the set of {\it realization equations} $c_1,\ldots ,c_n$ resulting in complex knot configurations $\Vkt V_i$ 
with $i=k+1,\ldots ,k+2f$ and $f\in\NN^*$ as they always appear in pairs. According to \cite{pak} they imply 
complex realizations $G(\Vkt V_i)$. 

We can compute in a realization the tangent-hyperplane to each of the hypersurfaces $c_i=0$ in $\RR^m$ for $i=1,\ldots ,n$. Note that this is always possible as 
all hypersurfaces are either hyperplanes or regular hyperquadrics.  
The normal vectors of these tangent-hyperplanes constitute the columns of a  $m\times n$ matrix  $\Vkt R_{G(\Vkt V)}$, which is also known as {\it rigidity matrix} 
of the realization $G(\Vkt V)$.  
If its rank is $m$ then the realization is infinitesimal rigid otherwise it is infinitesimal flexible; i.e. the hyperplanes 
have a positive-dimensional affine subspace in common. Therefore the intersection multiplicity of the $n$ hypersurfaces is at least two in a 
shaky realization. As a consequence shakiness (of order one\footnote{Each additional coinciding realization 
raises the order of the infinitesimal flexibility by one \cite{wohlhart}.}) can also be seen as the limiting case where 
two realizations of a framework coincide \cite{wohlhart,stachel_between,stachel_wunderlich}. 

Clearly, by using the rank condition $rk(\Vkt R_{G(\Vkt V)})<m$ one can also characterize all shaky realizations $G(\Vkt V)$ algebraically by  
the affine variety $A(J)$ -- which is referred as shakiness variety -- where $J$ denotes the ideal generated by all minors of $\Vkt R_{G(\Vkt V)}$ of order $m\times m$.
Let us assume that the polynomials $g_1,\ldots, g_{\gamma}$ form the Gr\"obner basis of the ideal $J$. 
Note that for minimally rigid framework $\gamma=1$ holds, where the infinitesimal flexibility is given by $g_1:\,\, \det(\Vkt R_{G(\Vkt V)})=0$.
Another approach towards this so-called {\it pure condition} in terms of brackets is given in \cite{white}.

\section{Physical model of deformation}\label{sec:phymodel}

The snappability index presented in this paper is based on the physical model of deformation relying on the concept of Green-Lagrange  strain,  
which is reduced to its geometric core by eliminating the influence of material properties. 
In order to do so, we make the following assumption.

\begin{assumption}\label{ass:1}
All bars and bodies of the framework are made of the same homogeneous isotropic material, which 
is non-auxetic; i.e.\ the Poisson ratio $\nu\in[0,1/2]$, and has a positive Young modulus $E>0$. 
\end{assumption}

\subsection{The relation between stress and strain}\label{sec:hook}
Due to the fact that the elastic  deformations  during the process of snapping are expected to be small, we can apply Hooke's law. 
As a consequence, the relation between applied stresses and resulting strains is a linear one, which can be given 
for the spatial case by
\begin{equation}\label{eq:stress_strain_spatial}
\underbrace{\begin{pmatrix}
\varepsilon_x \\
\varepsilon_y \\
\varepsilon_z \\
\gamma_{xy} \\
\gamma_{xz} \\
\gamma_{yz} 
\end{pmatrix}}_{\Vkt e_3}=\underbrace{\frac{1}{E}
\begin{pmatrix}
1 & -\nu & -\nu & 0 & 0 & 0 \\
-\nu & 1 & -\nu & 0 & 0 & 0 \\
-\nu & -\nu & 1& 0 & 0 & 0 \\
0 & 0 & 0 & 2(1+\nu) & 0 & 0 \\
0 & 0 & 0 & 0 & 2(1+\nu) & 0 \\
0 & 0 & 0 & 0 & 0 & 2(1+\nu) 
\end{pmatrix}}_{=:\Vkt D_3(\nu)}
\begin{pmatrix}
\delta_x \\
\delta_y \\
\delta_z \\
\tau_{xy} \\
\tau_{xz} \\
\tau_{yz} 
\end{pmatrix}
\end{equation}
where $\delta_i$ denotes the normal stress in $i$-direction and $\varepsilon_i$ its corresponding normal strain with $i\in\left\{x,y,z\right\}$. 
Moreover, $\tau_{ij}$ denotes the shear stress in the $ij$-plane and $\gamma_{ij}$ its corresponding shear strain with $i\neq j$ and  $i,j\in\left\{x,y,z\right\}$. 

In the case of planar stress ($xy$-plane) the shear stresses $\tau_{xz}$ and $\tau_{yz}$ are zero as well as the normal stress $\delta_z$. 
Then Eq.\ (\ref{eq:stress_strain_spatial}) simplifies to:
\begin{equation}\label{eq:stress_strain_planar}
\underbrace{\begin{pmatrix}
\varepsilon_x \\
\varepsilon_y \\
\gamma_{xy} 
\end{pmatrix}}_{\Vkt e_2}=\underbrace{\frac{1}{E}
\begin{pmatrix}
1 & -\nu  & 0 &  \\
-\nu & 1  & 0 &  \\
0 & 0 &  2(1+\nu) 
\end{pmatrix}}_{=:\Vkt D_2(\nu)}
\begin{pmatrix}
\delta_x \\
\delta_y \\
\tau_{xy} 
\end{pmatrix}.
\end{equation}
In the case of a bar (in $x$-direction) the relation reduces to $\varepsilon_x=D_1\delta_x$ with $D_1:=\tfrac{1}{E}$. 

For the later done computation of the elastic strain energies we need the inverse relations, which map the strains to the stresses. 
This can be obtained by inverting $D_1$, $\Vkt D_2$ and $\Vkt D_3$, respectively.   
$D_1$ does not depend on Poisson's ratio $\nu$ and its inverse reads as $D_1^{-1}=E$. 

As $\Vkt D_2$  is regular for all possible Poisson ratios $\nu\in[0,1/2]$, we can always compute 
\begin{equation}\label{eq:d2}
\Vkt D_2^{-1}(\nu)  =E
\begin{pmatrix}
\tfrac{1}{1-\nu^2} & \tfrac{\nu}{1-\nu^2}  & 0 &  \\
\tfrac{\nu}{1-\nu^2} & \tfrac{1}{1-\nu^2}  & 0 &  \\
0 & 0 &  \tfrac{1}{2(1+\nu)} 
\end{pmatrix} \quad\text{for}\quad 0\leq\nu\leq\frac{1}{2}.
\end{equation} 
For the spatial case, $\Vkt D_3^{-1}(\nu)$ is only not defined if $\nu$ equals the upper border of $\frac{1}{2}$, thus we get:
\begin{equation}\label{eq:d3}
\Vkt D_3^{-1}(\nu)  =E
 \begin{pmatrix}
\tfrac{\nu-1}{2\nu^2+\nu-1} & \tfrac{-\nu}{2\nu^2+\nu-1} & \tfrac{-\nu}{2\nu^2+\nu-1} & 0 & 0 & 0 \\
\tfrac{-\nu}{2\nu^2+\nu-1} & \tfrac{\nu-1}{2\nu^2+\nu-1} & \tfrac{-\nu}{2\nu^2+\nu-1} & 0 & 0 & 0 \\
\tfrac{-\nu}{2\nu^2+\nu-1} & \tfrac{-\nu}{2\nu^2+\nu-1} & \tfrac{\nu-1}{2\nu^2+\nu-1} & 0 & 0 & 0 \\
0 & 0 & 0 & \tfrac{1}{2(1+\nu)} & 0 & 0 \\
0 & 0 & 0 & 0 & \tfrac{1}{2(1+\nu)} & 0 \\
0 & 0 & 0 & 0 & 0 & \tfrac{1}{2(1+\nu)} 
\end{pmatrix}
\quad\text{for}\quad  0\leq\nu<\frac{1}{2}. 
\end{equation} 
For $\nu=\tfrac{1}{2}$ we compute the Moore-Penrose pseudo inverse  of $\Vkt D_3$  which yields:
\begin{equation}\label{eq:d3_pseudo}
\Vkt D_3^{-1}(\tfrac{1}{2}) =E
 \begin{pmatrix}
\phm\tfrac{4}{9} & -\tfrac{2}{9}  & -\tfrac{2}{9} & 0 & 0 & 0 \\
-\tfrac{2}{9} & \phm\tfrac{4}{9}  & -\tfrac{2}{9} & 0 & 0 & 0 \\
-\tfrac{2}{9} & -\tfrac{2}{9} & \phm\tfrac{4}{9}  & 0 & 0 & 0 \\
0 & 0 & 0 & \tfrac{1}{3} & 0 & 0 \\
0 & 0 & 0 & 0 & \tfrac{1}{3} & 0 \\
0 & 0 & 0 & 0 & 0 & \tfrac{1}{3} 
\end{pmatrix}.
\end{equation}

%%%%%%%%%%%%%%%%%%%%%%%%%%%%%%%%%%%%%%%%%%%%%%%%%%%%%%%%%%%%%%%%%%%%%%%%%%%%%%%%%%%%%%%%%%%%%%%%%%%%%%%%%%%%%%%
%%%%%%%%%%%%%%%%%%%%%%%%%%%%%%%%%%%%%%%%%%%%%%%%%%%%%%%%%%%%%%%%%%%%%%%%%%%%%%%%%%%%%%%%%%%%%%%%%%%%%%%%%%%%%%%
%%%%%%%%%%%%%%%%%%%%%%%%%%%%%%%%%%%%%%%%%%%%%%%%%%%%%%%%%%%%%%%%%%%%%%%%%%%%%%%%%%%%%%%%%%%%%%%%%%%%%%%%%%%%%%%

\subsection{Strain energy according to Green-Lagrange}\label{sec:GLstrain}

The study of the deformation of a polyhedron is based on the deformation of tetrahedra, which also play a central role in the stress
analysis within the finite element method  (e.g.\ see \cite[Chapter 6]{logan}). 
The strain computation for 3-simplices  according to Green-Lagrange is outlined next (e.g.\ see \cite[Section 2.4.2]{reddy}).

Let $V_a, V_b, V_c, V_d$ denote the vertices of the tetrahedron in the given undeformed configuration 
and  $V'_a, V'_b, V'_c, V'_d$ in the deformed one. 
Then there exists a uniquely defined $3\times 3$ matrix $\Vkt A$ which has the property
\begin{equation}\label{def:affine}
\Vkt A(\mVkt v_b-\mVkt v_a)=\mVkt v'_b-\mVkt v'_a, \qquad 
\Vkt A(\mVkt v_c-\mVkt v_a)=\mVkt v'_c-\mVkt v'_a, \qquad 
\Vkt A(\mVkt v_d-\mVkt v_a)=\mVkt v'_d-\mVkt v'_a, 
\end{equation}
where $\mVkt v_i$ (resp.\ $\mVkt v'_i$) is a 3-dimensional vector of $V_i$ (resp.\  $V'_i$) for $i\in\left\{a,b,c,d\right\}$ 
with respect to a Cartesian frame $\mathcal F$ (resp.\  $\mathcal F'$) attached to the undeformed (resp.\ deformed) tetrahedron. 
The Cartesian frame $\mathcal F$ can always be chosen in a way that  
its origin equals $V_a$, the vertex $V_b$ is located on its positive $x$-axis and 
$V_c$ is located in the $xy$-plane with a positive $y$ coordinate; i.e.\  
\begin{equation}\label{eq:coordtet}
\mVkt v_a=(0,0,0)^T, \quad \mVkt v_b=(x_{b},0,0)^T, \quad \mVkt v_c=(x_{c},y_{c},0)^T, \quad \mVkt v_d=(x_{d},y_{d},z_{d})^T,
\end{equation}
with $x_{b}>0$ and $y_{c}>0$. 
Similar considerations can be done for the Cartesian frame $\mathcal F'$ with respect to the tetrahedron $V_a', V_b', V_c', V_d'$ 
ending up with exactly the same coordinatization as above but only primed. 
Then the normal strains and the shear strains can be computed as 
\begin{equation}\label{eq:equi}
\begin{pmatrix}
\varepsilon_{x} & \tfrac{\gamma_{xy}}{2} & \tfrac{\gamma_{xz}}{2} \\
\tfrac{\gamma_{xy}}{2} & \varepsilon_{y} & \tfrac{\gamma_{yz}}{2} \\
\tfrac{\gamma_{xz}}{2} & \tfrac{\gamma_{yz}}{2}  & \varepsilon_{z}
\end{pmatrix}=\frac{1}{2}\left(\Vkt A^T\Vkt A-\Vkt I\right).
\end{equation}
Reassembling these quantities in the vector $\Vkt e_3$ (cf.\ Eq.\ (\ref{eq:stress_strain_spatial})) 
the elastic strain energy of the deformation can be calculated as 
\begin{equation}\label{esetp3}
U_{abcd}=\Vol_{abcd}\tfrac{1}{2}\Vkt e_3^T\Vkt D^{-1}_3(\nu)\Vkt e_3
\end{equation}
where $\Vol_{abcd}$ denotes the volume of the undeformed tetrahedron and $\Vkt D^{-1}_3(\nu)$ the stress/strain matrix (constitutive matrix) 
from Eq.\  (\ref{eq:d3}) and  Eq.\  (\ref{eq:d3_pseudo}), respectively.

The same procedure can be done for the computation of the elastic strain energy $U_{abc}$ of a triangular panel with vertices $V_a$, $V_b$ and $V_c$, which is outlined in detail in \cite{ima2020}. 
As final formula we obtain in this case: 
\begin{equation}\label{esetp2}
U_{abc}=\Vol_{abc}\tfrac{1}{2}\Vkt e_2^T\Vkt D^{-1}_2(\nu)\Vkt e_2
\end{equation}
where  $\Vkt D^{-1}_2(\nu)$ denotes the stress/strain matrix from Eq.\  (\ref{eq:d2}) and  $\Vol_{abc}$ the volume of the undeformed panel, which can be computed as the product of the 
triangle's area $\Area_{abc}$ and the panel height $h_{abc}$.  
For a bar with end-points $V_a$ and $V_b$ we end up with the following simple expression: 
\begin{equation}\label{bar_GL}
U_{ab}=\frac{E\Vol_{ab}}{8L_{ab}^4} ({L'_{ab}}^2-L_{ab}^2)^2 
\end{equation}
where $\Vol_{ab}$ denotes the volume of the undeformed bar, which can be computed as the product of the length $L_{ab}$ of the undeformed bar 
and its cross-sectional area  $\Area_{ab}$. The deformed bar length is given by $L'_{ab}$.

%%%%%%%%%%%%%%%%%%%%%%%%%%%%%%%%%%%%%%%%%%%%%%%%%%%%%%%%%%%%%%%%%%%%%%%%%%%%%%%%%%%%%%%%%%%%%%%%%%%%%%%%%%%%%%%
%%%%%%%%%%%%%%%%%%%%%%%%%%%%%%%%%%%%%%%%%%%%%%%%%%%%%%%%%%%%%%%%%%%%%%%%%%%%%%%%%%%%%%%%%%%%%%%%%%%%%%%%%%%%%%%
%%%%%%%%%%%%%%%%%%%%%%%%%%%%%%%%%%%%%%%%%%%%%%%%%%%%%%%%%%%%%%%%%%%%%%%%%%%%%%%%%%%%%%%%%%%%%%%%%%%%%%%%%%%%%%%

\section{A pseudometric on the space of intrinsic framework metrics}\label{sec:pseudom}

In this section we set up a pseudometric on the space of intrinsic framework metrics, which is based on 
the total strain energy density of the framework, because in this way the distance is invariant under scaling (change of unit length). 
Moreover, it allows to compare pin-jointed body-bar frameworks differing in the number of knots, the combinatorial structure and intrinsic metric.

\subsection{The strain energy density of a framework}\label{sec:strainenergy}

We assume that the intrinsic metric of the framework $G(\mathcal{K})$ is given by the edge-length vector $\Vkt L=(\ldots, L_{ij},\ldots)^T\in \RR^b$ 
of the equivalent framework $G_*(\mathcal{K}_*)$. 
In the same way the intrinsic metric of the deformed framework is determined by  $\Vkt L'=(\ldots, L'_{ij},\ldots)^T\in \RR^b$. 
As the strain energy of a polyhedron (polygonal panel) depends on its tetrahedralization\footnote{Decomposition into a set of disjoint 
tetrahedra without adding new vertices.} (triangulation\footnote{Decomposition into a set of disjoint triangles without adding new vertices.})
we compute the strain energy over all tetrahedra of the polyhedron (triangles of the polygonal panel). 
To do so, we define the index set $\mathcal{C}_i$ containing all index 4-tuple $abcd$ (3-tuple $abc$) 
with $a<b<c<d$ (with $a<b<c$) of non-degenerated\footnote{The tetrahedron (triangle) does not degenerate into a plane (line). 
Note that $\#\mathcal{C}_i=(n_i-3)(n_i-2)(n_i-1)n_i/24$ ($\#\mathcal{C}_i=(n_i-2)(n_i-1)n_i/6$) holds if the polyhedron 
(polygonal panel) is strictly convex.} tetrahedra (triangles) within a polyhedron $B_i^3(n_i)$ (polygonal panel $B_i^2(n_i)$). 
Using this notation we can formulate the strain energy density within the next lemma.

\begin{lem}\label{basic}
The strain energy density of a pin-jointed body-bar framework given by
\begin{equation}\label{eq:u}
u(\Vkt L'):=\frac{\sum_{ab\in\mathcal{G}} U_{ab}(\Vkt L')+
\sum_{i=1}^p \text{Vol}(B_i^2)\left[\frac{\sum_{abc\in\mathcal{C}_i} U_{abc}(\Vkt L')}{\sum_{abc\in\mathcal{C}_i} \Vol_{abc}}  \right] + 
\sum_{j=p+1}^q \text{Vol}(B_j^3)\left[\frac{\sum_{abcd\in\mathcal{C}_j} U_{abcd}(\Vkt L')}{\sum_{abcd\in\mathcal{C}_j} \Vol_{abcd}} \right]}
{\sum_{ab\in\mathcal{G}} \Vol_{ab}+
\sum_{i=1}^p \text{Vol}(B_i^2) + 
\sum_{j=p+1}^q \text{Vol}(B_j^3)}
\end{equation}
is defined by the intrinsic metric $\Vkt L$ of the undeformed framework, the 
cross-sectional areas $\Area_{ab}$ of its bars, the panel heights $h_{abc}$ and the material constants $E$ and $\nu$. 
The argument of the density function is given by the intrinsic metric $\Vkt L'$ of the deformed framework. It 
is a fourth order polynomial with respect to the variables $L_{ij}'$ which only appear with even powers.   
\end{lem}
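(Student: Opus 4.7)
\emph{Proof plan.} The strategy is to verify the polynomial claim on each elementary contribution---$U_{ab}$, $U_{abc}$, $U_{abcd}$---separately and then to check that the assembly in Eq.~(\ref{eq:u}) preserves this structure while introducing only weights built from the declared data. The bar case is immediate: Eq.~(\ref{bar_GL}) together with $\Vol_{ab}=L_{ab}\Area_{ab}$ exhibits $U_{ab}$ as a polynomial of degree $4$ in $L'_{ab}$ containing only the monomials $1$, $L'^{2}_{ab}$ and $L'^{4}_{ab}$, with coefficients depending exclusively on $L_{ab}$, $\Area_{ab}$ and $E$.

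The heart of the argument is the tetrahedral case. Let $\Vkt M$ be the $3\times 3$ matrix whose columns are the edge vectors $\mVkt v_b-\mVkt v_a$, $\mVkt v_c-\mVkt v_a$, $\mVkt v_d-\mVkt v_a$ of the undeformed tetrahedron expressed in the canonical frame~(\ref{eq:coordtet}), and let $\Vkt M'$ be its deformed counterpart. Then Eq.~(\ref{def:affine}) gives $\Vkt A=\Vkt M'\Vkt M^{-1}$ and
\begin{equation*}
\Vkt A^T\Vkt A-\Vkt I\;=\;\Vkt M^{-T}\bigl(\Vkt M'^T\Vkt M'-\Vkt M^T\Vkt M\bigr)\Vkt M^{-1}.
\end{equation*}
By the polarization identity every entry of $\Vkt M'^T\Vkt M'$ is a rational-coefficient linear combination of the six squared deformed edge lengths $L'^{2}_{**}$, and the same holds for $\Vkt M^T\Vkt M$ in the $L^{2}_{**}$. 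Since the canonical coordinatisation fixes $\Vkt M$, and hence $\Vkt M^{-1}$, as a function of $\Vkt L$ alone, every entry of $\Vkt A^T\Vkt A-\Vkt I$---and via Eq.~(\ref{eq:equi}) also every component of $\Vkt e_3$---is affine in the $L'^{2}_{**}$ with $\Vkt L$-dependent coefficients. Consequently the quadratic form $\tfrac{1}{2}\Vkt e_3^T D_3^{-1}(\nu)\Vkt e_3$ is quadratic in the $L'^{2}_{**}$, i.e.\ a polynomial of degree $4$ in the $L'_{**}$ containing only even powers. Multiplying by $\Vol_{abcd}$, which itself is a function of $\Vkt L$ via the Cayley--Menger determinant, preserves this structure and yields the claim for $U_{abcd}$.

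The planar case runs verbatim with $2\times 2$ matrices; the only new ingredient is $\Vol_{abc}=\Area_{abc}(\Vkt L)\,h_{abc}$, so that the panel height $h_{abc}$ enters as an additional scalar parameter alongside $\Vkt L$, $E$ and $\nu$. Plugging the three types of contributions into Eq.~(\ref{eq:u}) and noting that every body volume $\Vol(B_i^{d_i})$ is determined by $\Vkt L$ (and, for panels, by the heights $h_{abc}$) thanks to the global rigidity of the inner graph, all weights collapse to functions of $\Vkt L$, $\Area_{ab}$, $h_{abc}$, $E$, $\nu$, while the $\Vkt L'$-dependent numerators retain the even-degree-$4$ polynomial form; a weighted sum of such polynomials inherits the same form, proving the lemma. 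The one delicate point---the main obstacle in a fully rigorous write-up---is that $\Vkt A$ itself is not an invariant of $\Vkt L'$ alone; only the scalar $\tfrac{1}{2}\Vkt e_k^T D_k^{-1}(\nu)\Vkt e_k$ is, by the isotropy built into Assumption~\ref{ass:1}. Fixing once and for all the canonical frame~(\ref{eq:coordtet}) removes this ambiguity and is what legitimates the polarization argument above.
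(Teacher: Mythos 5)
Your proof is correct, and it reaches the same conclusion through the same case split (bars, triangular panels, tetrahedra, then assembly), but the key step is handled by a genuinely different and more structural argument than the paper's. The paper fixes the same canonical frames, writes out the vertex coordinates $x_b,x_c,y_c,x_d,y_d,z_d$ (and their primed counterparts) explicitly as algebraic functions of the edge lengths, substitutes them into Eqs.~(\ref{esetp2}) and (\ref{esetp3}), and reads the claimed degree and parity off the resulting expression --- essentially a symbolic-computation verification, with a separate remark that the answer is independent of the sign choices for $y_c$ and $z_d$. You instead exhibit $\Vkt A=\Vkt M'\Vkt M^{-1}$ and the identity $\Vkt A^T\Vkt A-\Vkt I=\Vkt M^{-T}(\Vkt M'^T\Vkt M'-\Vkt M^T\Vkt M)\Vkt M^{-1}$, so that polarization makes every strain component affine in the squared deformed lengths and the quadratic energy form automatically of degree $4$ with only even powers. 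This explains \emph{why} the paper's substitution must come out as claimed, it dispenses with the explicit coordinate formulas, and it handles the sign ambiguity almost for free (reflecting $z_d$ conjugates the strain tensor by an orthogonal matrix, which the isotropic form $\Vkt e^T D^{-1}\Vkt e$ does not see). What the paper's explicit route buys in exchange is the concrete coefficient data that is reused later (e.g.\ the matrix $\Vkt M$ of Remark~\ref{rmk:lemma1} and the closed-form coordinatisations quoted in subsequent sections). One small quibble: your closing ``delicate point'' slightly undersells your own argument --- in the fixed canonical frames the entire strain vector $\Vkt e_3$, not merely the scalar energy, is already determined by $(\Vkt L,\Vkt L')$ up to the discrete orientation choice; isotropy is needed only to kill that residual sign dependence.
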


\noindent
Proof:
We prove this lemma by investigating each summand in the numerator of Eq.\ (\ref{eq:u}) for the stated properties. 
As the energy functions differ for bars, triangular panels and tetrahedra, we have to split up the proof into 
these three cases:
\begin{enumerate}[$\bullet$]
\item 
{\it Bar:} For bars this result follows directly from Eq.\ (\ref{bar_GL}). 
\item
{\it Triangular panel:} 
We choose a planar Cartesian frame $\mathcal F$ in a way that the coordinates of the triangle $V_a, V_b, V_c$ 
read as $\mVkt v_a=(0,0)^T$, $\mVkt v_b=(x_b,0)^T$ and $\mVkt v_c=(x_c,y_c)^T$ with 
\begin{equation}\label{eq:plancoord}
x_b=L_{ab}, \quad
x_c=\tfrac{L_{ab}^2+L_{ac}^2-L_{bc}^2}{2L_{ab}},\quad 
y_c=\tfrac{\sqrt{
(L_{ab} + L_{ac} + L_{bc})(L_{ab} - L_{ac} + L_{bc})(L_{ab} + L_{ac} - L_{bc})(L_{ac} + L_{bc} - L_{ab})}}{2L_{ab}}
\end{equation}
where the coordinate $y_c$ can have positive or negative sign for planar frameworks 
depending on the orientation of the triangle $V_a, V_b, V_c$.  For spatial frameworks one can always assume a positive sign.   
Similar considerations can be done for the planar Cartesian frame $\mathcal F'$ with respect to the triangle $V_a', V_b', V_c'$ 
ending up with exactly the same coordinatization as above but only primed. Inserting these coordinates of the 
six vectors $\mVkt v_a,\mVkt v_b,\mVkt v_c,\mVkt v_a',\mVkt v_b',\mVkt v_c'$ into Eq.\ (\ref{esetp2})
shows the result for triangular panels by taking into account that the area $\Area_{abc}$ of the triangle can be computed by Heron's formula. 
Note that the obtained expression is independent of the signs of the $y$-coordinates of $\mVkt v_c$ and $\mVkt v_c'$.
\item
{\it Tetrahedron:} 
We choose the same Cartesian frame $\mathcal F$ as in Section \ref{sec:GLstrain} which implies the coordinatization of the 
tetrahedron $V_a, V_b, V_c,V_d$ given in Eq.\ (\ref{eq:coordtet}) with $x_b$, $x_c$ and $y_c$ from Eq.\ (\ref{eq:plancoord}) and 
\begin{equation}
\begin{split}
x_d=&\tfrac{L_{ab}^2 + L_{ad}^2 - L_{bd}^2}{2L_{ab}}, \quad
y_d=\tfrac{L_{ab}^2L_{ac}^2 + L_{ab}^2L_{ad}^2 + L_{ab}^2L_{bc}^2 + L_{ab}^2L_{bd}^2 - 2L_{ab}^2L_{cd}^2 - L_{ac}^2L_{ad}^2 + L_{ac}^2L_{bd}^2 + L_{ad}^2L_{bc}^2 - L_{bc}^2L_{bd}^2 -L_{ab}^4} 
{2L_{ab}\sqrt{(L_{ab} + L_{ac} + L_{bc})(L_{ab} - L_{ac} + L_{bc})(L_{ab} + L_{ac} - L_{bc})(L_{ac} + L_{bc} - L_{ab})}}  \\
z_d=&{\scriptstyle (-L_{ab}^4L_{cd}^2 - L_{ab}^2L_{ac}^2L_{bc}^2 + L_{ab}^2L_{ac}^2L_{bd}^2 + L_{ab}^2L_{ac}^2L_{cd}^2 + L_{ab}^2L_{ad}^2L_{bc}^2 - L_{ab}^2L_{ad}^2L_{bd}^2 + L_{ab}^2L_{ad}^2L_{cd}^2 
+ L_{ab}^2L_{bc}^2L_{cd}^2 + L_{ab}^2L_{bd}^2L_{cd}^2 - L_{ab}^2L_{cd}^4 - L_{ac}^4L_{bd}^2 +} \\ 
&{\scriptstyle 
 L_{ac}^2L_{ad}^2L_{bc}^2 + L_{ac}^2L_{ad}^2L_{bd}^2 - L_{ac}^2L_{ad}^2L_{cd}^2 
+ L_{ac}^2L_{bc}^2L_{bd}^2 - L_{ac}^2L_{bd}^4 + L_{ac}^2L_{bd}^2L_{cd}^2 - L_{ad}^4L_{bc}^2 - L_{ad}^2L_{bc}^4 + L_{ad}^2L_{bc}^2L_{bd}^2 + L_{ad}^2L_{bc}^2L_{cd}^2 - L_{bc}^2L_{bd}^2L_{cd}^2
)^{1/2}}/ \\
&{\scriptstyle \sqrt{(L_{ab} + L_{ac} + L_{bc})(L_{ab} - L_{ac} + L_{bc})(L_{ab} + L_{ac} - L_{bc})(L_{ac} + L_{bc} - L_{ab})}}
\end{split}
\end{equation}
where the coordinate $z_d$ can have positive or negative sign depending on the orientation of the tetrahedron $V_a, V_b, V_c,V_d$. 
We get the same coordinatization for the tetrahedron $V'_a, V'_b, V'_c,V'_d$ as above but only primed. Inserting these eight vectors 
$\mVkt v_a,\mVkt v_b,\mVkt v_c, \mVkt v_d, \mVkt v_a',\mVkt v_b',\mVkt v_c', \mVkt v_d'$
into Eq.\ (\ref{esetp3}) under consideration that $\Vol_{abcd}$ can be computed by the Cayley-Menger determinant
shows the stated result. 
Note that the obtained expression is independent of the sign of the $z$-coordinate of $\mVkt v_d$ and $\mVkt v_d'$. 
Moreover, it should be mentioned that the obtained polynomial is homogenous of degree 4 in  $L_{ij}'$ for $\nu=1/2$.  \hfill $\BewEnde$
\end{enumerate}

\begin{rmk}\label{rmk:lemma1}
Concerning Lemma \ref{basic} the following should be noted:
\begin{enumerate}[$\star$]
\item
The expressions given in the square brackets of Eq.\ (\ref{eq:u}) can be seen as the mean densities 
of the polygonal panels and polyhedra, respectively.
\item 
Note that the height $h_{abc}$ of each triangular panel belonging to $B_i^2(n_i)$ equals the 
height $h_i$ of $B_i^2(n_i)$.   
\item
Due to Lemma \ref{basic} the formula for $u(\Vkt L')$ can be written in matrix formulation as 
$u(\Vkt Q')=\Vkt Q'^T \Vkt M \Vkt Q'$ where $\Vkt M$ is a symmetric $(b+1)$-matrix and $\Vkt Q':=(1, \ldots ,Q_{ij}', \ldots)^T$
is composed of the $b$ squared edge lengths $Q_{ij}':=L_{ij}'^2$ and the number $1$. \hfill $\diamond$
\end{enumerate}
\end{rmk}

\subsection{Definition of the pseudometric}

The pseudometric on the space $\RR^b$ of intrinsic framework metrics is defined within the next lemma:

\begin{lem}\label{pseudo}
The following function 
\begin{equation}\label{pseudo}
d:\,\,
\RR^b\times\RR^b\rightarrow \RR_{\geq 0} \quad\text{with}\quad
(\Vkt L',\Vkt L'')\mapsto d(\Vkt L',\Vkt L''):= \frac{|u(\Vkt L')-u(\Vkt L'')|}{E}
\end{equation}
is a pseudometric on the $b$-dimensional space of 
intrinsic framework metrics given by  $\Vkt L'$  and $\Vkt L''$, respectively. 
Moreover, the pseudometric does not depend on the choice of $E$. 
\end{lem}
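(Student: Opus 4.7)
The plan is to verify that $d$ satisfies the four axioms of a pseudometric (non-negativity, vanishing on the diagonal, symmetry, triangle inequality) and then separately confirm the $E$-independence. The observation underlying everything is that $d$ has the form $|f(\Vkt L') - f(\Vkt L'')|$ with $f := u/E$ a real-valued function on $\RR^b$; any such composition of the standard absolute value on $\RR$ with a real-valued function is automatically a pseudometric, so three of the four axioms come essentially for free. The reason we only get a \emph{pseudo}metric and not a metric is that Lemma \ref{basic} guarantees $u$ is a (non-injective) polynomial, so distinct intrinsic metrics $\Vkt L' \neq \Vkt L''$ can share the same value of $u$ and therefore yield $d(\Vkt L', \Vkt L'') = 0$.

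Concretely, I would argue as follows. By Lemma \ref{basic}, $u:\RR^b \to \RR$ is a well-defined polynomial, so $u(\Vkt L') - u(\Vkt L'') \in \RR$ and $|u(\Vkt L') - u(\Vkt L'')|/E \geq 0$, since $E>0$ by Assumption \ref{ass:1}. Vanishing on the diagonal, $d(\Vkt L',\Vkt L')=0$, and symmetry, $d(\Vkt L',\Vkt L'')=d(\Vkt L'',\Vkt L')$, both follow from the defining properties of $|\cdot|$ on $\RR$. For the triangle inequality I would insert a pivot $\Vkt L'''\in\RR^b$, add and subtract $u(\Vkt L''')$ inside the absolute value, and invoke the ordinary triangle inequality on $\RR$:
\begin{equation*}
|u(\Vkt L') - u(\Vkt L'')| \leq |u(\Vkt L') - u(\Vkt L''')| + |u(\Vkt L''') - u(\Vkt L'')|,
\end{equation*}
and then divide by $E$.

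It remains to show that $d$ does not depend on $E$. For this I would inspect each summand appearing in the numerator of Eq.\ (\ref{eq:u}) and verify that every piece of $u$ carries $E$ as an overall prefactor. For bars this is immediate from Eq.\ (\ref{bar_GL}). For triangular panels and tetrahedra one reads off the prefactor $E$ from the constitutive matrices $D_2^{-1}(\nu)$ and $D_3^{-1}(\nu)$ in Eq.\ (\ref{eq:d2}), Eq.\ (\ref{eq:d3}) and Eq.\ (\ref{eq:d3_pseudo}); the volume weights and the bracket ratios in Eq.\ (\ref{eq:u}) contain no $E$, so the $E$-factor survives termwise. Hence $u(\Vkt L') = E\cdot \tilde u(\Vkt L')$ with $\tilde u$ independent of $E$, and therefore
\begin{equation*}
d(\Vkt L',\Vkt L'') = |\tilde u(\Vkt L') - \tilde u(\Vkt L'')|
\end{equation*}
depends only on $\nu$, the framework combinatorics, the undeformed metric $\Vkt L$, the cross-sectional areas $\Area_{ab}$ and the panel heights $h_{abc}$, but not on $E$.

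There is essentially no hard step; the only point that requires care is the bookkeeping in the $E$-independence claim, namely confirming that the normalizing denominators in the bracketed mean-density terms of Eq.\ (\ref{eq:u}) involve only volumes (and therefore no $E$), so that the $E$ factor coming from the individual $U_{abc}$ and $U_{abcd}$ is not cancelled. Once this is observed, both statements of the lemma are straightforward.
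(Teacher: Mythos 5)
Your proof is correct and follows essentially the same route as the paper: the paper dismisses the four pseudometric axioms as trivial (you supply the standard pullback argument via the absolute value on $\RR$, which is exactly what is intended), and your $E$-independence argument — that $E$ factors out of every summand of $u$ via Eq.\ (\ref{bar_GL}) and the constitutive matrices, and then cancels against the denominator $E$ in Eq.\ (\ref{pseudo}) — is precisely the paper's argument, spelled out in more detail.
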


\noindent
Proof: 
One has to check the axioms for a pseudometric
\begin{equation}\label{axioms}
(1)\,\,\, d(\Vkt L',\Vkt L'')\geq 0, \quad
(2)\,\,\, d(\Vkt L',\Vkt L')= 0, \quad
(3)\,\,\, d(\Vkt L',\Vkt L'')=d(\Vkt L'',\Vkt L'), \quad
(4)\,\,\, d(\Vkt L',\Vkt L''')\leq d(\Vkt L',\Vkt L'')+d(\Vkt L'',\Vkt L'''),
\end{equation}
which is a trivial task and remains to the reader. 

Due to Assumption \ref{ass:1}, Young's modulus $E$ factors out of the density $u(\Vkt L')$. Therefore it factors out of the numerator of the 
distance function and cancels with the numerator. \bigskip \hfill $\BewEnde$

From Lemma \ref{basic} it is clear that the pseudodistance of Eq.\ (\ref{pseudo}) does not only depend on the intrinsic metric $\Vkt L$ of the undeformed framework 
but also on the cross-sectional areas of the bars and the heights of the panels, which are needed for the computation of their volumes. 
In the following section we fix these parameters by relating them to the intrinsic geometry of the undeformed framework.

\subsubsection{Geometric motivated volumetric dimensioning of bars and panels} \label{sec:dim}

As mentioned in Section \ref{sec:outline} each pin-jointed body-bar framework $G(\mathcal{K})$ can be replaced by an equivalent bar-joint framework $G_*(\mathcal{K}_*)$. 
In order to ensure a fair comparability of both frameworks, we came up with the following assumption.
\begin{assumption}
The frameworks $G(\mathcal{K})$ and $G_*(\mathcal{K}_*)$ have the same volume; i.e.\  they are built from the same amount of material.  
Moreover, we assume that all bars have the same cross-sectional area noted by $\Area_{\diameter}$. 
\end{assumption}
We start with the volumes of the polyhedra $\text{Vol}\left(B_j^{3}(n_j)\right)$, which are already determined by $\Vkt L$, 
 and compute $\Area_{\diameter}$ as 
\begin{equation}\label{eq:A}
\Area_{\diameter}:=\frac{\sum_{j=p+1}^q \text{Vol}\left(B_j^{3}(n_j)\right)}{\sum_{j=p+1}^q \sum_{ab\in\mathcal{I}_j}W_{ab}L_{ab}}
\end{equation}
where $\mathcal{I}_j$ is the index set of all pairs of vertices belonging to an edge of the inner graph of the polyhedron $B_j^{3}(n_j)$. 
Moreover, the weight factor $W_{ab}$ is one over the number of bodies hinged along the corresponding bar\footnote{If the bar does not hinge bodies, then the weight factor is one.}.

Now  having $\Area_{\diameter}$ one can also compute the height $h_i$ of the polygonal panel $B_i^{2}(n_i)$ over the bar-joint subframework 
equivalent to $B_i^{2}(n_i)$ as:
\begin{equation}
h_i:=\frac{\Area_{\diameter}\sum_{ab\in\mathcal{I}_i}W_{ab}L_{ab}}{\text{Area}\left(B_i^{2}(n_i)\right)}
\end{equation}
where $\mathcal{I}_i$ is the index set of all pairs of vertices belonging to an edge of the inner graph of the polygonal panel $B_i^{2}(n_i)$. 
In the case that the framework does not contain any polyhedra, then 
we can compute $h_i$ in the same way but depending on the unknown $\Area_{\diameter}$. 
In this case it can easily be seen that $\Area_{\diameter}$ factors out in the numerator as well as in the denominator of Eq.\ (\ref{eq:u}).  
Therefore Eq.\ (\ref{pseudo}) does not depend on $\Area_{\diameter}$. This also holds if the given framework 
is already a bar-joint framework.

\subsubsection{Geometric motivated choice of Poisson's ratio}\label{sec:pratio}

Under consideration of Section \ref{sec:dim} the pseudometric $d$ only depends on Poisson's ratio $\nu$ 
beside the intrinsic metric $\Vkt L$ of the undeformed framework. 
From the geometric point of view the most satisfying choice is $\nu=1/2$ as in this case the framework deforms at constant volume, which is also known 
as an  {\it isochoric} deformation. 
This does not pose any problems for a bar or triangular panel, as one can always define the cross-sectional  area $\Area'_{ab}$  of the deformed bar by $\Vol_{ab}/L'_{ab}$ and 
a height $h'_{abc}$ of the deformed panel by $\Vol_{abc}/\Area'_{abc}$, respectively, where $\Area'_{abc}$ is the area of the deformed triangle $V'_a$, $V'_b$, $V'_c$. 
But for each tetrahedron we get the additional condition that $\Vol_{abcd}=\Vol'_{abcd}$ holds, which fits very well into our theory for the following reason: 
The Moore-Penrose pseudo inverse of $\Vkt D_3^{-1}(\tfrac{1}{2})$ of Eq.\ (\ref{eq:d3_pseudo}) would imply that there 
exists a 1-dimensional set of strains $(\alpha,\alpha,\alpha,0,0,0)$ with $\alpha \in\RR$ yielding zero stresses, 
which cannot be the case\footnote{In fact the converse is true, that for $\nu=\tfrac{1}{2}$ there 
is no strain for uniform stresses.}. 
Plugging the entries of this strain vector into Eq.\ (\ref{eq:equi}) shows that it results from an 
equiform motion (Euclidean motion plus a scaling). But an equiform motion keeping the volume fixed has to be an Euclidean motion ($\Leftrightarrow$ $\alpha=0$), 
which does not imply any stress. Hence, the condition $\Vol_{abcd}=\Vol'_{abcd}$ resolves also the problem arising from the singularity of the 
stress/strain matrix given in Eq.\ (\ref{eq:d3_pseudo}).
Therefore one is only allowed to compute the pseudodistance of Eq.\ (\ref{pseudo}) for $\nu=1/2$ if 
$\Vol_{abcd}=\Vol'_{abcd}$ holds for all $abcd\in\mathcal{C}_j$ for $j=p+1,\ldots, q$.

Clearly, theoretically one can also use another Poisson ratio $0\leq\nu<1/2$ but in this paper we focus on the 
more sophisticated problem assuming constant volume under the deformation. A consequence of the constant volume deformation is that Eq.\ (\ref{eq:u}) 
cannot only be seen as the energy per volume, which has to be applied to the given framework to reach the deformed configuration but as the energy 
per volume which is stored in the deformed framework.

\begin{rmk}\label{rem:nu}
If one does not want to use the Poisson ratio $\nu=1/2$, then one is confronted to make a choice within the interval $[0;1/2[$. 
One can circumvent the arbitrariness in the choice by determining $\nu$ within a constrained optimization ($0\leq\nu<1/2$) in such a way that 
the distance of Eq.\ (\ref{pseudo}) is minimal. The disadvantage of this approach is that the triangular inequality 
of Eq.\ (\ref{axioms}) cannot longer be guaranteed thus the pseudometric degenerates to a so-called premetric. \hfill $\diamond$ 
\end{rmk}

Finally, it should be noted that the results of the next sections are general ones; i.e. the assumptions done in Section 3.2.1 and 3.2.2 are not necessary 
for their validity.

%%%%%%%%%%%%%%%%%%%%%%%%%%%%%%%%%%%%%%%%%%%%%%%%%%%%%%%%%%%%%%%%%%%%%%%%%%%%%%%%%%%%%%%%%%%%%%%%%%%%%%%%%%%%%%%
%%%%%%%%%%%%%%%%%%%%%%%%%%%%%%%%%%%%%%%%%%%%%%%%%%%%%%%%%%%%%%%%%%%%%%%%%%%%%%%%%%%%%%%%%%%%%%%%%%%%%%%%%%%%%%%
%%%%%%%%%%%%%%%%%%%%%%%%%%%%%%%%%%%%%%%%%%%%%%%%%%%%%%%%%%%%%%%%%%%%%%%%%%%%%%%%%%%%%%%%%%%%%%%%%%%%%%%%%%%%%%%

\section{Local and global snappability}\label{sec:snap}

If we want to compute the distance between $\Vkt L$ and $\Vkt L'$
then the pseudometric $d(\Vkt L,\Vkt L')$ simplifies to $u(\Vkt L')/E$ as this function is positive-definite, which is clear 
from the underlying physical interpretation but one can also prove this mathematically by decomposing it into a 
sum of squares (see e.g.\ \cite{reznick}).

As we can replace $L'_{ij}$ in $u(\Vkt L')$ by $\|\Vkt v'_i-\Vkt v'_j\|$ the function $u$ can be computed in dependence of $\Vkt V'$; i.e.\ $u(\Vkt V')$.

\begin{thm}\label{thm:critic}
For $0\leq\nu<1/2$ 
the critical points of the total elastic strain energy density $u(\Vkt V')$ of a pin-jointed body-bar framework 
correspond to realizations $G_*(\Vkt V')$ of the equivalent bar-joint framework that are either undeformed or 
deformed with a non-zero self-stress. This also holds for $\nu=1/2$ under the side conditions of constant tetrahedral volumes.
\end{thm}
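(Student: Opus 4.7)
The plan is to compute $\nabla_{\Vkt V'}u$ via the chain rule and identify the vanishing of the gradient with the equilibrium condition \Eqref{equilibrium}. Since $u$ depends on $\Vkt V'$ only through the squared edge lengths $Q_{ij}'=\|\Vkt v_i'-\Vkt v_j'\|^{2}$, applying $\partial Q_{ij}'/\partial\Vkt v_i'=2(\Vkt v_i'-\Vkt v_j')$ and collecting contributions at each vertex gives
\begin{equation*}
\nabla_{\Vkt v_i'}u=\sum_{j\in N_i}\omega_{ij}(\Vkt v_i'-\Vkt v_j'),\qquad \omega_{ij}:=2\,\tfrac{\partial u}{\partial Q_{ij}'}.
\end{equation*}
Hence $\Vkt V'$ is a critical point of $u$ if and only if the vector $\omega=(\ldots,\omega_{ij},\ldots)^T$ is a self-stress of the realization $G_*(\Vkt V')$.

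Next I would verify that $\omega=\Vkt o$ at the undeformed reference. Each local energy $U_{ab}$, $U_{abc}$, $U_{abcd}$ is a non-negative quadratic in its own strain vector, positive-definite through $D_d^{-1}(\nu)$ for $0\leq\nu<1/2$, whose unique zero is the undeformed simplex; hence every $\partial U_T/\partial Q_{ij}'$ vanishes at $\Vkt L'=\Vkt L$ and so does $\omega$. Contrapositively, at any critical point with $\omega\neq\Vkt o$ the realization must be deformed and simultaneously admits a non-zero self-stress, which is the second branch of the claim. For the boundary case $\nu=\tfrac{1}{2}$ the kernel direction of $D_3^{-1}(\tfrac{1}{2})$ described at the end of Section~\ref{sec:hook} is precisely excised by the isochoric side conditions $\Vol_{abcd}=\Vol'_{abcd}$ of Section~\ref{sec:pratio}, so positive-definiteness is retained on admissible strains and the same reasoning applies.

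It remains to show that $\omega=\Vkt o$ at a critical point forces $\Vkt L'=\Vkt L$. For every external bar $\omega_{ab}=0$ yields $Q_{ab}'=L_{ab}^{2}$ directly from \Eqref{bar_GL}. For each body $B$ the equations $\partial u/\partial Q_{ij}'=0$ on its internal edges form a linear system in the squared edge lengths of $B$ (using the matrix form $u=\Vkt Q'^T\Vkt M\Vkt Q'$ from Remark~\ref{rmk:lemma1}) which admits the undeformed $\Vkt Q_B$ as a solution at the global minimum $u_B=0$; positive-definiteness of the local constitutive matrices makes each term strictly convex in the strain variables, so any alternative solution also yields $u_B=0$, forcing every simplex strain $\Vkt e_T$ with $T\in\mathcal{C}_i$ or $\mathcal{C}_j$ to vanish, and global rigidity of the inner graph (Section~\ref{sec:outline}) then propagates this to $\Vkt L_B'=\Vkt L_B$. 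The hardest step is precisely this body case when $n_i>3$ (panels) or $n_i>4$ (polyhedra), where edges are shared among several simplices of $\mathcal{C}_i$ or $\mathcal{C}_j$ and the partial-derivative conditions couple across the body; the delicate point is to rule out spurious joint critical points of the aggregated quadratic $\sum_T U_T$, which requires combining strict convexity in the strain, the non-degenerate Jacobian from squared edge lengths to strains at non-degenerate simplices, and global rigidity of the inner graph.
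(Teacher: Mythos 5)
Your argument is correct and reaches the theorem's conclusion, but by a genuinely different route for the key step. The paper treats each element type separately, computes $\nabla_{\hspace{-0.5mm}a}\,U_{ab}$, $\nabla_{\hspace{-0.5mm}a}\,U_{abc}$, $\nabla_{\hspace{-0.5mm}a}\,U_{abcd}$ in coordinates, and verifies by symbolic computation that the overdetermined systems \Eqref{test1} and \Eqref{test2} admit consistent stress coefficients, including a discussion of the degenerate (collinear/coplanar) cases where that decomposition is not unique. You obtain the same identification in one line: by Lemma \ref{basic} (and Remark \ref{rmk:lemma1}) $u$ is a polynomial in the squared edge lengths $Q_{ij}'$, so the chain rule gives $\nabla_{\hspace{-0.5mm}i}\,u=\sum_j 2\,\tfrac{\partial u}{\partial Q_{ij}'}(\Vkt v_i'-\Vkt v_j')$ with a coefficient $\omega_{ij}=2\,\partial u/\partial Q_{ij}'$ that is automatically shared between the equations at $V_i$ and $V_j$; the critical-point equations are then literally the equilibrium conditions \Eqref{equilibrium}, with no compatibility check and no case distinction for degenerate simplices. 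What your route buys is brevity and uniformity; what the paper's computation buys is an explicit closed form for the $\omega$'s per element. You also go further than the paper in addressing the dichotomy, i.e.\ that $\omega=\Vkt o$ forces $\Vkt L'=\Vkt L$ (the paper relegates this to Remark \ref{rmk:critic} and to an assertion inside the proof of Theorem \ref{thm121}). The step you flag as hardest in fact closes cleanly and does \emph{not} need global rigidity of the inner graph: for $0\leq\nu<1/2$ each $U_T$ is a positive-definite quadratic form in the increments $Q_{ij}'-Q_{ij}$ of the edges of its own non-degenerate simplex (the strain is an invertible linear function of these increments, $D_d^{-1}(\nu)$ is positive definite, and edge lengths determine a simplex up to congruence), so the numerator of $u$ is a positive-semidefinite quadratic form in the full increment vector whose kernel is trivial as soon as every edge is a bar or lies in some non-degenerate simplex; hence $\partial u/\partial Q_{ij}'=0$ for all edges only at $\Vkt Q'=\Vkt Q$, even for edges shared among several simplices or bodies. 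The one place where your write-up is thinner than the paper's is $\nu=1/2$: there the critical points are those of the Lagrangian \Eqref{extendlagrange}, so the chain-rule identity must be applied to $F$ rather than to $u$, using that the Cayley--Menger expression for ${\Vol'_{abcd}}^2$ is again a polynomial in the $Q_{ij}'$; your remark about the isochoric constraints excising the kernel of $D_3^{-1}(\tfrac{1}{2})$ addresses definiteness but not the equilibrium form of the constrained gradient, which is the point the paper actually checks.
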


\noindent
Proof: 
The system of equations characterizing the critical points of $u(\Vkt V')$ reads as follows:
\begin{equation}\label{defpartial}
\nabla_{\hspace{-0.5mm}i}\, u(\Vkt V')=\Vkt o  \quad \text{with} 
\quad
\begin{cases}
\nabla_{\hspace{-0.5mm}i}\, u(\Vkt V')=\left(\tfrac{\partial u}{\partial x'_{i}}, \tfrac{\partial u}{\partial y'_{i}}\right)  &\text{for $d=2$} \\
\nabla_{\hspace{-0.5mm}i}\, u(\Vkt V')=\left(\tfrac{\partial u}{\partial x'_{i}}, \tfrac{\partial u}{\partial y'_{i}}, \tfrac{\partial u}{\partial z'_{i}}\right)  &\text{for $d=3$} \\
\end{cases}
\end{equation}
with $i=1,\ldots, w$, where $(x'_i,y'_i)^T$ and  $(x'_i,y'_i,z'_i)^T$ is the coordinate vector of $\Vkt v'_i$ for the planar and spatial case, respectively.
Due to the sum rule for derivatives we only have to investigate $\nabla_{\hspace{-0.5mm}i}$ of the following three functions: 
$U_{ab}(\Vkt v_a',\Vkt v_b')$ of Eq.\ (\ref{bar_GL}), 
$U_{abc}(\Vkt v_a',\Vkt v_b',\Vkt v_c')$ given in Eq.\ (\ref{esetp2}) and $U_{abcd}(\Vkt v_a',\ldots ,\Vkt v_d')$ of Eq.\ (\ref{esetp3}), respectively.  
\begin{enumerate}[1.]
\item Due to  
$\nabla_{\hspace{-0.5mm}a}\, U_{ab}(\Vkt v_a',\Vkt v_b') = \tfrac{\Area_{ab}(L_{ab}'^2-L_{ab}^2)}{2L_{ab}^3}(\Vkt v'_a-\Vkt v'_b)$ Theorem \ref{thm:critic} is valid for frameworks, which only consist of bars, 
as $\nabla_{\hspace{-0.5mm}a}\, u(\Vkt V')$ can be written in the form of
Eq.\ (\ref{equilibrium}) with ${\omega}_{ab}=\tfrac{\Area_{ab}(L_{ab}'^2-L_{ab}^2)}{2L_{ab}^3}$. 
\item
If polygonal panels are involved we consider a representative triangular panel with vertices $V_a,V_b,V_c$ and compute  
$\nabla_{\hspace{-0.5mm}a}\, U_{abc}(\Vkt v_a',\Vkt v_b',\Vkt v_c')$, $\nabla_{\hspace{-0.5mm}b}\, U_{abc}(\Vkt v_a',\Vkt v_b',\Vkt v_c')$ and $\nabla_{\hspace{-0.5mm}c}\, U_{abc}(\Vkt v_a',\Vkt v_b',\Vkt v_c')$. 
Straight forward symbolic computations (e.g.\ using Maple) show that the following 
 system of equations
\begin{equation}\label{test1}
\begin{split}
{\omega}_{ab}(\Vkt v'_a-\Vkt v'_b) + {\omega}_{ac}(\Vkt v'_a-\Vkt v'_c) - \nabla_{\hspace{-0.5mm}a}\, U_{abc}(\Vkt v_a',\Vkt v_b',\Vkt v_c') &=\Vkt o \\
{\omega}_{ab}(\Vkt v'_b-\Vkt v'_a) + {\omega}_{bc}(\Vkt v'_b-\Vkt v'_c) - \nabla_{\hspace{-0.5mm}b}\, U_{abc}(\Vkt v_a',\Vkt v_b',\Vkt v_c') &=\Vkt o \\
{\omega}_{ac}(\Vkt v'_c-\Vkt v'_a) + {\omega}_{bc}(\Vkt v'_c-\Vkt v'_b) - \nabla_{\hspace{-0.5mm}c}\, U_{abc}(\Vkt v_a',\Vkt v_b',\Vkt v_c') &=\Vkt o 
\end{split}
\end{equation}
which is  overdetermined\footnote{As a triangle is planar, we get in total 6 equations in three unknowns from 
Eq.\ (\ref{test1}).}, 
has a unique solution for ${\omega}_{ab}$, ${\omega}_{ac}$ and ${\omega}_{bc}$ if $V_a', V_b', V_c'$ generate a triangle. If these points are collinear we even
get a positive dimensional solution set. Hence, one can replace $\nabla_{\hspace{-0.5mm}a}\, U_{abc}(\Vkt v_a',\Vkt v_b',\Vkt v_c')$ by a linear combination
${\omega}_{ab}(\Vkt v'_a-\Vkt v'_b) + {\omega}_{ac}(\Vkt v'_a-\Vkt v'_c)$ where the coefficients ${\omega}_{ab}$ and 
${\omega}_{ac}$ are compatible with the other equations of (\ref{test1}). 
\item
If bodies are involved we consider  a representative tetrahedron with vertices $V_a,V_b,V_c,V_d$ and compute  
$\nabla_{\hspace{-0.5mm}a}\, U_{abcd}(\Vkt v_a',\ldots,\Vkt v_d')$, $\nabla_{\hspace{-0.5mm}b}\, U_{abcd}(\Vkt v_a',\ldots,\Vkt v_d')$, $\nabla_{\hspace{-0.5mm}c}\, U_{abcd}(\Vkt v_a',\ldots,\Vkt v_d')$ 
and $\nabla_{\hspace{-0.5mm}d}\, U_{abcd}(\Vkt v_a',\ldots,\Vkt v_d')$.  
Now we are faced with the system of equations
\begin{equation}\label{test2}
\begin{split}
{\omega}_{ab}(\Vkt v'_a-\Vkt v'_b) + {\omega}_{ac}(\Vkt v'_a-\Vkt v'_c)  + {\omega}_{ad}(\Vkt v'_a-\Vkt v'_d) - \nabla_{\hspace{-0.5mm}a}\, U_{abcd}(\Vkt v_a',\ldots,\Vkt v_d') &=\Vkt o \\
{\omega}_{ab}(\Vkt v'_b-\Vkt v'_a) + {\omega}_{bc}(\Vkt v'_b-\Vkt v'_c)  + {\omega}_{bd}(\Vkt v'_b-\Vkt v'_d) - \nabla_{\hspace{-0.5mm}b}\, U_{abcd}(\Vkt v_a',\ldots,\Vkt v_d') &=\Vkt o \\
{\omega}_{ac}(\Vkt v'_c-\Vkt v'_a) + {\omega}_{bc}(\Vkt v'_c-\Vkt v'_b)  + {\omega}_{cd}(\Vkt v'_c-\Vkt v'_d) - \nabla_{\hspace{-0.5mm}c}\, U_{abcd}(\Vkt v_a',\ldots,\Vkt v_d') &=\Vkt o \\
{\omega}_{ad}(\Vkt v'_d-\Vkt v'_a) + {\omega}_{bd}(\Vkt v'_d-\Vkt v'_b)  + {\omega}_{cd}(\Vkt v'_d-\Vkt v'_c) - \nabla_{\hspace{-0.5mm}d}\, U_{abcd}(\Vkt v_a',\ldots,\Vkt v_d') &=\Vkt o 
\end{split}
\end{equation}
which is again overdetermined ($12$ equations in six unknowns ${\omega}_{ab},{\omega}_{ac},{\omega}_{ad},{\omega}_{bc},{\omega}_{bd}$ and 
${\omega}_{cd}$). Again direct computations show that there exists a unique solution if $V_a', V_b', V_c', V_d'$ span a 3-space; otherwise even a positive dimensional solution set exists.  
Thus one can substitute $\nabla_{\hspace{-0.5mm}a}\, U_{abcd}(\Vkt v_a',\ldots,\Vkt v_d')$ by a linear combination
${\omega}_{ab}(\Vkt v'_a-\Vkt v'_b) + {\omega}_{ac}(\Vkt v'_a-\Vkt v'_c)  + {\omega}_{ad}(\Vkt v'_a-\Vkt v'_d)$ where the coefficients ${\omega}_{ab}$, ${\omega}_{ac}$ and 
${\omega}_{ad}$ are compatible with the other equations of (\ref{test2}). 
\end{enumerate}
Summing up the results of the three items shows that $\nabla_{\hspace{-0.5mm}a}\, u(\Vkt V')$ can be written in the form of Eq.\ (\ref{equilibrium}) 
which proves the theorem for $\nu<1/2$.

For $\nu=1/2$ we have to compute the critical points of the Lagrange function
\begin{equation}\label{extendlagrange}
F(\Vkt V',\lambda)= u(\Vkt V') 
-{\lambda}_1f_1-\ldots -{\lambda}_{\phi}f_{\phi} \quad \text{with}\quad
\lambda:=({\lambda}_1,\ldots,{\lambda}_{\phi}), 
\end{equation}
where $f_1,\ldots,f_{\phi}$ are the isochoric constraints of the form ${\Vol'_{abcd}}^2-\Vol_{abcd}^2=0$ 
for $abcd\in\mathcal{C}_j$ for all $j=p+1,\ldots, q$; i.e\ $\phi=\sum_{j=p+1}^q\#\mathcal{C}_j$.
Due to the Cayley-Menger determinant we get 
the squared volume ${V'_{abcd}}^2$ of the tetrahedron spanned by $V'_a,\ldots ,V'_d$, 
as a polynomial in the squared distances of these vertices. Now one can replace in Eq.\ (\ref{test2}) the function $U_{abcd}(\Vkt v'_a,\ldots,\Vkt v'_d)$ 
by ${V'_{abcd}}^2(\Vkt v'_a,\ldots,\Vkt v'_d)$ and do the analogous computation ending up with the same conclusion. 
Therefore  $\nabla_{\hspace{-0.5mm}a}\,F(\Vkt V',\lambda)$ is again of the form of Eq.\ (\ref{equilibrium}) which proves the theorem for $\nu=1/2$. 
\hfill $\BewEnde$

\begin{rmk}\label{rmk:critic}
One can also ask for the critical points of the elastic strain energy density $u(\Vkt L')$; i.e.\ we have to consider the partial derivatives with respect to the 
edge lengths $L_{ij}'$. It can easily be seen that there is only one valid critical point namely $\Vkt L'=\Vkt L$ as all other solutions of the resulting system imply  
at least one edge of zero length. These invalid solutions can be avoided by considering $u(\Vkt Q')$ of Remark \ref{rmk:lemma1} and its partial derivatives with respect 
to $Q_{ij}'$ ending up in a linear system. \hfill $\diamond$
\end{rmk}

For the formulation of the next theorem we also need the notation of stability. 
A realization $G(\Vkt V')$ 
is called stable if it corresponds to a local minimum of the total elastic strain energy of the framework, which 
is also a minimum of the strain energy density $u(\Vkt V')$.

\begin{thm}\label{thm121}
If a pin-jointed body-bar framework snaps out of a stable realization $G(\Vkt V)$ 
by applying the minimum strain energy needed to it, then 
the corresponding deformation passes a realization $G(\Vkt V')$ at the maximum state of deformation, where the 
equivalent bar-joint framework $G_*(\Vkt V')$ has a non-zero self-stress. 
\end{thm}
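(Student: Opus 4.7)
The plan is to interpret the snapping process as a path $\gamma:[0,1]\to \RR^{wd}$ in the configuration space of the equivalent bar-joint framework $G_*(\mathcal{K}_*)$ starting at $\gamma(0)=\Vkt V$ and ending at a different stable realization $\gamma(1)=\Vkt V''$, and to characterize the optimal (minimum-energy) such path. Along any candidate path the quantity $u(\gamma(t))$ attains some maximum value $E_\gamma:=\max_{t\in[0,1]} u(\gamma(t))$; this is the strain energy density that actually has to be supplied to traverse the path. The minimum strain energy needed to snap out of $\Vkt V$ is then $E^*:=\inf_\gamma E_\gamma$, where the infimum is taken over all continuous paths connecting $\Vkt V$ to any other stable realization. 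A mountain-pass style argument then applies: on an optimal path $\gamma^*$ the maximum energy value is attained at some interior parameter $t^*\in(0,1)$ with $\gamma^*(t^*)=\Vkt V'$, and this $\Vkt V'$ is a critical point of $u(\Vkt V')$ (in the unconstrained case $\nu<1/2$) respectively of the Lagrange function $F$ from Eq.\ (\ref{extendlagrange}) (in the isochoric case $\nu=1/2$).

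The key step is the criticality argument: if $\nabla_{\hspace{-0.5mm}i}\, u(\Vkt V')\neq\Vkt o$ for some $i$, one could locally deform $\gamma^*$ near $t^*$ in the direction $-\nabla_{\hspace{-0.5mm}i}\, u(\Vkt V')$ to obtain a modified path whose maximum strain energy density is strictly smaller, contradicting the optimality of $\gamma^*$. The analogous argument for $\nu=1/2$ requires the perturbation to respect the isochoric side conditions $f_1=\ldots=f_\phi=0$, which gives exactly the Lagrange criticality $\nabla F(\Vkt V',\lambda)=\Vkt o$ for a suitable multiplier vector $\lambda$.

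Once criticality is established, Theorem \ref{thm:critic} immediately classifies $G_*(\Vkt V')$ as either an undeformed realization of the equivalent bar-joint framework or a deformed realization with non-zero self-stress. The undeformed case can be excluded: by assumption $\Vkt V'$ is reached at the maximum state of deformation along a non-trivial snapping path emanating from the stable $\Vkt V$, so $u(\Vkt V')>u(\Vkt V)=0$ and hence $\Vkt V'$ cannot correspond to an isometric copy of the given undeformed intrinsic metric $\Vkt L$. Consequently $G_*(\Vkt V')$ carries a non-zero self-stress in the sense of Eq.\ (\ref{equilibrium}), which is the claim.

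I expect the main obstacle to be making the mountain-pass/optimal-path reasoning fully rigorous: one must argue that an infimising sequence of paths actually yields an attained minimum whose max-point is an interior critical point (rather than an accumulation at $t=0,1$ or at infinity in configuration space). In practice this can be handled by restricting to a bounded neighbourhood of $\Vkt V$ in the connected component of configuration space containing $\Vkt V$, using coercivity of $u$ on bounded sets and the continuity of the deformation path, and by factoring out the $d$-dimensional rigid-motion group so that the relevant quotient space is locally compact. The remainder of the proof reduces to a routine application of Theorem \ref{thm:critic} combined with the exclusion of the trivial undeformed critical point.
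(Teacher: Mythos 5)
Your proposal is correct and follows essentially the same route as the paper: the paper likewise views $u$ as a graph function over $\RR^{wd}$, asserts that escaping the valley of the local minimum with minimum energy forces the passage of a saddle point, and then invokes Theorem \ref{thm:critic} together with $u(\Vkt V')>0$ to exclude the undeformed alternative. Your mountain-pass formalization is in fact more explicit than the paper's one-line assertion that a saddle must be crossed; the only small slip is writing $u(\Vkt V)=0$, which need not hold when the stable realization is itself deformed, but the needed conclusion $u(\Vkt V')>0$ survives regardless.
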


\noindent
Proof: 
We think of $u$ as a graph function over the space $\RR^{wd}$ of knot configurations. % (cf.\ Fig.\ \ref{fig:graph}).  
In order to get out of the valley of the local minimum  $(\Vkt V, u(\Vkt V))$, which corresponds to the given stable realization $G(\Vkt V)$, 
with a minimum of energy needed, one has to pass a {\it saddle point} $(\Vkt V', u(\Vkt V'))$ of the graph, which corresponds to a realization $G(\Vkt V')$. 
As local extrema as well as saddle points of the graph function are given by the critical points of $u$ we can use 
Theorem \ref{thm:critic}, which implies that these points correspond with self-stressed realizations of the equivalent bar-joint framework. 
As  $u(\Vkt V')>0$ holds the realization $G_*(\Vkt V')$ is deformed which has to imply a non-zero self-stress; i.e.\ 
the stress-vector $\omega$ differs from the $b$-dimensional zero vector.
\hfill $\BewEnde$

%\begin{figure}[t] 
%\begin{center} 
%\begin{overpic}
%    [width=90mm]{snap_sketch} 
%\begin{scriptsize}
%\put(7,18.5){$\RR^{wd}$}
%\put(33,13){$(\Vkt V, u(\Vkt V))$}
%\put(56,15){$(\Vkt V'', u(\Vkt V''))$}
%\put(34.5,33.5){$(\Vkt V', u(\Vkt V'))$}
%\end{scriptsize} 		
%  \end{overpic} 
%\end{center} 
%\caption{Sketch of $u$ as a graph function over the space $\RR^{wd}$. Red points indicate saddle points, black ones mark local maxima and 
%the yellow/green points flag local minima, which correspond to deformed/undeformed stable realizations. Moreover, the path 
%$(\Vkt V_t, u(\Vkt V_t))$ is displayed in cyan and the path $(\overline{\Vkt V}_t, u(\overline{\Vkt V}_t))$ in magenta, respectively.
%}\label{fig:graph}	
%\end{figure}

\begin{cor}\label{rep:shaky}
If the equivalent bar-joint framework of Theorem \ref{thm121} is minimally rigid, then ``non-zero self-stress'' can be replaced by ``shakiness''. 
\end{cor}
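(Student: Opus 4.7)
The plan is to prove that, for a minimally rigid bar-joint framework, the existence of a non-zero self-stress in a realization is synonymous with shakiness of that realization; the corollary then follows by substituting ``shakiness'' for ``non-zero self-stress'' in Theorem~\ref{thm121}.

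First I would recast the equilibrium condition~(\ref{equilibrium}) in matrix form as $\Vkt R_e\,\omega = \Vkt o$, where $\Vkt R_e$ denotes the submatrix of $\Vkt R_{G_*(\Vkt V')}$ formed by the columns corresponding to the bar-edges $e_{ij}\in\mathcal{G}$. Up to the uniform factor $2$ produced by differentiating the squared-distance constraints, each such column is precisely the vector that contributes $(\Vkt v'_i-\Vkt v'_j)$ to the equilibrium at $V_i$ and $(\Vkt v'_j-\Vkt v'_i)$ at $V_j$, so self-stresses are exactly the non-zero elements of $\ker \Vkt R_e$.

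Second, I would apply rank-nullity to $\Vkt R_e$. The left null-space of $\Vkt R_e$ always contains the $\binom{d+1}{2}$-dimensional space of trivial infinitesimal isometries; let $f$ denote the dimension of the non-trivial infinitesimal flexes. Then $rk\,\Vkt R_e = wd - \binom{d+1}{2} - f$. For a minimally rigid bar-joint framework the number of bars satisfies $b = wd - \binom{d+1}{2}$, so that
\begin{equation*}
\dim \ker \Vkt R_e \;=\; b - rk\,\Vkt R_e \;=\; f.
\end{equation*}
This is the Maxwell-Calladine identity in the minimally rigid case: a non-zero self-stress exists if and only if a non-trivial infinitesimal flex exists, i.e.\ if and only if the realization is shaky. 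Combining this equivalence with Theorem~\ref{thm121} immediately yields the corollary.

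No substantial obstacle is expected. The only point requiring care is the bookkeeping that matches the gradient form of the equilibrium equations produced in the proof of Theorem~\ref{thm121} with the column-normal form of $\Vkt R_{G_*(\Vkt V')}$ introduced in Section~\ref{sec:outline}, and checking that the $\binom{d+1}{2}$ linear constraints which make $\Vkt R_{G_*(\Vkt V')}$ square correspond exactly to the trivial-flex subspace quotiented out above. These verifications are routine.
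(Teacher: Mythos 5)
Your proof is correct and follows essentially the same route as the paper: both arguments come down to the observation that for an isostatic framework the rigidity matrix is (after appending the isometry-eliminating constraints) square, so a non-zero self-stress forces a rank defect and hence a non-trivial infinitesimal flex. Your version merely makes the paper's one-line rank-defect argument quantitative via the Maxwell--Calladine count $\dim\ker\Vkt R_e = f$, which additionally yields the converse implication, but the underlying linear algebra is identical.
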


\noindent
Proof: 
If the equivalent bar-joint framework is minimally rigid then the existence of a non-zero self-stress implies a rank defect of the square rigidity matrix (cf.\ end of Section \ref{sec:outline}), 
which results in an infinitesimal flexibility.   
\bigskip  \hfill $\BewEnde$

But also without the assumption of minimal rigidity used in Corollary \ref{rep:shaky}, one can give the  
following connection between shakiness and snapping.

\begin{thm}\label{thm122}
One can replace  ``non-zero self-stress'' by ``shakiness'' in Theorem \ref{thm121} 
if there exists a deformation such that the path from $G_*(\Vkt V)$ to $G_*(\Vkt V')$ is identical to 
the path  of $G_*(\Vkt V'')$ to $G_*(\Vkt V')$ in the space of intrinsic framework metrics $\RR^b$, where 
$G_*(\Vkt V)$ and $G_*(\Vkt V'')$ are not related by a direct isometry. 
\end{thm}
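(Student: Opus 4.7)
The plan is to strengthen the conclusion of Theorem \ref{thm121} (non-zero self-stress at $G_*(\Vkt V')$) to shakiness by exploiting the limiting-case characterization discussed at the end of Section \ref{sec:outline}: shakiness of order one is the limiting case in which two realizations of a framework coincide.

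First, I would parameterize the two deformations continuously by $t\in[0,1]$, obtaining configuration paths $\Vkt V_1(t)$ and $\Vkt V_2(t)$ with $\Vkt V_1(0)=\Vkt V$, $\Vkt V_2(0)=\Vkt V''$ and $\Vkt V_1(1)=\Vkt V_2(1)=\Vkt V'$. The hypothesis that the two paths are identical in $\RR^b$ translates to $\Vkt L(\Vkt V_1(t))=\Vkt L(\Vkt V_2(t))=:\Vkt L(t)$ for every $t$. Hence, at each instant $t$, the configurations $\Vkt V_1(t)$ and $\Vkt V_2(t)$ are two simultaneous realizations of the same intrinsic metric $\Vkt L(t)$.

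Next, I would argue that these two realizations stay non-congruent for $t$ in a neighborhood of $0$. By assumption they are non-congruent at $t=0$, and after fixing the usual normalization that eliminates the isometries of $\RR^d$ (cf.\ Section \ref{sec:outline}), the difference $\Vkt V_1(t)-\Vkt V_2(t)$ is a continuous function of $t$ that is non-zero at $t=0$, hence non-zero on an open interval around $t=0$. At $t=1$, however, both paths terminate at the common configuration $\Vkt V'$. Therefore two distinct continuous branches of realizations of the intrinsic metric $\Vkt L(t)$ merge at the endpoint $\Vkt V'$.

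Finally, I would invoke the algebraic reasoning recalled at the end of Section \ref{sec:outline}: when two realizations of a framework coincide at $\Vkt V'$, the intersection multiplicity of the defining hypersurfaces $c_1=0,\ldots,c_n=0$ at $\Vkt V'$ is at least two, so the normal vectors of the corresponding tangent hyperplanes (the columns of $\Vkt R_{G_*(\Vkt V')}$) fail to span an $m$-dimensional space, i.e.\ $\mathrm{rk}(\Vkt R_{G_*(\Vkt V')})<m$. This is precisely the definition of shakiness of the realization $G_*(\Vkt V')$, and it strengthens the non-zero self-stress guaranteed by Theorem \ref{thm121}.

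The delicate point that I expect will require the most care is the step above in which non-congruence of $\Vkt V_1(t)$ and $\Vkt V_2(t)$ is promoted from the endpoint $t=0$ to an entire subinterval and then combined with their coincidence at $t=1$ to yield a genuine branch-merging (rather than, say, one path reparameterizing the other). This must be handled by working modulo the rigid motion group, i.e.\ using a fixed Cartesian frame as in Section \ref{sec:outline}, so that two distinct branches really produce two distinct points of the realization variety for $t<1$ that collapse to a single point at $t=1$; this is exactly the scenario that the intersection-multiplicity argument of Section \ref{sec:outline} identifies with shakiness.
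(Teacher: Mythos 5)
Your argument is correct and follows essentially the same route as the paper's proof: both lift the common path $\Vkt L_t$ in $\RR^b$ to two distinct one-parametric deformations in configuration space that terminate in the same realization $G_*(\Vkt V')$, and then invoke the coinciding-realizations (intersection-multiplicity) characterization of shakiness from Section \ref{sec:outline}. Your version is somewhat more explicit about normalizing away isometries and about the continuity of the branch separation, which the paper leaves implicit.
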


\noindent
Proof: 
Let us assume that there exists a path $(\Vkt V_t, u(\Vkt V_t))$
on the graph with parameter $t\in[0,1]$ such that for $t=0$ we are at  $(\Vkt V, u(\Vkt V))$ and for $t=1$ at the 
saddle $(\Vkt V', u(\Vkt V'))$. % (cf.\ Fig.\ \ref{fig:graph}). 
This deformation implies a path $\Vkt L_t$ in the space $\RR^b$ of intrinsic metrics with
$\Vkt L_t\big|_{t = 1}=\Vkt L'$ and $\Vkt L_t\big|_{t = 0}=\Vkt L$. 

But vice versa the path $\Vkt L_t$ corresponds to several
1-parametric deformations in $\RR^{d}$, where one of these deformations $(\overline{\Vkt V}_t, u(\overline{\Vkt V}_t))$ has to lead 
towards $(\Vkt V'', u(\Vkt V''))$ according to our assumption.
Moreover, tracking the realizations of the path $\Vkt L_t$ with $t\in[0,1]$ shows that in $G_*(\Vkt V')$ two realizations coincide, which implies that $G_*(\Vkt V')$ is shaky.  
\hfill $\BewEnde$

\begin{rmk}
Note that $G(\Vkt V'')$ can also be a complex realization. In this case the deformation $(\overline{\Vkt V}_t, u(\overline{\Vkt V}_t))$ 
towards $(\Vkt V'', u(\Vkt V''))$ get stuck on the border of reality. Therefore the snap ends up in a realization of the equivalent bar-joint framework, which is shaky  
as  a real solution of an algebraic set of equations can only change over into a complex one through a double root.  
\hfill $\diamond$
\end{rmk}

If the two realizations $G_*(\Vkt V)$ and $G_*(\Vkt V'')$ of Theorem \ref{thm122} are thought infinitesimal close to $G_*(\Vkt V')$ then we get 
the following characterization of shakiness:

\begin{cor}\label{cor:dif}
 $G_*(\Vkt V')$ is shaky, if three exist two instantaneous snapping deformations ($\neq$ 
infinitesimal isometric deformations) out of $G_*(\Vkt V')$ represented by two non-zero vectors in  $\RR^{wd}$ pointing into distinct directions,  
whose corresponding two vectors of instantaneous changes of the intrinsic metric in $\RR^b$ are identical. 
\end{cor}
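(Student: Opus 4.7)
The plan is to read the statement as the infinitesimal (linearised) counterpart of Theorem~\ref{thm122} and to derive shakiness from a kernel-dimension argument for the rigidity matrix $\Vkt R_{G_*(\Vkt V')}$. First I would set up the dictionary. Recall from Section~\ref{sec:outline} that, up to the linear constraints fixing isometries, the realization equations $c_1,\ldots ,c_n$ are the edge-length equations $\|\Vkt v'_i-\Vkt v'_j\|^2=L_{ij}^2$; their Jacobian with respect to the knot configuration is, by definition, the transpose of the rigidity matrix, so the instantaneous change of intrinsic metric generated by a velocity $\dot{\Vkt V}\in \RR^{wd}$ is $\Vkt R^T\dot{\Vkt V}$ (on the green-edge coordinates, up to the positive rescaling $2L_{ij}$). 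The trivial elements of $\ker \Vkt R^T$ are exactly the infinitesimal isometric deformations, and, as recalled at the end of Section~\ref{sec:outline}, shakiness of $G_*(\Vkt V')$ is equivalent to $\mathrm{rk}(\Vkt R)<m$, i.e.\ to $\ker \Vkt R^T$ strictly containing this trivial subspace.

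Next I would turn the hypothesis into a kernel statement. Let $\dot{\Vkt V}_1,\dot{\Vkt V}_2\in \RR^{wd}$ be the two snapping velocity vectors, neither of which is an infinitesimal isometry. Equality of the corresponding instantaneous changes of the intrinsic metric reads
\begin{equation*}
\Vkt R^T\dot{\Vkt V}_1 \;=\; \Vkt R^T\dot{\Vkt V}_2, \qquad \text{hence}\qquad \Vkt R^T(\dot{\Vkt V}_1-\dot{\Vkt V}_2)=\Vkt o,
\end{equation*}
so the difference $\dot{\Vkt V}_1-\dot{\Vkt V}_2$ lies in $\ker \Vkt R^T$ and is non-zero because the two velocities point in distinct directions.

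The decisive step — and, in my view, the main obstacle — is to argue that $\dot{\Vkt V}_1-\dot{\Vkt V}_2$ does not lie in the trivial part of $\ker \Vkt R^T$, i.e.\ is not itself an infinitesimal isometry of $G_*(\Vkt V')$. This is precisely the infinitesimal analogue of the assumption in Theorem~\ref{thm122} that the two endpoint configurations $\Vkt V$ and $\Vkt V''$ are not related by a direct isometry. I would read the qualifier ``distinct directions'' in this sense (distinct modulo infinitesimal isometries), since otherwise two snapping tangents differing only by a rigid motion ought to be regarded as the same snap; equivalently, one can invoke Theorem~\ref{thm122} directly in the limit $\Vkt V,\Vkt V''\to \Vkt V'$ along the tangents $\dot{\Vkt V}_1,\dot{\Vkt V}_2$, so that the non-isometric, intrinsically identical pair of paths shrinks to a non-isometric, intrinsically identical pair of tangent vectors at $\Vkt V'$. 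Either way $\ker \Vkt R^T$ gains a non-trivial element beyond the infinitesimal isometries, and the characterization above forces $G_*(\Vkt V')$ to be shaky.
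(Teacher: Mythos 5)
The paper gives no explicit proof of this corollary: it is simply presented as the infinitesimal limit of Theorem~\ref{thm122}, obtained by letting the two realizations $G_*(\Vkt V)$ and $G_*(\Vkt V'')$ of that theorem tend to $G_*(\Vkt V')$. Your proposal covers that limiting argument but also supplies a genuinely different, self-contained proof: the observation that the instantaneous change of the intrinsic metric is the linear image $\Vkt R^T\dot{\Vkt V}$ of the velocity under (the edge part of) the transposed rigidity matrix, so that two velocities with identical intrinsic-metric rates have a non-zero difference in $\ker \Vkt R^T$, and shakiness follows once that difference is not an infinitesimal isometry. This is correct and arguably sharper than the paper's treatment, because it isolates exactly where the hypothesis is doing work: as literally stated, ``two non-zero vectors pointing into distinct directions'' does not exclude the case $\dot{\Vkt V}_2=\dot{\Vkt V}_1+\text{(infinitesimal isometry)}$, for which the conclusion would fail; your reading of ``distinct'' as ``distinct modulo infinitesimal isometries'' is the right repair and mirrors the ``not related by a direct isometry'' clause of Theorem~\ref{thm122}. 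The one bookkeeping caveat is that the paper's matrix $\Vkt R_{G(\Vkt V)}$ includes columns for the isometry-normalizing linear constraints, so its kernel has no ``trivial part''; your argument implicitly uses the classical edge-only rigidity matrix on all of $\RR^{wd}$, whose kernel does contain the infinitesimal isometries. Either convention works, but you should say which one you mean, since the sentence ``the trivial elements of $\ker\Vkt R^T$ are exactly the infinitesimal isometric deformations'' is only true for the edge-only version.
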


\begin{rmk}
Note that within the set of pairs of vectors fulfilling Corollary \ref{cor:dif}, 
there exists at least one pair of oppositely directed vectors  in  $\RR^{wd}$, which both correspond to the zero-vector in $\RR^b$. 
\hfill $\diamond$
\end{rmk}

Based on Theorem \ref{thm121} we can evaluate the snappability of the pin-jointed body-bar framework $G(\mathcal{K})$ in the undeformed realization 
$G(\Vkt V)$ by the value $s(\Vkt V):=d(\Vkt L,\Vkt L')=u(\Vkt L')/E$, which we call {\it local snappability}. 
As in general a framework has several undeformed realizations $G(\Vkt V_1),\ldots,$ $G(\Vkt V_k)$ we can define a {\it global snappability} by 
$s(\Vkt L):=\min\left\{s(\Vkt V_1),\ldots, s(\Vkt V_k)\right\}$.

\subsection{Computation of the local snappability}\label{comp:snap}

We compute the critical points of the Lagrange function of Eq.\ (\ref{extendlagrange}) by using 
the homotopy continuation method (e.g.\ Bertini; cf.\ \cite{bates})
as other approaches (e.g.\ Gr\"obner base, resultant based elimination) are not promising due to the number of unknowns and degree of equations. 
By choosing a suitable reference frame we can reduce the number of unknowns by 6 for $d=3$ and  by 3 for $d=2$, respectively, which also eliminates isometries (cf.\ footnote 3). 

\begin{rmk}\label{rmk:comp}
The computation of the critical points, which depends heavily on the number of unknowns, can be a time consuming task in the first run of the homotopy. 
But if one changes the inner metric of the framework within the design process the critical points of the resulting new system of equations 
can be computed from the already known critical points of the initial system more efficiently by means of parameter homotopy \cite{bates}. \hfill $\diamond$
\end{rmk}

First of all we can restrict to the obtained real critical points as only these correspond to realizations. 
This resulting set $\mathcal{R}$ of realizations is split into a set $\mathcal{E}$, 
whose elements correspond to local extrema of $u(\Vkt V')$, and its absolute complement $\mathcal{S}=\mathcal{R}\setminus\mathcal{E}$ 
of so-called {\it saddle realizations}. 
This separation can be done by the {\it second partial derivative test} based on the  
{\it Hessian matrix} of the function $u(\Vkt V')$ or in the case of side conditions one has to use the 
bordered Hessian \cite{spring} of the Lagrangian $F(\Vkt V',\lambda)$. 
Let us denote the set of stable realizations by $\mathcal{M}\subset\mathcal{E}$, which correspond to local minima.

One way for computing the local snappability is to start at saddle points and apply gradient descent algorithms to find neighboring local minima. 
This was done in the following example of a full quad.

\begin{example}
We consider a full quad with vertices $A,B,C,D$. Its intrinsic metric is given by:
\begin{equation}
\overline{AB}=6,\quad
\overline{CD}=\sqrt{8},\quad
\overline{AC}=\overline{BD}=\sqrt{17},\quad 
\overline{AD}=\overline{BC}=\sqrt{5}.
\end{equation}
For the computation we coordinatize the vertices as follows:
\begin{equation}
A=(-x_1,0),\quad 
B=(x_1,0), \quad
C=(x_2,y_2), \quad
D=(x_3,y_3).
\end{equation}
In this way the bar from $A$ to $B$ is attached to the $x$-axis of the reference frame. 
Due to the fact that a bar cannot have zero length, we restrict ourselves to realizations of $\mathcal{M}$ and $\mathcal{S}$ where no points coincide. 
Moreover, we can assume without loss of generality that $x_1>0$ holds, as a continuous deformation between a 
realization with $x_1<0$ and a realization with $x_1>0$ has to pass $x_1=0$ ($\Leftrightarrow$ $A=B$). 
A in-depth analysis of the remaining critical points results in Fig.\ \ref{fig:fullquad}, which shows a directed graph relating
the local minima and saddle points, where the orientation points towards local minima. 
The stable realizations are illustrated in Fig.\ \ref{fig:fullquad1234} and the saddle realizations in 
Figs. \ref{fig:fullquad5678} and \ref{fig:fullquadrest}, respectively. 
Representative snaps (transitions) between stable realizations are illustrated in Fig.\ \ref{fig:quadsanp}. 

Even though this framework is globally rigid we get two undeformed realizations $G(\Vkt V_1)$ and  $G(\Vkt V_2)$, which are mirror-symmetric with respect to the 
$x$-axis (Fig.\ \ref{fig:fullquad1234}). Moreover, it is possible to snap from $G(\Vkt V_1)$ into $G(\Vkt V_2)$ (cf.\ Figs.\ \ref{fig:fullquad}	and \ref{fig:quadsanp}). 
Thus the property of being globally rigid cannot save the framework from the snapping phenomenon. 

The values for $u(\Vkt V_i)/E$ as well as $x_1,x_2,y_2,x_3,y_3$ of the stable/saddle realizations 
are given in Table \ref{table:quad}. From these values and the graph given in Fig.\  \ref{fig:fullquad}
one sees that the two local snappabilities are equal; thus $s(\Vkt L)=s(\Vkt V_{1})=s(\Vkt V_{2})=0.017411595327$ holds. 
\end{example}

\begin{figure}[t]
\begin{center} 
\begin{overpic}
    [width=90mm]{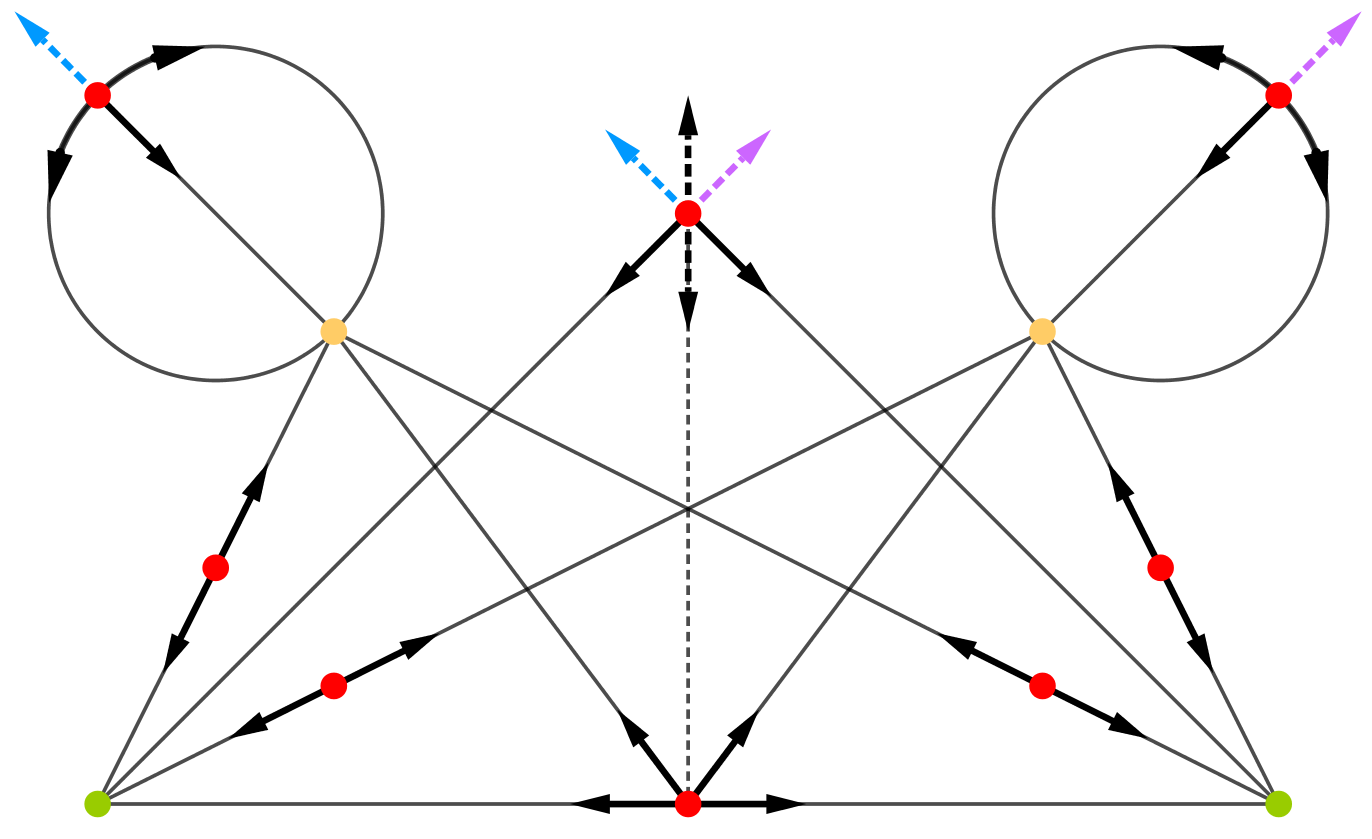} 
\begin{scriptsize}
\put(-3.5,53){$G(\Vkt V_{11})$}
\put(4,59){$_1$}
\put(9.3,49){$_1$}
\put(1.5,48){$_2$}
\put(11.5,59){$_2$}
\put(26,37){$G(\Vkt V_{3})$}
\put(95,53){$G(\Vkt V_{10})$}
\put(66,37){$G(\Vkt V_{4})$}
\put(7,19){$G(\Vkt V_{5})$}
\put(86,19){$G(\Vkt V_{6})$}
\put(25,9){$G(\Vkt V_{7})$}
\put(68,9){$G(\Vkt V_{8})$}
\put(2,-1){$G(\Vkt V_{1})$}
\put(92,-1){$G(\Vkt V_{2})$}
\put(48,-1){$G(\Vkt V_{9})$}
\put(52,45){$G(\Vkt V_{12})$}
\put(95,59){$_1$}
\put(90,49){$_1$}
\put(97.7,48){$_2$}
\put(87,59){$_2$}
\put(42.5,4){$_1$}
\put(57,4){$_1$}
\put(44,7){$_2$}
\put(55,7){$_2$}
\put(43.3,50){$_1$}
\put(43.3,41){$_3$}
\put(56,50){$_1$}
\put(56,41){$_3$}
\put(51,38){$_2$}
\put(51,53){$_2$}

\end{scriptsize} 		
  \end{overpic} 
\end{center} 
\hfill
\begin{scriptsize}
\end{scriptsize}
\caption{Directed graph relating the stable realizations (undeformed green and deformed yellow) and saddle relations (red), where the orientation points towards local minima. 
The number $i$ beside an arrow refers to the gradient flow in direction of the $i$-th smallest main curvature. 
If this number is missing, then there is only one negative main curvature direction. 
The two blue/violet dotted arrows indicate that these two flows end up in a realization obtained from $G(\Vkt V_{3/4})$ 
by reflecting it at the $y$-axis ($\Rightarrow$ $x_1<0$ which contradicts our assumption). 
The edge and arrow  between $G(\Vkt V_{12})$ and $G(\Vkt V_{9})$ are dotted as under this gradient flow all points move along the $x$-axis, 
which imply that some edge lengths become zero during the deformation. But taking small perturbations into account these zeros can be avoided. 
As this arrow appears between two saddle realizations, it means that $G(\Vkt V_{12})$ can also be deformed into  $G(\Vkt V_3)$  and $G(\Vkt V_4)$, respectively. 
The second black dotted arrow pointing upwards indicate that this flow ends up in the mirrored realization of $G(\Vkt V_9)$ ($\Rightarrow$ $x_1<0$ which contradicts our assumption). 
Finally it should be noted that the transition between the undeformed realizations $G(\Vkt V_1)$ and $G(\Vkt V_2)$ can not only be done by snaps over the 
shaky realizations $G(\Vkt V_9)$ and $G(\Vkt V_{12})$, respectively, 
but also by two subsequent snaps  from $G(\Vkt V_1)$ over $G(\Vkt V_{5/7})$ to $G(\Vkt V_{3/4})$ and further over $G(\Vkt V_{8/6})$ to $G(\Vkt V_2)$, without passing a shaky realization.
}\label{fig:fullquad}	
\end{figure}

\begin{figure}[h!]
\begin{center} 
\begin{overpic}
    [width=30mm]{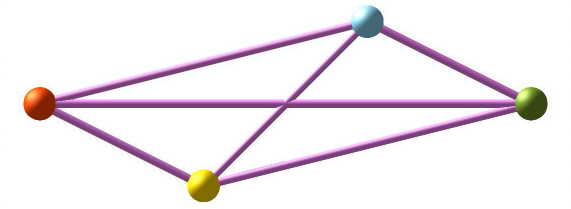} 
  \end{overpic} 
\quad
\begin{overpic}
    [width=30mm]{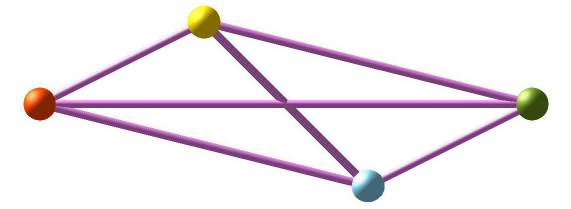} 
  \end{overpic} 
\quad
\begin{overpic}
    [width=30mm]{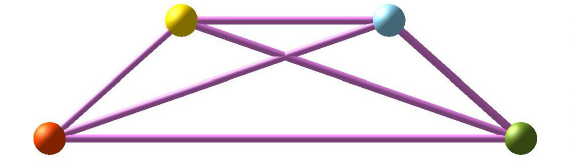} 
  \end{overpic} 
\quad
\begin{overpic}
    [width=30mm]{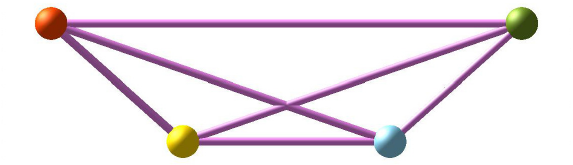} 
  \end{overpic} 	
\end{center} 
\hfill
\begin{scriptsize}
\end{scriptsize}
\caption{Local minima: The two undeformed realizations $G(\Vkt V_1)$ and  $G(\Vkt V_2)$ and the two deformed stable realizations $G(\Vkt V_3)$ and  $G(\Vkt V_4)$
(from left to right).  The point $A$ is colored red, $B$ green, $C$ blue and $D$ yellow, respectively. This color-coding is also used in the 
Figs.\ \ref{fig:fullquad5678}--\ref{fig:quadsanp}.
}\label{fig:fullquad1234}
\end{figure}

\begin{figure}[h!]
\begin{center} 
\begin{overpic}
    [width=30mm]{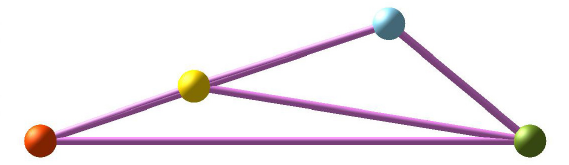} 
  \end{overpic} 
\quad
\begin{overpic}
    [width=30mm]{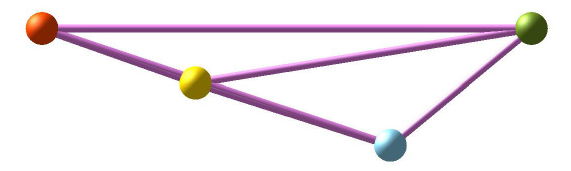} 
  \end{overpic} 
\quad
\begin{overpic}
    [width=30mm]{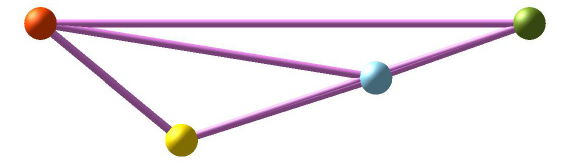} 
  \end{overpic} 
\quad
\begin{overpic}
    [width=30mm]{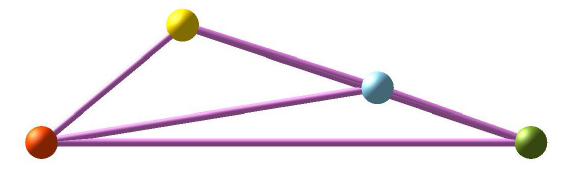} 
  \end{overpic} 	
\end{center} 
\hfill
\begin{scriptsize}
\end{scriptsize}
\caption{Saddle points: These four realizations $G(\Vkt V_5),\ldots ,G(\Vkt V_8)$ (from left to right) correspond to saddle points. 
Three points look to be collinear but they are not (cf.\ values given in Table \ref{table:quad}). 
}\label{fig:fullquad5678}	
\end{figure}

\begin{figure}[h!]
\begin{center} 
\begin{overpic}
    [width=30mm]{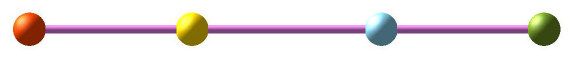} 
  \end{overpic} 
\quad
\begin{overpic}
    [width=30mm]{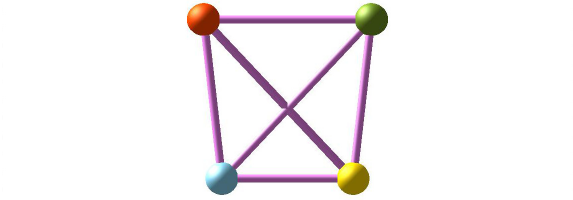} 
  \end{overpic} 
\quad
\begin{overpic}
    [width=30mm]{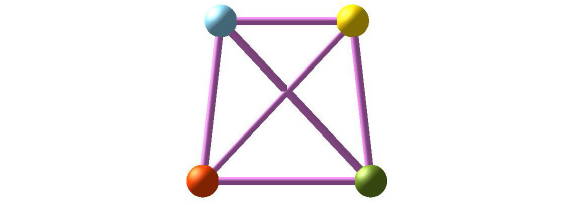} 
  \end{overpic} 
\quad
\begin{overpic}
    [width=30mm]{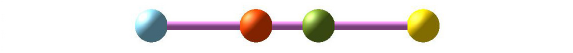} 
  \end{overpic} 	
\end{center} 
\hfill
\begin{scriptsize}
\end{scriptsize}
\caption{Saddle points: These four realizations $G(\Vkt V_9),\ldots ,G(\Vkt V_{12})$ (from left to right) correspond to saddle points. 
Moreover, the realizations $G(\Vkt V_9)$ and $G(\Vkt V_{12})$ are shaky realizations. 
}\label{fig:fullquadrest}	
\end{figure}

\begin{figure}[h!]
\begin{center}
\begin{minipage}{60mm}
\begin{center} 
\begin{overpic}
    [width=60mm]{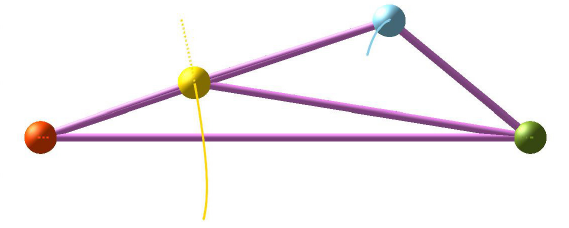} 
  \end{overpic} 
\quad \phm \\
\begin{overpic}
    [width=60mm]{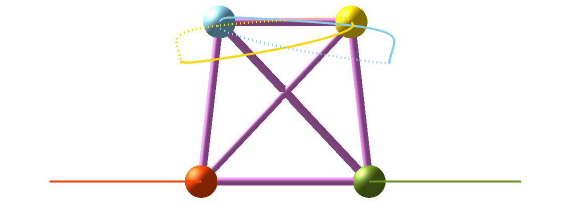} 
  \end{overpic} 
\end{center}	
\end{minipage}
\qquad
\begin{minipage}{60mm}
\begin{center} 
\begin{overpic}
    [width=60mm]{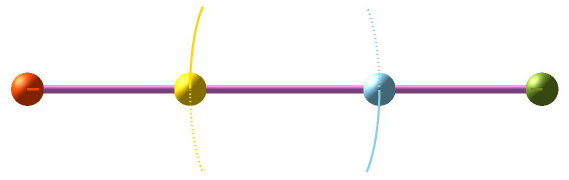} 
  \end{overpic} 
\quad
\begin{overpic}
    [width=60mm]{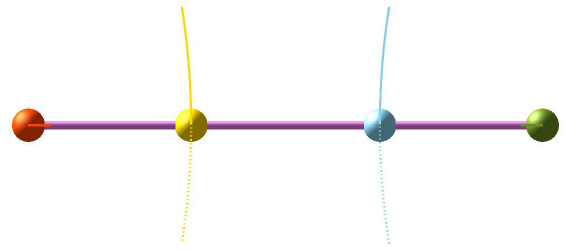} 
  \end{overpic} 	
\quad
\begin{overpic}
    [width=60mm]{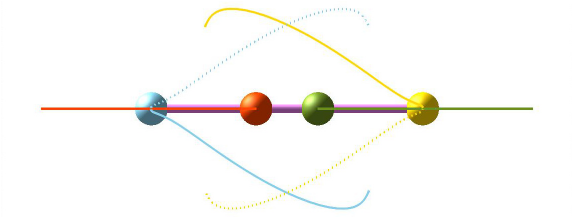} 
  \end{overpic} 		
\end{center} 
\end{minipage}
\end{center}
\hfill
\begin{scriptsize}
\end{scriptsize}
\caption{The gradient flows starting at some saddle realizations in direction of negative main curvature. 
In one direction the paths of the points are plotted solid and in the opposite direction dotted. 
Upper left: Saddle realization $G(\Vkt V_5)$ and the corresponding paths of the points towards the realizations 
 $G(\Vkt V_1)$ and $G(\Vkt V_3)$. Lower left: Following the gradient flow starting at the 
saddle realization $G(\Vkt V_{11})$ in direction of the second-smallest main curvature shows that both paths end up in 
the realization $G(\Vkt V_3)$. Therefore this realization can snap into itself over the saddle realization $G(\Vkt V_{11})$. 
Upper (center) right: Saddle realization $G(\Vkt V_9)$ and the gradient flow in direction of the smallest (second-smallest) 
main curvature. This shows the snapping between $G(\Vkt V_1)$ and $G(\Vkt V_2)$ ($G(\Vkt V_3)$ and $G(\Vkt V_4)$) over 
the shaky  realization $G(\Vkt V_9)$. By combining the different gradient flows $G(\Vkt V_1)$ can also snap into $G(\Vkt V_3)$ over 
$G(\Vkt V_9)$, but one needs more deformation energy compared to the snap over the saddle realization $G(\Vkt V_5)$ illustrated in the upper left corner 
(cf.\ Table \ref{table:quad}). 
Lower right: Saddle realization $G(\Vkt V_{12})$ and the gradient flow in direction of the third-smallest main curvature. This shows an 
alternative snapping between $G(\Vkt V_1)$ and $G(\Vkt V_2)$, which also needs more deformation energy as the one over  $G(\Vkt V_{12})$ 
illustrated in the upper right corner (cf.\ Table \ref{table:quad}). 
Finally note that the three snaps illustrated in the right column also demonstrate Theorem \ref{thm122} 
and on the infinitesimal level Corollary \ref{cor:dif}.
}\label{fig:quadsanp}	
\end{figure}

\begin{table}[!]
\begin{center}
\begin{footnotesize}
\begin{tabular}[h]{c|cccccc}
 & $x_1$ &  $x_2$ & $y_2$  & $x_3$   & $y_3$ & $u(\Vkt V_i)/E$ \\ \hline\hline
 $\Vkt V_1$ & 3 & 1  & 1 & -1   & -1  & 0 \\ \hline %27
 $\Vkt V_2$ & 3 & 1  & -1 & -1  &  1  & 0  \\ \hline %43
 $\Vkt V_3$ & 2.873803815106 & 1.264482434547  &  1.435718952668 & -1.264482434547  &   1.435718952668 	& 0.014510412969  \\ \hline %40
 $\Vkt V_4$ & 2.873803815106 & 1.264482434547  & -1.435718952668 & -1.264482434547  &  -1.435718952668 	& 0.014510412969   \\ \hline %35
 $\Vkt V_5$ & 2.984971064849 & 1.262815500919  &  1.420716567414 & -1.110499467187  &   0.662434022361 	& 0.017411595327  \\ \hline %3
 $\Vkt V_6$ & 2.984971064849 & 1.262815500919  & -1.420716567414 & -1.110499467187  &  -0.662434022361 	& 0.017411595327   \\ \hline %12
 $\Vkt V_7$ & 2.984971064849 & 1.110499467187  & -0.662434022361 & -1.262815500919  &  -1.420716567414 	& 0.017411595327   \\ \hline %39
 $\Vkt V_8$ & 2.984971064849 & 1.110499467187  &  0.662434022361 & -1.262815500919  &   1.420716567414 	& 0.017411595327   \\ \hline %9
 $\Vkt V_9$ & 3.139661485127 & 1.152372944387  & 0               & -1.152372944387  &  0  								& 0.029190294037  \\ \hline % 5
 $\Vkt V_{10}$ & 1.023109368578 & -0.801485412727  & -1.942382880068 &  0.801485412727  & -1.942382880068 & 0.544598068230     \\ \hline %11
 $\Vkt V_{11}$ & 1.023109368578 & -0.801485412727  &  1.942382880068 &  0.801485412727  &  1.942382880068 & 0.544598068230   \\ \hline %31
 $\Vkt V_{12}$ & 0.377904764722 & -1.656968246848  & 0               &  1.656968246848 &  0 							& 0.606663470611   \\ \hline %8
\end{tabular}
\end{footnotesize}
\end{center}
\caption{Coordinates of stable/saddle realizations and their corresponding $u(\Vkt V_i)/E$ value.} \label{table:quad}
\end{table}

In practice one starts with the saddle realization $G(\Vkt V_i)$ with the lowest value for $u(\Vkt V_i)/E$ 
and compute gradient flows towards the stable realizations, hoping that one ends up in the 
undeformed realization under consideration. If one does not find such a descent path, then one repeats the procedure for the 
saddle realization with the next higher value for $u(\Vkt V_i)/E$. 
The problem of this approach is that one has no guarantee to 
detected all local minima which can be reached from a saddle point by a descent path. 
With guarantee one can only give a lower bound\footnote{Note that $o(\Vkt L)$ can be seen as a kind of separation bound \cite{herman} of the real roots of the polynomial $u(\Vkt V')$ in 
terms of the intrinsic metric of framework inducing this polynomial.} for the global snappability by the following value:

\begin{lem}\label{minorant}
Let us assume that  $G(\Vkt V^-)\in\mathcal{S}$ implies the minimal value for $d(\Vkt L,\Vkt L^-)$ for all elements of $\mathcal{S}$. 
As a result $o(\Vkt L):=d(\Vkt L,\Vkt L^-)\leq s(\Vkt L)$ has to hold. 
\hfill $\diamond$
\end{lem}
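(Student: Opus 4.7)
The plan is to reduce the inequality to a one-line comparison between two values of $u$, once the relevant definitions are unpacked. First I would recall that at any undeformed realization we have $u(\Vkt L)=0$, hence the pseudometric collapses to $d(\Vkt L,\Vkt L')=u(\Vkt L')/E$ for arbitrary $\Vkt L'$. In particular, with $\Vkt L^-$ denoting the edge-length vector of the saddle realization $G(\Vkt V^-)\in\mathcal{S}$ minimizing $d(\Vkt L,\cdot)$ over $\mathcal{S}$, we obtain $o(\Vkt L)=u(\Vkt L^-)/E$, and for each undeformed realization $G(\Vkt V_i)$, $i=1,\dots,k$, the local snappability reads $s(\Vkt V_i)=u(\Vkt L'_i)/E$, where $\Vkt L'_i$ is the intrinsic metric of the saddle realization $G(\Vkt V'_i)$ traversed by the minimum-energy snap out of $G(\Vkt V_i)$.

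Next I would invoke Theorem \ref{thm121}: any minimum-energy snap out of the stable realization $G(\Vkt V_i)$ is forced to pass through a critical point of $u$ that is neither a local minimum nor a local maximum, and therefore $G(\Vkt V'_i)\in\mathcal{S}$. This places each $\Vkt L'_i$ inside the set over which $\Vkt L^-$ was chosen as a minimizer of $u$. Hence $u(\Vkt L^-)\leq u(\Vkt L'_i)$, i.e.\ $o(\Vkt L)\leq s(\Vkt V_i)$ for every $i$. Taking the minimum over $i=1,\dots,k$ gives $o(\Vkt L)\leq s(\Vkt L)$, as asserted.

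There is really no obstacle here beyond carefully unwinding the definitions; the whole argument is that $\mathcal{S}$ contains every saddle realization through which a minimum-energy snap can pass, so its overall $u$-minimum is a trivial lower bound for any particular such minimum. The one conceptual point worth flagging is why equality is generally not achieved: the saddle $G(\Vkt V^-)$ with smallest $u$-value need not separate two basins containing an undeformed realization of the framework, so the minimum-energy snap from $G(\Vkt V_i)$ is in general compelled to climb to a higher saddle. This gap is precisely what makes $o(\Vkt L)$ a computable minorant rather than an equality, and it justifies the name given to it in the lemma.
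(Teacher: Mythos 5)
Your argument is correct and coincides with the reasoning the paper leaves implicit (the lemma is stated without a separate proof, being regarded as an immediate consequence of Theorem~\ref{thm121} and the definitions of $\mathcal{S}$, $s(\Vkt V_i)$ and $s(\Vkt L)$): every minimum-energy snap passes a saddle realization lying in $\mathcal{S}$, over which $G(\Vkt V^-)$ minimizes $d(\Vkt L,\cdot)$, so $o(\Vkt L)\leq s(\Vkt V_i)$ for each $i$ and hence $o(\Vkt L)\leq s(\Vkt L)$. Your closing remark on why equality generally fails also matches the paper's motivation for introducing $o(\Vkt L)$ as a guaranteed lower bound.
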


Therefore we want to present a more sophisticated approach, which allows us to check directly if 
a saddle realization and a stable realization can be deformed continuously into each other, 
whereby the deformation energy density has to decrease monotonically. 
The minor drawback of this method is that it only works for 
pin-jointed body-bar framework, which are isostatic.

\subsection{Pin-jointed body-bar frameworks with minimal rigidity under affine deformations}\label{isostatic}  

We assume that the given pin-jointed body-bar framework $G(\mathcal{K})$ is isostatic\footnote{Note that the isostaticity of a spatial framework is 
a problem for its own and not treated within this article (see e.g.\ \cite{cfgsw,gfs})}. 
By replacing the bodies by the corresponding globally rigid subframeworks, the equivalent body-bar framework $G_*(\mathcal{K}_*)$ is also isostatic, 
if all polyhedra (polygonal panels) are tetrahedral (triangular). But in the general case the subframeworks are overbraced. 

The edges of the overbraced subframeworks are involved in the computation of the density function given in Lemma \ref{basic}. 
One can get rid of this property by allowing only affine (homogeneous) deformations of the bodies. 
In this way every polyhedron (polygonal panel) can be represented by a tetrahedron $V_a,V_b,V_c,V_d$ (triangle $V_a,V_b,V_c$) and the 
remaining vertices of the deformed polyhedron (polygonal panel) can then be obtained by the affine transformation  determined by 
$V_i\mapsto V'_i$ for $i\in\left\{a,b,c,d\right\}$ (resp.\ $i\in\left\{a,b,c\right\}$).

As a consequence we can consider a minimal set of lengths $L_{ij}$ which contains the lengths of the bars $\in\mathcal{G}$ plus for 
each polyhedron (polygonal panel) we get six (three) additional lengths. We collect these lengths within the vectors $\kVkt L:=(\ldots, L_{ij},\ldots)^T\in \RR^a$  
and $\kVkt Q:=(1, \ldots ,Q_{ij}, \ldots)^T\in \RR^{a+1}$ with $a\leq b$.
Using this notation the density function can be rewritten as follows:
\begin{equation}\label{eq:u_affin}
u(\kVkt L'):=\frac{\sum_{ab\in\mathcal{G}} U_{ab}(\kVkt L')+
\sum_{i=1}^p \Vol(B_i^2)\left[\frac{U_{abc}(\kVkt L')}{ \Vol_{abc}}  \right] + 
\sum_{j=p+1}^q \Vol(B_j^3)\left[\frac{ U_{abcd}(\kVkt L')}{ \Vol_{abcd}} \right]}
{\sum_{ab\in\mathcal{G}} \Vol_{ab}+
\sum_{i=1}^p \Vol(B_i^2) + 
\sum_{j=p+1}^q \Vol(B_j^3)}
\end{equation}
for an arbitrary $abc\in\mathcal{C}_i$ and $abcd\in\mathcal{C}_j$, respectively. 
As we can replace $L'_{ij}$ in $u(\kVkt L')$ by $\|\Vkt v'_i-\Vkt v'_j\|$ the function $u$ can be computed in dependence of $\kVkt V'\in\RR^{\tilde wd}$ with
$\tilde w=r+3p+4(q-p)$, where $\kVkt V'$ contains the vectors of the vertices $V_1,\ldots, V_r$ as well as three vertices of each 
polygonal panel and four vertices of each polyhedron, respectively.
With respect to the resulting density function $u(\kVkt V')$ we compute the set $\mathcal{M}$ of stable realizations and the set $\mathcal{S}$ of saddle realizations. 
Moreover, the following theorem holds true:

\begin{thm}\label{thm:neu}
If an isostatic pin-jointed body-bar framework snaps out of a stable realization $G(\Vkt V)$ 
by applying the minimum strain energy needed to it, then the framework passes
a realization $G(\Vkt V')$ at the maximum state of deformation (under the assumption that each body is deformed affinely), 
which is either shaky or contains at least a body of reduced dimension.
\end{thm}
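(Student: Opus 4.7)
The plan is to reduce Theorem \ref{thm:neu} to Theorem \ref{thm121} and Corollary \ref{rep:shaky} applied to the simplex-skeleton of the framework. Under the affine-deformation assumption of Section \ref{isostatic}, each polyhedron is determined by one of its tetrahedra $V_a, V_b, V_c, V_d$ and each polygonal panel by one of its triangles $V_a, V_b, V_c$. I would therefore introduce the reduced equivalent bar-joint framework $\tilde{G}_*(\tilde{\mathcal{K}}_*)$ whose $\tilde w = r + 3p + 4(q-p)$ vertices are exactly the bar endpoints $V_1,\ldots,V_r$ together with the simplex vertices of every body, and whose edges are the bars in $\mathcal{G}$, the six (respectively three) simplex edges per body, and the identifications induced by red edges. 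Because the original framework is isostatic and every body has been replaced by its minimally rigid simplex skeleton, the reduced framework $\tilde{G}_*$ is itself minimally rigid: the inner-graph edges beyond the simplex skeleton are precisely those that become redundant once a body is constrained to deform only affinely.

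Having established this, I would apply Theorem \ref{thm121} to $\tilde{G}_*$ together with the energy density $u(\kVkt V')$ of Eq.\ (\ref{eq:u_affin}). The proof of Theorem \ref{thm:critic} transfers verbatim, since $u(\kVkt V')$ is again a sum of the bar, triangle and tetrahedron contributions $U_{ab}$, $U_{abc}$, $U_{abcd}$ to which items (1)--(3) of that proof apply. Consequently the deformation realising the minimum-energy snap out of the stable realization $G(\Vkt V)$ must leave its basin of attraction through a saddle realization $\tilde{G}_*(\kVkt V')$ on which the reduced bar-joint framework carries a non-zero self-stress (or, for $\nu=1/2$, solves the corresponding Lagrange system with the isochoric side conditions).

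At this point I would split into two cases according to whether the simplex skeleton is non-degenerate at $\kVkt V'$. In the non-degenerate case every tetrahedron spans a 3-space and every triangle spans a 2-space, so the linear systems (\ref{test1}) and (\ref{test2}) admit unique self-stress coefficients, and $\tilde{G}_*$ is minimally rigid at $\kVkt V'$; Corollary \ref{rep:shaky} then turns the non-zero self-stress into shakiness of $\tilde{G}_*(\kVkt V')$, which by affine lifting is the shakiness of $G(\Vkt V')$ in the sense of the theorem. In the complementary case some tetrahedron becomes coplanar or some triangle becomes collinear, which is exactly the \emph{body of reduced dimension} alternative in the statement.

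The main obstacle I anticipate is the minimality assertion for the simplex-skeleton framework $\tilde{G}_*$. In the spatial setting no combinatorial Laman-type criterion is available, so the passage from the isostaticity of the original pin-jointed body-bar framework to the minimal rigidity of $\tilde{G}_*$ must be argued by matching degrees of freedom with constraints and verifying that the rigidity matrices of the two frameworks have matching ranks generically. A related subtlety is that in the degenerate case the self-stress produced by the proof of Theorem \ref{thm:critic} is only positive-dimensional and need not detect a flex of the original framework, which is precisely why the second alternative is needed in the conclusion.
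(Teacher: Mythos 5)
Your overall strategy (critical points of the affine-deformation energy give a self-stress, a non-zero self-stress on an isostatic structure gives a rank defect, and a rank defect is either shakiness of the whole framework or degeneration of a representative simplex) is the same skeleton as the paper's argument. However, there is a genuine gap at the step where you claim that the proof of Theorem \ref{thm:critic} ``transfers verbatim'' and that Theorem \ref{thm121} together with Corollary \ref{rep:shaky} can be applied to your reduced framework $\tilde{G}_*$. The knot set of $\tilde{G}_*$ contains only the simplex vertices of each body, but a green bar $ex\in\mathcal{G}$ may be attached to a vertex $V_e$ of a body $B_i$ that is \emph{not} one of the chosen simplex vertices $V_a,V_b,V_c,V_d$. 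Such a bar is not an edge between two knots of $\tilde{G}_*$, so $\tilde{G}_*$ is not a bar-joint framework in the sense required by Theorem \ref{thm121}, and item (1) of the proof of Theorem \ref{thm:critic} does not apply: the position $\Vkt v_e'$ is the affine combination of Eq.\ (\ref{us:lincomb}), and the gradient of $U_{ex}$ with respect to a simplex vertex $\Vkt v_a'$ must be computed by the chain rule through the coefficients $\xi,\upsilon,\zeta$. This computation, yielding $\nabla_{\hspace{-0.5mm}a}\, U_{ex}=\tfrac{\Area_{ex}(L_{ex}'^2-L_{ex}^2)}{2L_{ex}^3}(1-\xi-\upsilon-\zeta)(\Vkt v_e'-\Vkt v_x')$, is the technical heart of the paper's proof: it shows that the resulting force at a simplex vertex is still proportional to the gradient of the corresponding green-edge realization equation, so that the critical-point conditions assemble into a stress vector annihilating the columns of the (square, by isostaticity) Jacobian of the realization equations $c_1,\ldots,c_{\tilde wd}$. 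Your proposal omits this entirely, and without it the association of a critical point with a self-stress on $\tilde{G}_*$ is unjustified whenever some body has more than four (resp.\ three) vertices.

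A second, smaller issue is the one you flag yourself: the minimal rigidity of $\tilde{G}_*$ is asserted, not proved, and in $\RR^3$ there is no combinatorial criterion to fall back on. The paper sidesteps this lemma altogether: it never forms a literal reduced bar-joint graph, but works directly with the square system of realization equations in the $\tilde wd$ affine-deformation unknowns, where a non-zero stress vector immediately forces a rank defect, which is then interpreted as shakiness of $G(\Vkt V')$ or as degeneration of a representative tetrahedron (triangle). Your two-case split at the end matches the paper's conclusion, but the route you propose to reach it would need both the chain-rule computation for bars at non-simplex vertices and a proof (or circumvention) of the minimal-rigidity claim before it closes.
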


\noindent
Proof: 
For the analysis of the equations  $\nabla_{\hspace{-0.5mm}a}\, u(\kVkt V')=0$ we use the 
 sum rule for derivatives to study each summand separately. 

Let us assume that the vertex $V_a\in B_i$ is a vertex of a tetrahedron (triangle) representing a polyhedron (polygonal panel). 
From Theorem \ref{thm:critic} it is already known that $\nabla_{\hspace{-0.5mm}a}\, U_{abcd}(\Vkt v_a',\Vkt v_b',\Vkt v_c',\Vkt v_d')$ 
(resp.\ $\nabla_{\hspace{-0.5mm}a}\, U_{abc}(\Vkt v_a',\Vkt v_b',\Vkt v_c')$)
can be written as a linear combination of the involved vectors (cf.\ Eq.\ (\ref{test1}) and Eq.\ (\ref{test2}), respectively). 
Therefore we are only left with the partial derivative of the elastic strain energy of a bar $ex\in\mathcal G$ with $V_e\in B_i$, i.e.\ 
\begin{equation} \label{us:lincomb}
\Vkt v_e=\Vkt v_a+\xi(\Vkt v_b-\Vkt v_a) + \upsilon(\Vkt v_c-\Vkt v_a) + \zeta (\Vkt v_d-\Vkt v_a) 
\end{equation}
where $\zeta=0$ holds for a polygonal panel. We get:
\begin{equation}
\nabla_{\hspace{-0.5mm}a}\, U_{ex}(\Vkt v_a',\Vkt v_b',\Vkt v_c',\Vkt v_d',\Vkt v_x') = \tfrac{\Area_{ex}(L_{ex}'^2-L_{ex}^2)}{2L_{ex}^3}
\left[\xi(\Vkt v'_b-\Vkt v'_a) + \upsilon(\Vkt v'_c-\Vkt v'_a) + \zeta (\Vkt v'_d-\Vkt v'_a)+\Vkt v'_a-\Vkt v'_x\right](1-\xi- \upsilon-\zeta).
\end{equation}
This shows that $\nabla_{\hspace{-0.5mm}a}\, u(\kVkt V')$ can be written as a linear combination of the partial derivatives of the squared distances 
of vertices linked by tetrahedral (triangular) edges or green edges (edges $\in\mathcal{G}$). 

As a consequence, each solution of the $\tilde w$ equations of the form $\nabla_{\hspace{-0.5mm}i}\, u(\kVkt V')=0$ 
can be associated with a ${\tilde wd}$-dimensional stress-vector ${\omega}$. 
If this vector differs from the zero vector then the solution has to imply  a rank defect of the 
square matrix, whose columns are the gradient vectors of the realization equations $c_1,\ldots ,c_{\tilde wd}$. 
This rank defect either corresponds with a shaky configuration of the framework $G(\Vkt V')$ or arises 
from the shakiness of a substructure (i.e.\ a tetrahedron or a triangle) substituting a body. A tetrahedron (triangle) 
can only become infinitesimal flexible if its dimension is reduced to at least a plane  (line). 
\bigskip \hfill $\BewEnde$

An advantage of Theorem \ref{thm:neu} over Theorem \ref{thm121} is that the property of the realization $G(\Vkt V')$ concerns the pin-jointed body-bar framework 
and not the equivalent bar-joint framework (assumed that the given pin-jointed body-bar framework is not a bar-joint framework). \bigskip

\noindent
{\bf Is the assumption of affine deformations really a restriction?}
For our preferred choice of $\nu=1/2$ the volume has to be constant. This has the following 
consequences for the deformation of polyhedra and polygonal panels, respectively.

\begin{enumerate}[$\bullet$]
\item
Assume a polyhedron $B_j^3(n_j)$ 
with $n_j>4$ is given. Moreover, we assume that $(b,c,d,e)$, $(a,c,d,e)$, $(a,b,d,e)$ and $(a,b,c,e)$ belong to 
the index set $\mathcal{C}_j$ of non-degenerated tetrahedra. 
Therefore one can compute the barycentric coordinates of $V_e$ with respect to the tetrahedron $V_a,\ldots ,V_d$, which  
equal the ratio of the oriented volumes $(\Vol_{bcde}:\Vol_{acde}:\Vol_{abde}:\Vol_{abce})$ 
according to \cite{moebius}.

Now we assume that the tetrahedron $V_a,\ldots ,V_d$ was deformed isochoricly into 
$V'_1,\ldots ,V'_4$. Therefore not only the volume remains constant under the deformation, but 
also its orientation\footnote{A change in orientation can only happen in a flat pose of the 
tetrahedron, which has zero volume.}. As a consequence the mapping from $V_1,\ldots ,V_4$ to 
$V'_1,\ldots ,V'_4$ can be written as
\begin{equation}\label{affin:map}
\Vkt v'_i=\Vkt A\Vkt v_i + \Vkt a \quad \text{with} \quad \det(\Vkt A)=1 \quad \text{for} \quad i=a,b,c,d.
\end{equation}
Then $V'_e$ is uniquely determined by its barycentric coordinates 
$(\Vol'_{bcde}:\Vol'_{acde}:\Vol'_{abde}:\Vol'_{abce})$
with respect to the tetrahedron $V'_a,\ldots ,V'_d$ which have to equal 
the above given homogenous 4-tuple. This already implies that $V'_e$ and 
$V_e$ are also in the affine correspondence of  Eq.\ (\ref{affin:map}) 
(e.g.\ \cite[page 61]{aichholzer}). 
\item
Assume a polygonal panel $B_i^2(n_i)$ 
with $n_i>3$ is given. Moreover, we assume that $(b,c,d)$, $(a,c,d)$ and $(a,b,d)$ belong to 
the index set $\mathcal{C}_i$ of non-degenerated triangles. 
According to Section \ref{sec:pratio} the panel height $h'_{abc}$ of the deformed triangular subpanel can 
be computed as $\Vol_{abc}/\Area'_{abc}$. According to the second item of Remark \ref{rmk:lemma1} this height 
can be identified with the height  $h'_i$ of the deformed panel. 
In order that  the heights $h'_{bcd}$, $h'_{acd}$ and $h'_{abd}$ 
of the other three triangular subpanels are equal to $h'_i$ the following ratio has to hold:
\begin{equation}
\Area'_{abc}:\Area'_{bcd}:\Area'_{acd}:\Area'_{abd} = 
\Area_{abc}:\Area_{bcd}:\Area_{acd}:\Area_{abd}.
\end{equation}
By means of planar barycentric coordinates it can be seen that all four points have to be mapped 
by an affine transformation:
\begin{equation}
\Vkt v'_i=\Vkt A\Vkt v_i + \Vkt a  \quad \text{for} \quad i=a,b,c,d.
\end{equation}
Note that in this case the matrix $\Vkt A$ is not restricted to $\det(\Vkt A)=1$ but one only has to assume that 
$\Area'_{abc}\neq 0$. For $d=2$ the condition $\Area'_{abc}\neq 0$ is equivalent with $\det(\Vkt A)\neq 0$.
\end{enumerate}
These considerations show that the assumption of affine deformations is no restriction in the case 
where the inner graph of the polyhedron (polygonal panel) is a complete graph on at most four (three) vertices. 
In the more general case of global rigidity, which is only known for $\RR^2$ (and still open for $\RR^3$; 
cf.\ Section \ref{sec:outline}) and characterized by 3-connectivity and redundant rigidity 
of the graph \cite{jackson}, this assumption is maybe\footnote{To clarify this open problem, one has to study in 
more detail the properties of globally rigid inner graphs (cf.\ Definition \ref{def:ingr}).} restrictive. 
Taking these possible minor restrictions into account, we can compute the local snappability 
in an efficient way as follows. \bigskip

\noindent
{\bf Algorithm for computing the local snappability.}
Given is an undeformed realization $G(\kVkt V)\in\mathcal{M}$ and we want to determine $s(\kVkt V)$. To do so, we consider the 
saddle realization $G(\kVkt V')\in\mathcal{S}$ which has the minimal value for $d(\kVkt L,\kVkt L')$ and define the transformation 
\begin{equation}\label{imply}
\kVkt Q_t:=\kVkt Q+t(\kVkt Q'-\kVkt Q) \quad \text{with}\quad t\in[0,1].
\end{equation}
This gradient flow in the space of squared leg lengths (cf.\ Remark \ref{rmk:critic}) 
implies a path $\kVkt L_t$ in $\RR^a$ between $\kVkt L$ and $\kVkt L'$. 
Along this path  
the deformation energy of each tetrahedron $U_{abcd}$, triangular panel  $U_{abc}$ as well as bar $U_{ab}$ 
is {\it monotonic increasing} with respect to the path parameter $t$. 
This ensures that only the minimum mechanical work needed is applied on the framework to reach $G(\kVkt V')$.  
This results from Lemma \ref{basic}, as $U_{abcd}(\kVkt L_t)$, $U_{abc}(\kVkt L_t)$ as well as $U_{ab}(\kVkt L_t)$  are quadratic functions in $t$, 
which are at their minima for $t=0$. 
The path $\kVkt L_t$ corresponds to different 1-parametric deformations  of realizations in $\RR^d$. 
If among these a deformation $G(\kVkt V_t)$ with the property 
\begin{equation}\label{property}
G(\kVkt V_t)\big|_{t = 0}=G(\kVkt V),\quad G(\kVkt V_t)\big|_{t = 1}=G(\kVkt V') 
\end{equation}
exists, then the given realization $G(\kVkt V)$ is deformed into $G(\kVkt V')$ under $\kVkt L_t$. 
Computationally the property (\ref{property}) can easily be checked as follows: 
We consider the set of algebraic {\it realization equations} $c_1,\ldots ,c_n$ implied by the framework (cf.\ Section \ref{sec:outline}). 
Due to Eq.\ (\ref{imply}) the equations, which correspond to bar constraints, depend linearly on $t$. 
We have to track the path of the solution $\kVkt V$ of this algebraic system while $t$ is increasing from zero to one. 
This is a homotopy continuation problem which can be solved efficiently e.g.\ by the software Bertini \cite[Section 2.3]{bates}. 
 
\begin{rmk}
Note that this approach has to be adapted in the special case that the undeformed realization $G(\kVkt V)\in\mathcal{M}$ is shaky, 
as $\kVkt V$ is a singular solution of the algebraic system for $t=0$. 
In this case one has to solve the set of algebraic equations $c_1,\ldots ,c_n$ for a random value $t_*\in(0,1)$. 
The resulting solutions are then tracked by homotopy continuation back to the value $t=0$. 
At least two paths have to lead down to $\kVkt V$. The corresponding solutions at  $t=t_*$ of these paths   
are then tracked by homotopy continuation up to the value $t=1$ to check if one of them ends up at $\kVkt V'$.
\hfill $\diamond$
\end{rmk}

If no deformation with the property (\ref{property}) exists then we redefine 
$\mathcal{S}$ as $\mathcal{S}\setminus\left\{ G(\kVkt V')\right\}$ and run again the procedure explained in this paragraph until we 
get the sought-after realization implying $s(\kVkt V)$. 
If we end up with $\mathcal{S}=\left\{\,\right\}$ then we set $s(\kVkt V)=\infty$.

\begin{rmk}\label{rem:notconstvol}
Even if  $G(\kVkt V)$ and $G(\kVkt V')$ have the same volume, the deformation implied by Eq.\ (\ref{imply}) is in general not isochoric. \hfill $\diamond$
\end{rmk}

%%%%%%%%%%%%%%%%%%%%%%%%%%%%%%%%%%%%%%%%%%%%%%%%%%%%%%%%%%%%%%%%%%%%%%%%%%%%%%%%%%%%%%%%%%%%%%%%%%%%%%%%
%%%%%%%%%%%%%%%%%%%%%%%%%%%%%%%%%%%%%%%%%%%%%%%%%%%%%%%%%%%%%%%%%%%%%%%%%%%%%%%%%%%%%%%%%%%%%%%%%%%%%%%%
%%%%%%%%%%%%%%%%%%%%%%%%%%%%%%%%%%%%%%%%%%%%%%%%%%%%%%%%%%%%%%%%%%%%%%%%%%%%%%%%%%%%%%%%%%%%%%%%%%%%%%%%
%%%%%%%%%%%%%%%%%%%%%%%%%%%%%%%%%%%%%%%%%%%%%%%%%%%%%%%%%%%%%%%%%%%%%%%%%%%%%%%%%%%%%%%%%%%%%%%%%%%%%%%%

\section{Local and global singularity-distance} \label{sec:dist}

In the following we want to determine the real point $\Vkt V'''$ of the {\it shakiness variety} $V(J)$ (cf.\ end of Section \ref{sec:outline}) minimizing the 
value $d(\Vkt L,\Vkt L''')$,  where $G(\Vkt V)$ is the given undeformed realization of a  pin-jointed body-bar framework $G(\mathcal{K})$. 
In addition there should again exist a 1-parametric deformation of $G(\Vkt V)$ into $G(\Vkt V''')$ such that the deformation energy density has to 
increase monotonically. If this is the case we call $\varsigma(\Vkt V)=d(\Vkt L,\Vkt L''')=u(\Vkt L''')/E$ the {\it local singularity-distance}. 
By taking the minimum of all local singularity-distances of possible undeformed realizations of a framework we get the {\it global singularity-distance} $\varsigma(\Vkt L)$; i.e.\ 
$\varsigma(\Vkt L):=\min\left\{\varsigma(\Vkt V_1),\ldots, \varsigma(\Vkt V_k)\right\}$.

In the general case one has to compute the local minima of the Lagrangian 
\begin{equation}\label{extendlagrange2}
F(\Vkt V',\lambda)= u(\Vkt V') 
-{\lambda}_1f_1-\ldots -{\lambda}_{\phi}f_{\phi} - \lambda_{\phi+1}g_1- \ldots -\lambda_{\phi+\gamma}g_{\gamma} \quad \text{with}\quad
\lambda:=({\lambda}_1,\ldots,{\lambda}_{\phi+\gamma}), 
\end{equation}
where we recall that $g_1,\ldots ,g_{\gamma}$  are the generators of the ideal $J$ of the shakiness variety $A(J)$ and  $f_1,\ldots,f_{\phi}$
denote the side conditions for an isochoric deformation if this is desired. 
Starting from the corresponding realizations one can apply again gradient descent algorithms 
with respect to the function of the deformation energy density in order to find the neighboring stable realizations. 
Clearly, this strategy is faced with the same problems as already mentioned in Section \ref{comp:snap}. 
But for pin-jointed body-bar frameworks with minimal rigidity under affine transformations 
we are able to prove the following statements  (Theorem \ref{thm:ident} and Corollaries \ref{cor2} and \ref{cor:def}).

\begin{thm}\label{thm:ident}
If the undeformed realization $G(\Vkt V)$ is not shaky, then 
the local snappability $s(\Vkt V)$ is a lower bound on the local singularity-distance $\varsigma(\Vkt V)$; i.e.\ $s(\Vkt V) \leq \varsigma(\Vkt V)$. 
If the realization $G(\Vkt V')$,  which implies the snappability $s(\Vkt V)$, is shaky, then the equality holds.
\end{thm}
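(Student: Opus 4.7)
The plan is to cast both quantities as minima of $u(\Vkt V')/E$ over sets of realizations reachable from $G(\Vkt V)$ via the monotone path~(\ref{imply}) in the space of squared edge lengths, and to establish $s(\Vkt V) \leq \varsigma(\Vkt V)$ by showing that the singularity-distance's feasible set is contained in the snappability's. By definition, the former consists of reachable points $\Vkt V''' \in V(J)$ of the shakiness variety, while the latter consists of reachable saddle realizations $\Vkt V' \in \mathcal{S}$.

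The heart of the argument is to show that any reachable $\Vkt V''' \in V(J)$ belongs to $\mathcal{S}$. In the isostatic setting, shakiness produces a rank defect of the rigidity matrix and hence a non-zero self-stress, so by Theorem~\ref{thm:critic} the realization $\Vkt V'''$ is a critical point of $u$ (under the isochoric side conditions when $\nu = 1/2$). Since $G(\Vkt V)$ is not shaky and $u$ strictly increases along~(\ref{imply}) -- at least one bar, triangular panel, or tetrahedron must change its length, and its energy is therefore strictly monotone in $t$ -- the endpoint $\Vkt V'''$ cannot be a local minimum of $u$: this would contradict $u(t) < u(\Vkt V''')$ for $t$ just below $1$. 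Moreover, the infinitesimal flexibility at $\Vkt V'''$ gives a zero Hessian eigenvalue of $u$, ruling out a local maximum in the Hessian-test sense used to define $\mathcal{E}$. Hence $\Vkt V''' \in \mathcal{S}$, the claimed inclusion of feasible sets holds, and minimizing $u/E$ over the smaller set yields $s(\Vkt V) \leq \varsigma(\Vkt V)$.

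For the equality statement, if the snap saddle $\Vkt V'$ realizing $s(\Vkt V)$ is itself framework-shaky then $\Vkt V' \in V(J)$ is reachable and therefore feasible for the singularity-distance, giving $\varsigma(\Vkt V) \leq u(\Vkt L')/E = s(\Vkt V)$; combined with the first inequality this forces $s(\Vkt V) = \varsigma(\Vkt V)$. The main technical obstacle I foresee is making precise the transfer of self-stress: Theorem~\ref{thm:critic} is phrased for the equivalent bar-joint framework, so one must verify that shakiness of the body-bar framework at $\Vkt V'''$ produces a self-stress of its bar-joint equivalent -- which, in the affine-body-deformation setting assumed in Section~\ref{isostatic}, is itself isostatic -- so that Theorem~\ref{thm:critic} indeed applies and the same density function $u$ underlies both optimization problems.
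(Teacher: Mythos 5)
Your argument for the second (equality) claim is fine and matches the paper's intent: if the saddle $G(\Vkt V')$ realizing $s(\Vkt V)$ is itself shaky, it is admissible for the singularity-distance, so $\varsigma(\Vkt V)\leq s(\Vkt V)$ and equality follows. But the first and central part of your proof has a genuine gap: you invoke the \emph{converse} of Theorem~\ref{thm:critic}, which is neither stated nor true. Shakiness of a reachable realization $G(\Vkt V''')$ means the rigidity matrix has a rank defect, i.e.\ \emph{some} non-zero stress vector satisfies the equilibrium conditions~(\ref{equilibrium}). Being a critical point of $u$ is a much stronger statement: it requires that the \emph{specific} coefficients produced by $\nabla u$ -- e.g.\ ${\omega}_{ab}=\Area_{ab}(L_{ab}'^2-L_{ab}^2)/(2L_{ab}^3)$ for a bar, which are dictated by how much each element is deformed along the path -- themselves form a self-stress. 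At a generic shaky realization reached by the monotone path these induced coefficients do not satisfy equilibrium, so $\Vkt V'''\notin\mathcal{R}$, let alone $\mathcal{S}$; this is precisely why the paper computes $\varsigma$ from the \emph{constrained} Lagrangian~(\ref{extendlagrange2}), whose critical points satisfy $\nabla u=\sum\lambda_i\nabla g_i$ with generally non-vanishing multipliers. Your claimed inclusion ``reachable shaky realizations $\subseteq\mathcal{S}$'' therefore fails, and with it the set-containment argument. (The subsidiary claim that infinitesimal flexibility forces a zero Hessian eigenvalue of $u$ is likewise unsubstantiated: a kernel vector $\Vkt v$ of the rigidity matrix gives $\Vkt v^T H\Vkt v$ equal to a stress-energy quadratic form, which need not vanish.)

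The idea the paper actually uses is different and is the one your proof is missing: assume for contradiction $\varsigma(\Vkt V)<s(\Vkt V)$ and let $G(\Vkt V'')$ be the shaky realization attaining $\varsigma(\Vkt V)$. The monotone path $\kVkt Q_t$ from $\kVkt Q$ to $\kVkt Q''$ in the space of squared edge lengths lifts to several one-parametric deformations in $\RR^{wd}$, and \emph{because $G(\Vkt V'')$ is shaky at least two of these lifts end at $G(\Vkt V'')$} (two realizations coincide there). Hence the framework can pass through $G(\Vkt V'')$ onto a different branch -- that is, it can snap over $G(\Vkt V'')$ with energy $\varsigma(\Vkt V)<s(\Vkt V)$, contradicting the minimality defining $s(\Vkt V)$. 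No claim that $G(\Vkt V'')$ is a critical point of $u$ is needed; what matters is that shakiness guarantees the branch switch that turns a monotone deformation into a snap.
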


\noindent
Proof: 
We show the relation $s(\Vkt V) \leq \varsigma(\Vkt V)$ indirectly by assuming $\varsigma(\Vkt V)< s(\Vkt V)$. We denote the shaky realization implying $\varsigma(\Vkt V)$ 
by $G(\Vkt V'')$ which  corresponds to $\kVkt L''\in\RR^a$. 
In analogy to Eq.\  (\ref{imply}) we consider the relation
\begin{equation}
\kVkt Q_t:=\kVkt Q+t(\kVkt Q''-\kVkt Q) \quad \text{with}\quad t\in[0,1]
\end{equation}
defining a path $\kVkt L_t$ in $\RR^a$ between $\kVkt L$ and $\kVkt L''$, which corresponds to a set 
of 1-parametric deformations $\left\{G(\Vkt V_t^1),G(\Vkt V_t^2), \ldots \right\}$. 
A subset $\mathcal{D}$ of this set has the property $G(\Vkt V_t^i)|_{t = 1}=G(\Vkt V'')$ where $\#\mathcal{D}>1$ holds as 
$G(\Vkt V'')$ is shaky \cite{wohlhart,stachel_wunderlich}. 
Therefore the framework can snap out of $G(\Vkt V)$ over $G(\Vkt V'')$ which contradicts $\varsigma(\Vkt V)< s(\Vkt V)$ 
($\Longrightarrow$ $s(\Vkt V) \leq \varsigma(\Vkt V)$). \bigskip \hfill $\BewEnde$

In the case of Theorem \ref{thm:ident} the local snappability gives the radius of a guaranteed singularity-free sphere in the 
space of intrinsic framework metrics for a non-shaky realization. 
Note that in the space $\RR^b$ of squared edge lengths $Q_{ij}$ this singularity-free zone is 
bounded by a hyperellipsoid due to the third item of Remark \ref{rmk:lemma1}. 

Moreover, Theorem \ref{thm:ident} implies the following statement:

\begin{cor}\label{cor2}
If  none of the undeformed  realizations of a framework $G(\Vkt V_1),\ldots, G(\Vkt V_k)$ is shaky, then the 
global snappability $s(\Vkt L)$ is a lower bound on the global singularity-distance $\varsigma(\Vkt L)$.
\end{cor}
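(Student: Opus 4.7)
The plan is to derive Corollary \ref{cor2} as a direct consequence of Theorem \ref{thm:ident} by applying it realization-wise and then exploiting monotonicity of the minimum operator. Since the hypothesis guarantees that \emph{every} undeformed realization $G(\Vkt V_1),\ldots,G(\Vkt V_k)$ is non-shaky, Theorem \ref{thm:ident} is applicable to each of them individually, and this gives us the family of inequalities needed to compare the two global quantities.

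Concretely, the first step is to apply Theorem \ref{thm:ident} to $G(\Vkt V_i)$ for each $i\in\{1,\ldots,k\}$, yielding
\begin{equation*}
s(\Vkt V_i) \leq \varsigma(\Vkt V_i) \quad \text{for all } i=1,\ldots,k.
\end{equation*}
The second step is to pass to the minimum on both sides. Since the inequality holds pointwise for each index, we have in particular $s(\Vkt V_j) \leq \varsigma(\Vkt V_j)$ for the index $j$ realizing $\min_i \varsigma(\Vkt V_i)$, and therefore
\begin{equation*}
\min_{i=1,\ldots,k} s(\Vkt V_i) \;\leq\; s(\Vkt V_j) \;\leq\; \varsigma(\Vkt V_j) \;=\; \min_{i=1,\ldots,k}\varsigma(\Vkt V_i).
\end{equation*}
The final step is to recognise the left-hand side as $s(\Vkt L)$ and the right-hand side as $\varsigma(\Vkt L)$ by the definitions given just before the statement of Corollary \ref{cor2}, which yields the claim $s(\Vkt L)\leq \varsigma(\Vkt L)$.

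There is essentially no hard step here; the whole argument is a routine monotonicity passage. The only point that deserves a sentence of justification is why the non-shakiness hypothesis on \emph{every} $G(\Vkt V_i)$ (rather than just on the one achieving the global snappability) is needed: without it, Theorem \ref{thm:ident} could fail for some index $j$ where $G(\Vkt V_j)$ is shaky, and in that case $\varsigma(\Vkt V_j)=0$ could undercut $s(\Vkt V_j)$, breaking the pointwise inequality that drives the minimum comparison above. Assuming the uniform non-shakiness rules this out, so the proof reduces to the three-line chain displayed above and nothing more is required.
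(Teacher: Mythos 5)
Your proposal is correct and matches the paper's intent: the paper gives no explicit proof of Corollary \ref{cor2}, stating only that it is implied by Theorem \ref{thm:ident}, and the realization-wise application of that theorem followed by the standard minimum comparison is exactly the argument being invoked. Your closing remark on why non-shakiness of \emph{all} undeformed realizations is needed is a correct and worthwhile observation.
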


Note that in case of Corollary \ref{cor2} also $o(\Vkt L)\leq\varsigma(\Vkt L)$ has to hold due to Lemma \ref{minorant}. 
Therefore $\varsigma(\Vkt L)$ as well as $o(\Vkt L)$ are radii of guaranteed singularity-free spheres in the 
space of intrinsic framework metrics for any of the undeformed realizations. %\hfill $\diamond$

Moreover, we can make the following statement on the reality of deformations:

\begin{cor}\label{cor:def}
The deformation associated with the local snappability $s(\Vkt V)=d(\Vkt L,\Vkt L')$, which is implied by Eq.\ (\ref{imply}), is  
guaranteed to be real, if not both realizations  $G(\Vkt V)$ and $G(\Vkt V')$ are shaky.
\end{cor}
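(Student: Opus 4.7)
My plan is to prove the Corollary by contrapositive: assume the deformation path defined by Eq.\ (\ref{imply}) fails to be real on all of $[0,1]$, and deduce that both $G(\Vkt V)$ and $G(\Vkt V')$ must be shaky.

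First, I would invoke the algebraic fact already used in the Remark following Theorem \ref{thm122}: a real solution of an algebraic system can only degenerate into a complex conjugate pair through a double root. Applied to the parametric realization equations $c_1(\Vkt V,t)=0,\ldots,c_n(\Vkt V,t)=0$ whose bar-constraint terms depend linearly on $t$ via Eq.\ (\ref{imply}), the real branch $\Vkt V_t$ can break off only at a parameter value $t^\star\in(0,1]$ where the Jacobian with respect to $\Vkt V$ is rank deficient. Since this Jacobian encodes the rigidity matrix $\Vkt R_{G(\Vkt V_{t^\star})}$, the realization $G(\Vkt V_{t^\star})$ must be shaky.

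Second, I would exclude interior breaking points $t^\star\in(0,1)$ under the assumption that $G(\Vkt V)$ is non-shaky. Along the homotopy, $u(\Vkt V_t)$ is strictly monotonically increasing because every summand of $u$ is a quadratic in $t$ attaining its minimum at $t=0$ (Lemma \ref{basic}); hence $d(\Vkt L,\Vkt L_{t^\star})=u(\Vkt V_{t^\star})/E<u(\Vkt V')/E=s(\Vkt V)$. The truncated deformation with $t\in[0,t^\star]$ inherits the monotone energy increase and terminates at the shaky realization $G(\Vkt V_{t^\star})$, so $G(\Vkt V_{t^\star})$ is an admissible candidate in the definition of $\varsigma(\Vkt V)$; therefore $\varsigma(\Vkt V)\leq d(\Vkt L,\Vkt L_{t^\star})$. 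Combined with Theorem \ref{thm:ident}'s bound $s(\Vkt V)\leq\varsigma(\Vkt V)$ (applicable since $G(\Vkt V)$ is non-shaky) this yields $s(\Vkt V)<s(\Vkt V)$, a contradiction. Hence no interior break can occur when $G(\Vkt V)$ is non-shaky, and the real branch continues to $t=1$, so the deformation is real.

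For the complementary case with $G(\Vkt V)$ shaky but $G(\Vkt V')$ non-shaky, I would reverse the homotopy direction: the implicit function theorem at $(\Vkt V',1)$ provides a unique real analytic continuation of the branch backward in $t$, and a symmetric version of the singularity-distance contradiction (using a lower bound analogous to Theorem \ref{thm:ident} but anchored at $\Vkt V'$) forbids any interior fold points. Therefore the backward real branch extends continuously down to $t=0$ and lands on $\Vkt V$, giving again a real deformation. Combining the two cases, the deformation fails to be real only if both $G(\Vkt V)$ and $G(\Vkt V')$ are shaky, which is the desired contrapositive.

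The main obstacle I expect is the symmetric backward argument just sketched: Theorem \ref{thm:ident} presupposes a non-shaky \emph{undeformed} configuration, so its use in reverse requires establishing an analogous lower bound for the distance anchored at the saddle realization $\Vkt V'$, and then verifying that the reverse real branch terminates exactly at the prescribed $\Vkt V$ rather than at some sibling real root of the realization equations at $t=0$.
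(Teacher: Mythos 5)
Your strategy coincides with the paper's: its proof also rests on the fact that a real solution of an algebraic system can only turn complex through a double root, and it excludes such a root along the path of Eq.\ (\ref{imply}) via Theorem \ref{thm:ident} --- exactly your energy-monotonicity contradiction. There is, however, one case you overlook and the paper treats explicitly. In the affine-deformation setting of Section \ref{isostatic} the realization equations describe the reduced framework in which each body is replaced by a representative tetrahedron or triangle, so a rank defect of the Jacobian at $t^\star$ does \emph{not} force $G(\Vkt V_{t^\star})$ to be shaky: it can instead arise from a representative body collapsing to lower dimension (this is precisely the dichotomy in Theorem \ref{thm:neu}). Your contradiction via $s(\Vkt V)\leq\varsigma(\Vkt V)$ only rules out the shaky alternative, since $\varsigma$ measures the distance to the shakiness variety. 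The paper closes the second case by the same snap-with-less-energy argument used in the proof of Theorem \ref{thm:ident}: a coincidence of two realizations caused by a degenerating body equally permits a snap at energy density below $s(\Vkt V)$, contradicting the minimality of $s(\Vkt V)$. With that one additional sentence your argument is complete; and your explicit treatment of the reversed homotopy for the case where $G(\Vkt V)$ is shaky but $G(\Vkt V')$ is not --- which you rightly flag as the delicate point --- is actually more careful than the paper, which leaves that symmetric case implicit in the phrase ``if at least one of the two realizations is not shaky.''
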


\noindent
Proof: 
A real solution of an algebraic set of equations can only change over into a complex one through a double root, 
which corresponds either to a (1) shaky realization or to a (2) body of reduced dimension. 
Case (1) is impossible due to Theorem \ref{thm:ident}. Moreover, case (2) can also not hold, which can be shown 
in the same way as in the proof of Theorem \ref{thm:ident}. 

Therefore the entire path has to be real if at least one of the two realizations is not shaky. \bigskip \hfill $\BewEnde$

In the following example we want to demonstrate the results obtained so far.

%%%%%%%%%%%%%%%%%%%%%%%%%%%%%%%%%%%%%%%%%%%%%%%%%%%%%%%%%%%%%%%%%%%%%%%%%%%%%%%%%%%%%%%%%%%%%%%%%%%%%%%%%%%%%%%
%%%%%%%%%%%%%%%%%%%%%%%%%%%%%%%%%%%%%%%%%%%%%%%%%%%%%%%%%%%%%%%%%%%%%%%%%%%%%%%%%%%%%%%%%%%%%%%%%%%%%%%%%%%%%%%
%%%%%%%%%%%%%%%%%%%%%%%%%%%%%%%%%%%%%%%%%%%%%%%%%%%%%%%%%%%%%%%%%%%%%%%%%%%%%%%%%%%%%%%%%%%%%%%%%%%%%%%%%%%%%%%

%%%%%%%%%%%%%%%%%%%%%%%%%%%%%%%%%%%%%%%%%%%%%%%%%%%%%%%%%%%%%%%%%%%%%%%%%%%%%%%%%%%%%%%%%%%%%%%%%%%%%%%%%%%%%%%
%%%%%%%%%%%%%%%%%%%%%%%%%%%%%%%%%%%%%%%%%%%%%%%%%%%%%%%%%%%%%%%%%%%%%%%%%%%%%%%%%%%%%%%%%%%%%%%%%%%%%%%%%%%%%%%
%%%%%%%%%%%%%%%%%%%%%%%%%%%%%%%%%%%%%%%%%%%%%%%%%%%%%%%%%%%%%%%%%%%%%%%%%%%%%%%%%%%%%%%%%%%%%%%%%%%%%%%%%%%%%%%

%%%%%%%%%%%%%%%%%%%%%%%%%%%%%%%%%%%%%%%%%%%%%%%%%%%%%%%%%%%%%%%%%%%%%%%%%%%%%%%%%%%%%%%%%%%%%%%%%%%%%%%%%%%%%%%
%%%%%%%%%%%%%%%%%%%%%%%%%%%%%%%%%%%%%%%%%%%%%%%%%%%%%%%%%%%%%%%%%%%%%%%%%%%%%%%%%%%%%%%%%%%%%%%%%%%%%%%%%%%%%%%
%%%%%%%%%%%%%%%%%%%%%%%%%%%%%%%%%%%%%%%%%%%%%%%%%%%%%%%%%%%%%%%%%%%%%%%%%%%%%%%%%%%%%%%%%%%%%%%%%%%%%%%%%%%%%%%

\begin{example} 
We consider a closed serial chain composed of four directly congruent tetrahedral chain elements, which are jointed by 
four hinges. The studied example was given by  Wunderlich \cite{wunderlich_gelenksviereck} 
and is illustrated in Fig.\ \ref{fig:4rloop1}.	It has a threefold reflexion symmetry with respect to three 
copunctal lines, which are pairwise orthogonal. Using them as axes of a Cartesian frame, the vertices can be 
coordinatized as follows:
\begin{align}
A_1&=(u_1,v_1,w_1)^T &\quad A_2&=(-u_2,v_2,w_2)^T &\quad A_3&=(-u_1,-v_1,w_1)^T &\quad A_4&=(u_2,-v_2,w_2)^T \\
B_1&=(u_1,-v_1,-w_1)^T &\quad B_2&=(u_2,v_2,-w_2)^T &\quad B_3&=(-u_1,v_1,-w_1)^T &\quad B_4&=(-u_2,-v_2,-w_2)^T 
\end{align}
The intrinsic metric of the framework is given by the following assignment:
\begin{equation}
\begin{split}
\overline{A_1A_2}&=\overline{A_2A_3}=\overline{A_3A_4}=\overline{A_4A_1}=\overline{B_1B_2}=\overline{B_2B_3}=\overline{B_3B_4}=\overline{B_4B_1}=
\tfrac{25\sqrt{3} - 15}{(3\sqrt{2} + 10)\sqrt{3} - 3\sqrt{2} + 6} \\
\overline{A_1B_2}&=\overline{A_2B_3}=\overline{A_3B_4}=\overline{A_4B_1}=
\tfrac{15 + 5\sqrt{3}}{(3\sqrt{2} + 10)\sqrt{3} - 3\sqrt{2} + 6} \\
\overline{A_1B_4}&=\overline{A_2B_1}=\overline{A_3B_2}=\overline{A_4B_3}=
\tfrac{45 - 5\sqrt{3}}{(3\sqrt{2} + 10)\sqrt{3} - 3\sqrt{2} + 6} \\
\overline{A_1B_1}&=\overline{A_2B_2}=\overline{A_3B_3}=\overline{A_4B_4}=
\tfrac{15\sqrt{2}(\sqrt{3} - 1}{(3\sqrt{3} - 3)\sqrt{2} + 10\sqrt{3} + 6}
\end{split}
\end{equation}
which has the property that the average edge length equals 1. 
We consider the two undeformed realizations $G(\Vkt V_1)$ and $G(\Vkt V_2)$ of the chain illustrated in Fig.\ \ref{fig:4rloop1}, which can be computed according to the procedure given in \cite{wunderlich_gelenksviereck}. 
The vector $(u_1,v_1,w_1,u_2,v_2,w_2)$ which corresponds to $\Vkt V_1$ is given by:
\begin{equation}
(0.802729630788,0.207761716516,0.207761716516,0.169636731183,0.655425998948,0.239902565915)
\end{equation}
The coordinates of $\Vkt V_2$ are obtained by the following exchange of the
coordinate entries of $\Vkt V_1$: 
$u_1 \leftrightarrow v_2$, $v_1\leftrightarrow u_2$ and $w_1\leftrightarrow w_2$.
Note that the framework can snap between the two realizations $G(\Vkt V_1)$ and $G(\Vkt V_2)$. 
The shaky saddle realization, which has to be passed during the snap, is denoted by $G(\Vkt V')$.

\begin{figure}[t]
\begin{center} 
\begin{minipage}{110mm}
\begin{overpic}
    [width=55mm]{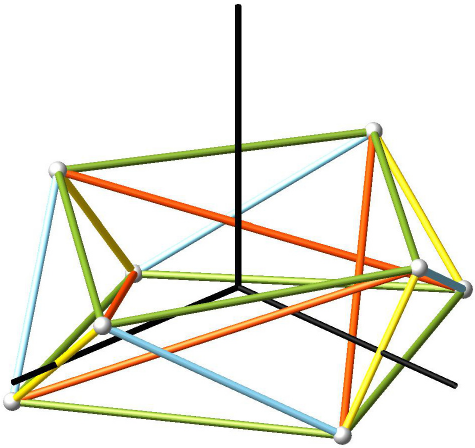}  
\begin{scriptsize}
\put(0.5,3.5){$B_1$}
\put(75,0){$B_2$}
\put(101,31){$B_3$}
\put(27,41){$B_4$}
\put(12.6,27){$A_1$}
\put(80.7,40){$A_2$}
\put(81.5,65){$A_3$}
\put(4,57){$A_4$}
\put(20,18){$x$}
\put(97,11){$y$}
\put(52,90){$z$}

\end{scriptsize} 		
  \end{overpic} 
	\quad
	\begin{overpic}
    [width=55mm]{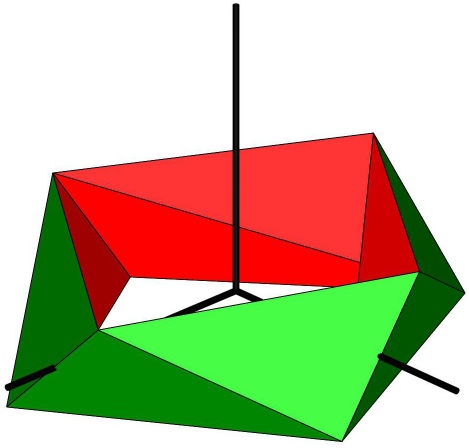} 
\begin{scriptsize}
\end{scriptsize} 		
  \end{overpic} 
\end{minipage}\hfill	
\begin{minipage}{40mm}
\begin{overpic}
    [width=30mm]{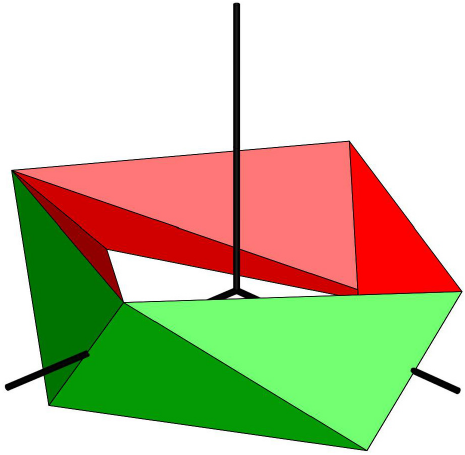} 
  \end{overpic}
\begin{overpic}
    [width=30mm]{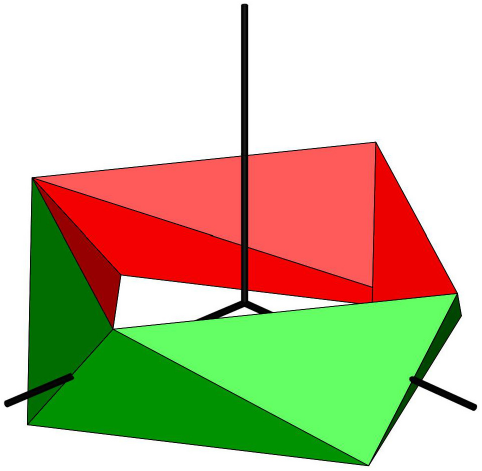} 
  \end{overpic}	
\end{minipage}	
\end{center} 
\hfill
\begin{scriptsize}
\end{scriptsize}
\caption{Left: Illustration of the realization $G(\Vkt V_1)$ as a bar-joint framework, where 
bars of equal length have the same color. The four tetrahedra are hinged along the yellow bars. 
Moreover, the coordinate frame is displayed where the axes have a length of 1. 
Center: The same configuration as on the left side but illustrated with panels instead of bars. 
Congruent triangular panels are again  same colored  (either red or green). 
Right: At the top the second realization $G(\Vkt V_2)$ is visualized and at the bottom the shaky realization $G(\Vkt V')$. 
An animation of the snapping behavior can be downloaded from \cite{data2021}.
}\label{fig:4rloop1}	
\end{figure}

In general a closed chain composed of four tetrahedra results in an overbraced bar-joint framework, as one can remove e.g.\ the bars $A_1B_4$ and $B_1B_4$ to get a minimal rigid structure. 
But under the assumed threefold symmetry resulting in directly congruent chain elements the framework is minimal rigid, 
as the input of the six edge lengths $\overline{A_1B_1}$, $\overline{A_1B_2}$, $\overline{A_1A_2}$, $\overline{A_2B_1}$, $\overline{A_2B_2}$ 
and $\overline{B_1B_2}$ already determine the six values $u_1,u_2,v_1,v_2,w_1,w_2$ coordinatizing the four involved points and therefore the complete structure.

\begin{table}
\begin{center}
\begin{footnotesize}
\begin{tabular}[h]{ll|cccc}
Chain elements & & \# tracked paths & $\# \mathcal{R}$ & $\# \mathcal{S}$     & $\varsigma(\Vkt L) =s(\Vkt L)$ \\ \hline\hline
bar-joint & 							& $729$  & $113$  & 96     &  $6.762914466510\cdot 10^{-7}$ \\ \hline
panel-hinge &  $\nu=1/2$ 	& $729$  & $161$  & 144    &  $9.363722223978\cdot 10^{-6}$\\ 
						& $\nu=1/4$ 	& $729$  & $137$  & 120    &  $1.052544771247\cdot 10^{-5}$ \\ 
						& $\nu=0$ 		& $729$  & $129$  & 112    &  $1.261816856140\cdot 10^{-5}$ \\ \hline
tetrahedra & $\nu=1/2$ 		& $279\,936$ 	& $49$	   & $33$   & $3.289330211161\cdot 10^{-5}$  \\ 
	& $\nu=1/2$ (simple) 		& $2\,187$ & $24$  & $20$  			& \ditto \\ 
						& $\nu=1/4$ 	& $729$ & $179$  & $154$   			  & $3.946472039856\cdot 10^{-5}$  \\ 
						& $\nu=0$ 		& $729$ & $178$  & $155$   				& $4.932700715589\cdot 10^{-5}$  \\ \hline
\end{tabular}
\end{footnotesize}
\end{center}
\caption{Computational data: Note that the computation of the set $\mathcal{R}$ was done by a 
total degree homotopy using Bertini. 
For the case of tetrahedral chain elements under the assumption $\nu=1/2$, the 
set $\mathcal{S}$ has to be filtered out from $\mathcal{R}$ by using a second-derivative test based
on the bordered Hessian \cite{spring} due to the isochoricity side condition.
} \label{tab0:loop}
\end{table}

We can interpret the chain elements as bar-joint frameworks, panel-hinge frameworks or as tetrahedra. 
Moreover, in the case of triangular panels and tetrahedra we compute the snappability with respect to three different Poisson ratios $\nu=0,\tfrac{1}{4},\tfrac{1}{2}$. 
The computational data for the different cases is summarized in the Tables \ref{tab0:loop} and \ref{tab2:loop}. 
Note that independent of the interpretation we get $s(\Vkt L)=s(\Vkt V_1)=s(\Vkt V_2)$.
The corresponding saddle realizations  $G(\Vkt V')\in\mathcal{S}$ (cf.\ Table \ref{tab2:loop}) are all shaky as they fulfill the equation 
$u_1v_1w_2 - u_2v_2w_1=0$
indicating that the Pl\"ucker coordinates of the four lines $A_iB_i$ are linearly dependent (cf.\ \cite{pottmann}). Note that for the interpretation as 
bar-joint framework or panel-hinge framework the singularity condition consists of a second factor $u_1v_1w_2 + u_2v_2w_1=0$ which implies the 
coplanarity of the vertices $A_i,B_i,A_{i+1},B_{i+1}$ (mod 4) for  $i=1,\ldots, 4$. 
According to Theorem \ref{thm:ident} we get $s(\Vkt L)=s(\Vkt V_1)=s(\Vkt V_2)=\varsigma(\Vkt V_1)=\varsigma(\Vkt V_2)=\varsigma(\Vkt L)$.

Concerning the interpretation of the chain as  bar-joint framework we can give the maximal absolute and relative variation of 
a bar in length during the deformation which equals $0.002359150067$ and $0.001715578691$, respectively.
Moreover, each edge must  change its length in average absolutely by $0.001064975188$ and 
relatively by $0.000993544934$.

\begin{rmk}
In the case of interpreting the chain elements as panel-hinge frameworks or as tetrahedra,
we also put $\nu$ as a unknown in the optimization process as stated in Remark \ref{rem:nu}. 
Computing the critical points with Bertini resulted in $24\,576$ paths. 
For panel-hinge frameworks none of the local extrema for $0\leq\nu<1/2$ has a value less than 
$9.363722223978\cdot 10^{-6}$ and for tetrahedra no local extrema exists within this interval. 
Thus in both cases we get the local minimum at the boundary $\nu=1/2$. \hfill $\diamond$
\end{rmk}

We close this example with some remarks on the case, where the chain is assembled by four tetrahedra 
of Poisson ratio $\nu=1/2$:
Due to the symmetry of the chain we only have to consider one isochoric constraint $\Vol_T'^2-\Vol_T^2=0$, where $T$ 
stands for one of the tetrahedra with vertices $A_i,B_i,A_{i+1},B_{i+1}$ (mod 4) for  $i=1,\ldots, 4$.
The computation of critical points of $u(\Vkt V)$ under this constraint (cf.\ Eq.\ (\ref{extendlagrange2}))
results in the tracking of $279\,936$ paths (cf.\ Table \ref{tab0:loop}). 
If we also invest the information, that the orientation of the tetrahedra has to remain constant under an 
isochoric deformation (cf.\ footnote 14), we can use the simplified condition 
$\Vol'_T-\Vol_T=0$ of oriented volumes, which has 
the half degree. This approach reduces the number of paths to $2\,187$ for the computation of $G(\Vkt V')$ given in 
Table \ref{tab2:loop}. 
But we are still lacking for an efficient determination of an isochoric deformation from $G(\Vkt V)$ into $G(\Vkt V')$ with a monotonically 
increasing deformation energy density. Until now we can only achieve such 
a deformation by a projected gradient descent approach resulting in Fig.\ \ref{fig:4rloop_isochoric}.

\begin{table}
\begin{center}
\begin{footnotesize}
\begin{tabular}[h]{ll|ccc}
 Chain elements & &   $u_1, v_2$ & $v_1, u_2$  & $w_1, w_2$  \\ \hline\hline
	bar-joint & & 0.733113570223  & 0.186762548180  & 0.226463240099  \\ \hline
	panel-hinge &$\nu=1/2$ & 0.733944401620  & 0.187601517133  & 0.226592028651  \\ 
	 &$\nu=1/4$ & 0.733918350832  & 0.187397874243 & 0.226701242326  \\ 
	 &$\nu=0$ & 0.733898439861 & 0.187273478928  & 0.226766038698 \\ \hline
	tetrahedra &  $\nu=1/2$ & 0.735389875237  & 0.190335899873  &  0.225439236330 \\ 
	 &$\nu=1/4$ & 0.735346467296  & 0.190324665294  &  0.225425929220 \\ 
	&$\nu=0$ & 0.735317448771  &  0.190317154629 &  0.225417033375 \\ \hline
\end{tabular}
\end{footnotesize}
\end{center}
\caption{Coordinates of the shaky saddle realization $G(\Vkt V')\in\mathcal{S}$ for the different interpretations of 
the tetrahedra.} \label{tab2:loop}
\end{table}

\begin{figure}[t]
\begin{center} 
\begin{overpic}
    [height=40mm]{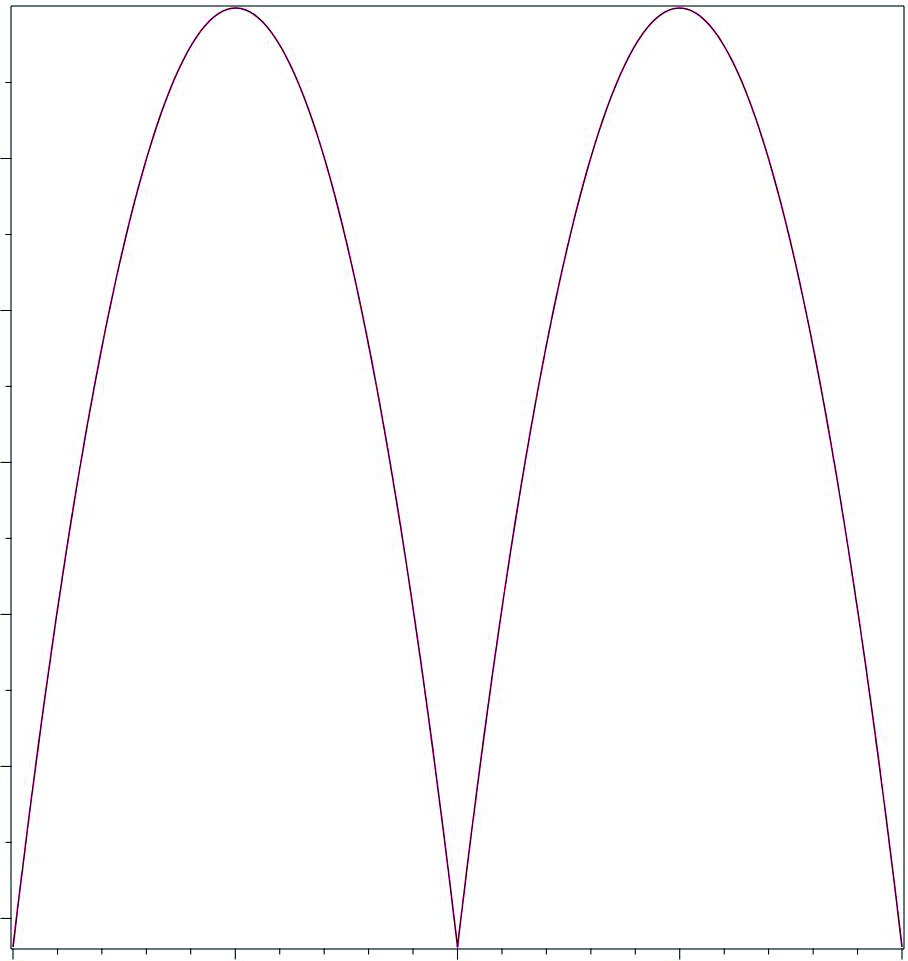}  
\begin{scriptsize}
\put(-26,82){$0.252445$}
\put(-26,2){$0.252440$}
\put(-14.5,44){$\Vol_T'$}
\put(-3,-6){$G(\Vkt V_1)$}
\put(41,-6){$G(\Vkt V')$}
\put(88,-6){$G(\Vkt V_2)$}
\end{scriptsize} 		
  \end{overpic} 
\qquad	\qquad	
	\begin{overpic}
    [height=50mm]{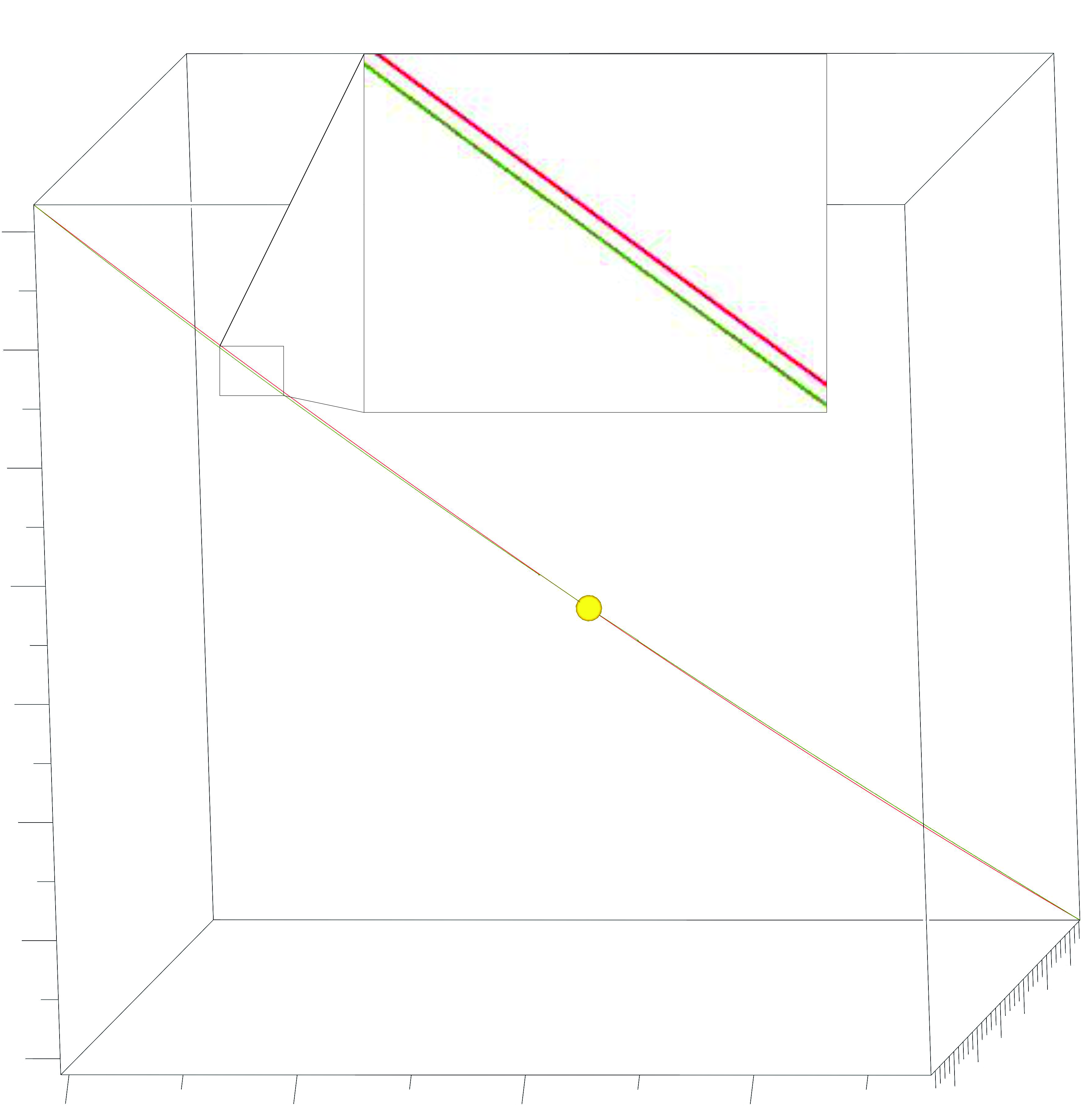}  
\begin{scriptsize}
\put(-11,77){$0.66$}
\put(-9,3){$0.80$}
\put(-2,40){$x$}
\put(0,-5){$0.17$}
\put(36,-3){$y$}
\put(64,-5){$0.20$}
\put(87,-2){$0.235$}
\put(98,10){$0.210$}
\put(93.5,6){$z$}
\put(53,48){$A_1\in G(\Vkt V')$}

\end{scriptsize} 		
  \end{overpic} 
	\end{center}
\caption{Left: The change of the volume of a tetrahedron under the transformation 
implied by the gradient flow in the space of squared leg lengths (cf.\ Eq.\ (\ref{imply}) and 
Remark \ref{rem:notconstvol}). 
The corresponding trajectory of the point $A_1$ is illustrated by the red curve in the right figure. 
The green curve corresponds to an isochoric deformation, which was computed with a projected gradient algorithm.
}
\label{fig:4rloop_isochoric}	
\end{figure}
\end{example}

Further demonstration/verification of our method is done  in Appendix \ref{appendix}, where two snapping model flexors are studied. 
Moreover, the obtained results are compared with those reported in the literature.

%%%%%%%%%%%%%%%%%%%%%%%%%%%%%%%%%%%%%%%%%%%%%%%%%%%%%%%%%%%%%%%%%%%%%%%%%%%%%%%%%%%%%%%%%%%%%%%%%%%%%%%%%%%%%%%
%%%%%%%%%%%%%%%%%%%%%%%%%%%%%%%%%%%%%%%%%%%%%%%%%%%%%%%%%%%%%%%%%%%%%%%%%%%%%%%%%%%%%%%%%%%%%%%%%%%%%%%%%%%%%%%
%%%%%%%%%%%%%%%%%%%%%%%%%%%%%%%%%%%%%%%%%%%%%%%%%%%%%%%%%%%%%%%%%%%%%%%%%%%%%%%%%%%%%%%%%%%%%%%%%%%%%%%%%%%%%%%

\section{Singularity-distance computation for Stewart-Gough manipulators}\label{sec:SGplatform}

A Stewart-Gough (SG) manipulator is a parallel robot consisting of a moving platform, which is connected over six 
telescopic legs to the base. These legs are anchored by spherical joints to the platform and the base (cf.\ Fig.\ \ref{fig:SG1}). 
If the prismatic joints of the legs are fixed, then the pin-jointed body-bar framework is in general rigid. It 
is well-known, that it has an infinitesimal flexibility if and only if the carrier lines of the six legs 
belong to a linear line complex \cite{Merlet}.

A detailed literature review on works dealing with the determination of the closest 
singular configuration to a given non-singular one, which is of interest for singularity-free path-planning and performance optimization of the robot,  
was done  by the author in \cite{WC_2019}. 
Most of these approaches (also the one presented in  \cite{WC_2019}) evaluate the closeness extrinsically (i.e. in the 
6 dimensional configuration space) and not intrinsically (i.e.\ in the 6 dimensional space of prismatic joints).  
Up to the knowledge of the author only one work of Zein et al.\ \cite{zein} determines a singularity-free cube in the 
joint space of a 3-RPR manipulator, which is the planar analogue of a SG platform. For a detailed comparison of 
extrinsic and intrinsic singularity distance measures for planar 3-RPR manipulators we refer to \cite{adit2}.

In the following we want to compute the singularity-distance within the 6-dimensional joint space of the 
manipulator. As a SG manipulator is an isostatic body-bar framework, this computation can be based on Theorem \ref{thm:neu} 
under the additional condition that 
the affine deformations of the platform and the base are restricted to direct isometries.
In this case the function of the strain energy density of a SG manipulator simplifies to:
\begin{equation}\label{eq:rellengchang}
u(\Vkt L')=\frac{1}{\sum_{i=1}^6 L_{i}}\sum_{i=1}^6\frac{({L'_{i}}^2-L_{i}^2)^2}{8L_{i}^3}  
\end{equation}
where $L_i$ (resp.\ $L_i'$)  denotes the length of the undeformed (resp.\ deformed) $i$th leg spanned by 
the platform anchor point $V_i$ and the corresponding base anchor point $V_{i+6}$. 
The base can be pinned down\footnote{In this context it should be mentioned that the results of the paper also hold for pinned 
frameworks  (cf.\ \cite[Section 3.3]{ark2020}).}, i.e. $\Vkt v'_i=\Vkt v_i$ for $i=7,\ldots, 12$, 
and for the platform we set up an affine moving frame with origin $V_1$ and the three vectors $\Vkt v_2-\Vkt v_1$, $\Vkt v_3-\Vkt v_1$ 
and  $(\Vkt v_2-\Vkt v_1)\times (\Vkt v_3-\Vkt v_1)$ under the assumption that $V_1,V_2,V_3$ are not collinear. 
Then one can compute the affine coordinates $(\xi_j,\upsilon_j,\zeta_j)$ of the  points $V_j$ with respect to this frame for $j=4,5,6$, which can 
be used for writting down the coordinate vector of $V'_j$ as follows:
\begin{equation}\label{eq:affkombi}
\Vkt v'_j=\Vkt v_1' + \xi_j(\Vkt v'_2-\Vkt v'_1) + \upsilon_j(\Vkt v'_3-\Vkt v'_1) + \zeta_j[(\Vkt v'_2-\Vkt v'_1)\times (\Vkt v'_3-\Vkt v'_1)].
\end{equation}
This affine transformation is an orientation preserving isometry if the three side conditions $e_i=0$ with
\begin{equation}
e_1:=\|\Vkt v'_2-\Vkt v'_3\|^2-\|\Vkt v_2-\Vkt v_3\|^2,\quad
e_2:=\|\Vkt v'_3-\Vkt v'_1\|^2-\|\Vkt v_3-\Vkt v_1\|^2\quad\text{and}\quad
e_3:=\|\Vkt v'_1-\Vkt v'_2\|^2-\|\Vkt v_1-\Vkt v_2\|^2
\end{equation}
hold true. 
Under consideration of Eq.\ (\ref{eq:affkombi}) one can write Eq.\ (\ref{eq:rellengchang}) in dependence of $\Vkt v'_1,\Vkt v'_2 ,\Vkt v'_3$, 
which is part of the Lagrangian 
\begin{equation}\label{eq:finlag}
F(\Vkt v'_1,\Vkt v'_2 ,\Vkt v'_3,\eta_1,\eta_2,\eta_3)= u(\Vkt v'_1,\Vkt v'_2 ,\Vkt v'_3) -\eta_1e_1-\eta_2e_2-\eta_3e_3.
\end{equation}

\begin{rmk}
By using the linear combination given in Eq.\ (\ref{eq:affkombi}) instead of the formulation of Eq.\ (\ref{us:lincomb}) the 
number of side conditions forcing an isometric transformation is reduced from 6 to 3; i.e.\  $e_1=e_2=e_3=0$. 
In addition this formulation already restricts to direct isometries.
Clearly, one can also use a parametrization of the Euclidean motion group (e.g.\ Study parameters) for the representation of $\Vkt v'_1,\ldots, \Vkt v'_6$. 
We use here this so-called point-based formulation as it turned out to have certain computational advantages (cf.\ \cite{adit}). \hfill $\diamond$
\end{rmk}

The formulation given in Eq.\ (\ref{eq:rellengchang}) rely on the change of the leg lengths relative to its initial length. As we are now working in the 
joint space of the SG manipulator also the following function of absolute changes in the leg lengths makes sense:
\begin{equation}\label{eq:abslengchang}
l(\Vkt L')=\sum_{i=1}^6({L'_{i}}^2-L_{i}^2)^2.
\end{equation}
Therefore one can also substitute $u(\Vkt v'_1,\Vkt v'_2 ,\Vkt v'_3)$ by  $l(\Vkt v'_1,\Vkt v'_2 ,\Vkt v'_3)$ in Eq.\ (\ref{eq:finlag}). 
For both Lagrangians, analogous considerations as done in the proof of Theorem \ref{thm:neu} show that the saddle realizations have to be 
shaky (as the bodies cannot reduce in dimension due to the enforced direct isometries). We demonstrate this in the following example.

\begin{example}
We study a SG manipulator, which is of interest for practical applications, as the positioning and orientation of the 
relative pose of the platform and the base is decoupled (cf.\ \cite[Section VI]{borras}). 
The moving platform has a semihexagonal shape (with central angles of $\pi/6$ and $\pi/2$, respectively) and the base is a truncated triangular pyramid 
(cf.\ Fig.\ \ref{fig:SG1}). The coordinates of the base anchor points $V_7,\ldots ,V_{12}$ with respect to the fixed frame are given by:

\begin{figure}
\begin{center} 
\begin{overpic}
    [width=50mm]{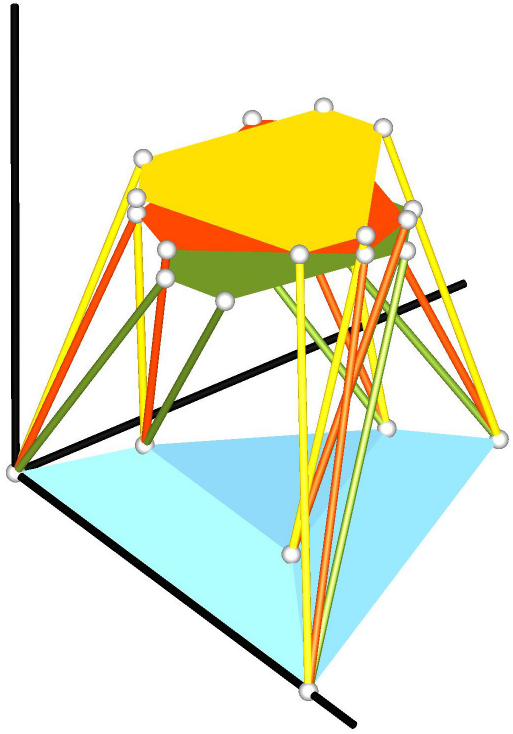}  
\begin{scriptsize}
\put(50,0){$x$}
\put(65,61){$y$}
\put(4,97){$z$}
\end{scriptsize} 		
  \end{overpic} 
	\quad
\begin{overpic}
    [width=50mm]{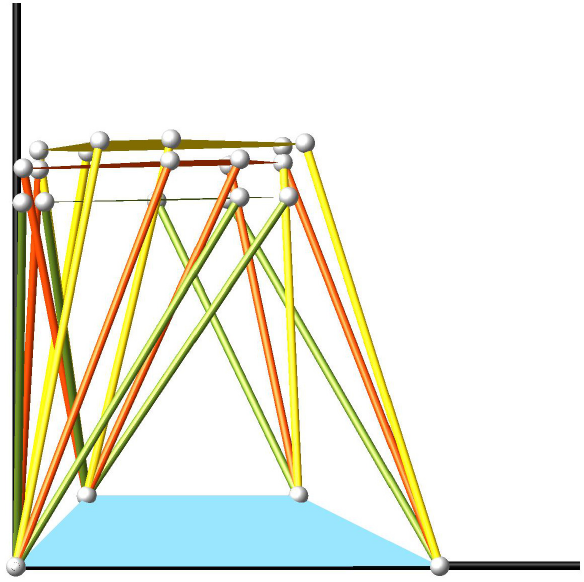}  
\begin{scriptsize}
\put(96,5.5){$y$}
\put(5,96){$z$}
\end{scriptsize} 		
  \end{overpic} 
	\quad
\begin{overpic}
    [width=50mm]{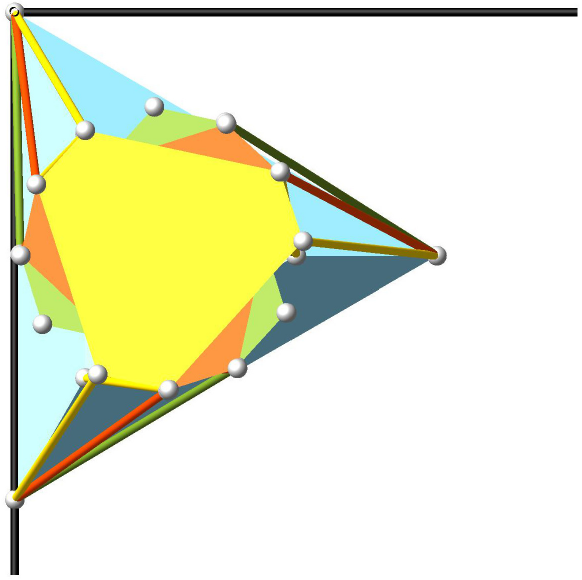}  
\begin{scriptsize}
\put(96,93){$y$}
\put(5,1){$x$}
\end{scriptsize} 		
  \end{overpic} 
\end{center} 
\hfill
\begin{scriptsize}
\end{scriptsize}
\caption{
Axonometric view (left), front view (center) and top view (right):
$G(\Vkt V_1)$ is illustrated in yellow, $G(\Vkt V_2)$ is displayed in green and $G(\Vkt V')$ w.r.t.\ $u(\Vkt L')$ is shown in red. 
Note that the realization $G(\Vkt V')$ w.r.t.\ $l(\Vkt L')$ is not shown here as it is too close (cf.\ Table \ref{tab1:sgp2}) to $G(\Vkt V')$ w.r.t.\ $u(\Vkt L')$. 
}\label{fig:SG1}	
\end{figure}

\begin{equation}
\Vkt v_7:=\begin{pmatrix} 0 \\ 0 \\ 0 \end{pmatrix}, \quad
\Vkt v_8:=\begin{pmatrix} \tfrac{\sqrt{3}}{2} \\ \tfrac{1}{2} \\ \tfrac{1}{2} \end{pmatrix}, \quad
\Vkt v_9:=\begin{pmatrix} 2\sqrt{3} \\ 0 \\ 0 \end{pmatrix}, \quad
\Vkt v_{10}:=\begin{pmatrix} \tfrac{3\sqrt{3}}{2} \\ \tfrac{1}{2} \\ \tfrac{1}{2} \end{pmatrix}, \quad
\Vkt v_{11}:=\begin{pmatrix} \sqrt{3} \\ 3 \\ 0 \end{pmatrix}, \quad
\Vkt v_{12}:=\begin{pmatrix} \sqrt{3} \\ 2 \\ \tfrac{1}{2} \end{pmatrix}.
\end{equation}
The vertices $V_1,\ldots ,V_{6}$ of the moving platform are determined by the three conditions
\begin{equation}
\|\Vkt v_2-\Vkt v_3\|^2=2,\quad
\|\Vkt v_3-\Vkt v_1\|^2=3,\quad
\|\Vkt v_1-\Vkt v_2\|^2=2-\sqrt{3}
\end{equation}
and the affine coordinates $(\xi_j,\upsilon_j,0)$  for $j=4,5,6$  given by
\begin{equation}
\xi_4:=-\tfrac{\sqrt{3}+1}{2},\quad \upsilon_4 := \tfrac{\sqrt{3}+1}{2}, \quad
\xi_5:=-\tfrac{3(\sqrt{3}+1)}{2}, \quad \upsilon_5 := \tfrac{\sqrt{3}+1}{2}, \quad
\xi_6:= -\sqrt{3}-2,\quad \upsilon_6 := 1.
\end{equation}
The input data is completed by the following six leg lengths:
\begin{equation}
L_1:=31/10, \quad 
L_2:=25/10,\quad 
L_3:=32/10,\quad 
L_4:=26/10,\quad
L_5:=315/100, \quad
L_6:=255/100.
\end{equation}

We compute the closest singularity with respect to both intrinsic metrics in the 6-dimensional joint space, which are given in 
Eq.\ (\ref{eq:rellengchang}) and Eq.\ (\ref{eq:abslengchang}), respectively. 
The computational data is summarized in Table \ref{tab1:sgp1}.

\begin{table}[h!]
\begin{center}
\begin{footnotesize}
\begin{tabular}[h]{l|cccccc}
Intrinsic metric  & \# tracked paths & $\# \mathcal{R}$ & $\# \mathcal{S}$ & $\varsigma(\Vkt L) =s(\Vkt L)$ \\ \hline\hline
$u(\Vkt L')$ 		& $26\,017$  & $124$  & $111$ &   $3.324106490339\cdot 10^{-5} $ \\ \hline 
$l(\Vkt L')$ 		& $25\,473$  & $122$  & $112$ &   $1.024890249080\cdot 10^{-1}$ \\ \hline 
\end{tabular}
\end{footnotesize}
\end{center}
\caption{Computational data: Note that the computation of the set $\mathcal{R}$ was done by a 
regeneration homotopy performed with Bertini. 
Note that the set $\mathcal{S}$ has to be filtered out from $\mathcal{R}$ by using a second-derivative test based
on the bordered Hessian \cite{spring} due to the three side conditions $e_1=e_2=e_3=0$.
} \label{tab1:sgp1}
\end{table}

For this input data the SG manipulator has 4 real solutions for the direct kinematics, whereby two solutions are out of 
interest as the platform is below the base. The other two undeformed realizations are illustrated in Fig.\ \ref{fig:SG1} and 
the corresponding vectors $\Vkt v_i=(x_i,y_i,z_i)^T$ (with respect 
to the fixed frame) of the platform anchor points $V_i$ for $i=1,2,3$ are given in Table \ref{tab1:sgp2}. 
This table also contains the corresponding coordinate entries of the two saddle realization  $G(\Vkt V')$, which 
imply  the singularity-distance with respect to the different metrics. 
It can easily be checked that in both realizations $G(\Vkt V')$ the lines of the six legs belong to a linear line complex \cite{Merlet}.

\begin{table}[h!]
\begin{center}
\begin{footnotesize}
\begin{tabular}[h]{c|cccc}
 & $\Vkt V_1$ & $\Vkt V_2$ & $\Vkt V'$ w.r.t.\ $u(\Vkt L')$ & $\Vkt V'$ w.r.t.\ $l(\Vkt L')$ \\ \hline\hline
 $x_1$ &  0.842928302224 	 	& 1.722185861193  	&  1.225741950663  & 1.221043138727  \\ \hline
 $x_2$ & 	1.227117667465  	& 2.215337180628  	&  1.731161637163  & 1.726370305617  \\ \hline
 $x_3$ & 	2.571092053711  	& 2.516815584869  	&  2.684588070901  & 2.680372134138  \\ \hline
 $y_1$ & 	0.505604311407 		& 0.043791592870  	&  0.160342690355  & 0.156045184954  \\ \hline
 $y_2$ & 	0.158839883661  	& 0.201000858188 		&  0.048866028718  & 0.044256807218  \\ \hline
 $y_3$ & 	0.594205901934  	& 1.582339155201  	&  1.092093613851  & 1.086968370169  \\ \hline
 $z_1$ & 	2.940040162536  	& 2.577238474600  	&  2.814950790353  & 2.824067090458  \\ \hline
 $z_2$ & 	2.950147373651  	& 2.583256403538  	&  2.823499874485  & 2.833916224311  \\ \hline
 $z_3$ & 	3.014872015394  	& 2.615119877650  	&  2.875019201615  & 2.885230039892  \\ \hline 
\end{tabular}
\end{footnotesize}
\end{center}
\caption{Coordinates of the undeformed realizations $G(\Vkt V_1)$ and  $G(\Vkt V_2)$ of the SG manipulator and of the 
shaky saddle realization  $G(\Vkt V')$ with respect to both
intrinsic metrics given in 
Eq.\ (\ref{eq:rellengchang}) and Eq.\ (\ref{eq:abslengchang}), respectively.}\label{tab1:sgp2}
\end{table}
\end{example}

\begin{rmk}
Note that the snapping octahedra of Wunderlich \cite{wunderlich_achtflach} imply further examples of octahedral 
hexapods with a high snapping capability.  
\hfill $\diamond$
\end{rmk}

%%%%%%%%%%%%%%%%%%%%%%%%%%%%%%%%%%%%%%%%%%%%%%%%%%%%%%%%%%%%%%%%%%%%%%%%%%%%%%%%%%%%%%%%%%%%%%%%%%%%%%%%%%%%%%%
%%%%%%%%%%%%%%%%%%%%%%%%%%%%%%%%%%%%%%%%%%%%%%%%%%%%%%%%%%%%%%%%%%%%%%%%%%%%%%%%%%%%%%%%%%%%%%%%%%%%%%%%%%%%%%%
%%%%%%%%%%%%%%%%%%%%%%%%%%%%%%%%%%%%%%%%%%%%%%%%%%%%%%%%%%%%%%%%%%%%%%%%%%%%%%%%%%%%%%%%%%%%%%%%%%%%%%%%%%%%%%%

\section{Conclusion}\label{sec:conclusion}

The first considerations in the design process of a pin-jointed body-bar framework  concern its geometry. 
In this paper we presented an index, which evaluates the framework geometry with respect to its capability to snap. 
As this so-called {\it snappability} only depends on the intrinsic framework geometry, it enables a fair comparison of 
frameworks differing in the combinatorial structure, inner metric and types of structural elements 
(bars, polygonal panels or polyhedral bodies).  Therefore it can serve 
engineers as a criterion in an early design stage for the geometric layout of frameworks without or with 
the capability to snap, depending on the application. In the context of multistability the index can for example be used 
for the geometric layouting of unit-cells/building-blocks of periodic 
metamaterials (e.g.\ \cite{haghpanah,shang,yang,danso,klein}) and origami structures (e.g.\ hypar tessellation\cite{liu}, waterbomb cylinder tessellation \cite{gillman} or the Kresling pattern 
with a circular arrangement to closed strips, which correspond to snapping antiprisms \cite{wunderlich_antiprism} and can be composed to cylindrical towers \cite{liu2017,kresling,cai}, 
or a helical  arrangement according to C.R.\ Calladine studied in \cite{guestII,guestI,guestIII,wittenburg}). 
For the resulting framework geometry one can specify in a later design phase the material and 
dimensioning (e.g.\ profile of bars)  of the framework elements in such a way that the wanted effects are even increased. 

We also demonstrated on basis of parallel manipulators of Stewart-Gough type, that our approach can be used for the computation of intrinsic singular-distances.
The computational aspects of the index were also illuminated and an allover algorithm  was presented
for isostatic frameworks (under affine deformation of the bodies). 
Note that this algorithm does not take the bending of panels into account and also 
ignores the collision of bars and/or bodies during the deformation.
For overbraced frameworks we gave a strategy for the snappability computation but the result is 
without guarantee. In this case one can use the proposed lower bounds, which always hold true. \bigskip

\noindent
{\bf Acknowledgments.} 
The  author  is supported by grant P\,30855-N32 of the Austrian Science Fund FWF 
as well as by project F77 (SFB ``Advanced Computational Design'', subproject SP7).
Moreover, the author thanks Miranda Holmes-Cerfon for providing the data of the four-horn example 
discussed in Appendix \ref{sec:FH}. Further thanks to Aditya Kapilavai for his help in outsourcing some of the 
Bertini computations.

%%%%%%%%%%%%%%%%%%%%%%%%%%%%%%%%%%%%%%%%%%%%%%%%%%%%%%%%%%%%%%%%%%%%%%%%%%%%%%%%%%%%%%%%%%%%%%%%%%%%%%%%%%%%%%%
%%%%%%%%%%%%%%%%%%%%%%%%%%%%%%%%%%%%%%%%%%%%%%%%%%%%%%%%%%%%%%%%%%%%%%%%%%%%%%%%%%%%%%%%%%%%%%%%%%%%%%%%%%%%%%%
%%%%%%%%%%%%%%%%%%%%%%%%%%%%%%%%%%%%%%%%%%%%%%%%%%%%%%%%%%%%%%%%%%%%%%%%%%%%%%%%%%%%%%%%%%%%%%%%%%%%%%%%%%%%%%%

\appendix

\section{Examples of  snapping model flexors}\label{appendix}

Within this appendix we discuss the examples of the Siamese dipyramid and the four-horn in detail and compare the 
obtained results with existing ones reported in the literature. 

\subsection{Siamese dipyramid}\label{sec:SD}

The original Siamese dipyramid (SD) introduced by Goldberg \cite[page 167]{goldberg} consists of 20 equilateral triangles with an edge lengths of 1, 
which are arranged in two dipyramids with a hexagonal equatorial polygon (see Fig.\ \ref{fig:1_SD}, left). 
Note that we assume that the SD has a reflexion-symmetry with respect to two orthogonal planes. 
We can insert a coordinate frame in such a way that these planes are the $xy$-plane and the $yz$-plane, respectively.

\begin{figure}[t]
\begin{center} 
\begin{minipage}{110mm}
\begin{overpic}
    [width=55mm]{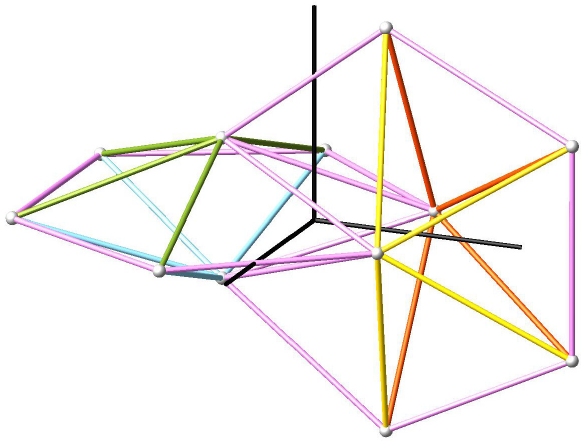}  
\begin{scriptsize}
\put(0.5,33){$A_1$}
\put(10.7,50.5){$\overline{A}_1$}
\put(34.5,22){$\overline{C}_2$}
\put(23,25){$B_1$}
\put(35,55){${C}_2$}
\put(56,52){$\overline{B}_1$}
\put(55.5,72){$z$}
\put(47,35){$x$}
\put(91,33){$y$}
\put(69,71){$B_2$}
\put(59.2,0){$\overline{B}_2$}
\put(97,7.7){$\overline{A}_2$}
\put(97,53){$A_2$}
\put(59.1,27.2){${C}_1$}
\put(74.1,43.2){$\overline{C}_1$}
\end{scriptsize} 		
  \end{overpic} 
	\quad
\begin{overpic}
    [width=55mm]{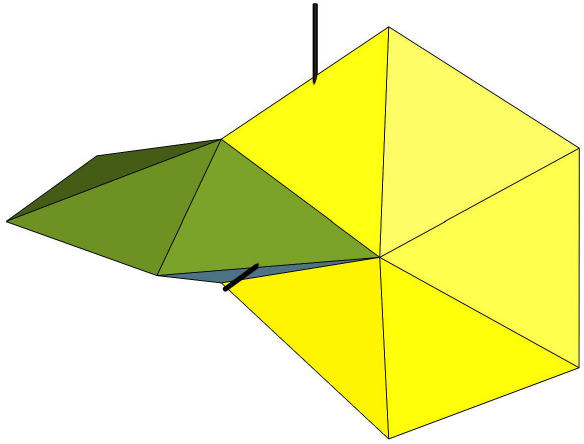}  
\begin{scriptsize}
\end{scriptsize} 		
  \end{overpic} 
\end{minipage}\hfill	
\begin{minipage}{40mm}
\begin{overpic}
    [width=32mm]{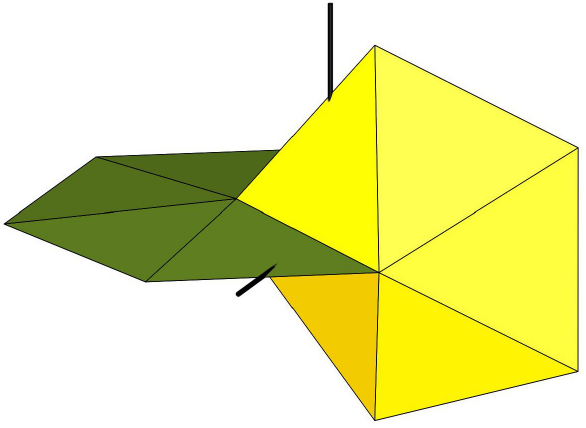}  
  \end{overpic}
\begin{overpic}
    [width=32mm]{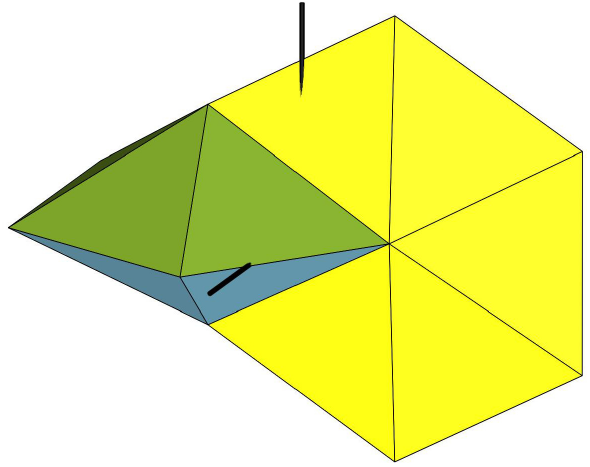}  
  \end{overpic}	
\end{minipage}	
\end{center} 
\hfill
\begin{scriptsize}
\end{scriptsize}
\caption{Left: Illustration of the realization $G(\Vkt V_1)$  as a bar-joint framework together with the coordinate frame, 
where the axes are of length one. Center: The same configuration as on the left side but illustrated with panels instead of bars. 
Right: At the top the second realization $G(\Vkt V_2)$ is visualized and at the bottom the third one $G(\Vkt V_3)$.
}\label{fig:1_SD}	
\end{figure}

\begin{figure}[t]
\begin{center} 
\begin{overpic}
    [width=55mm]{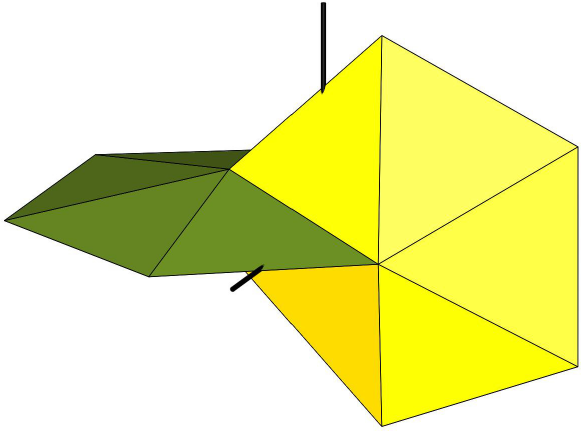}  
\begin{scriptsize}
\end{scriptsize} 		
  \end{overpic} 
	\qquad
\begin{overpic}
    [width=55mm]{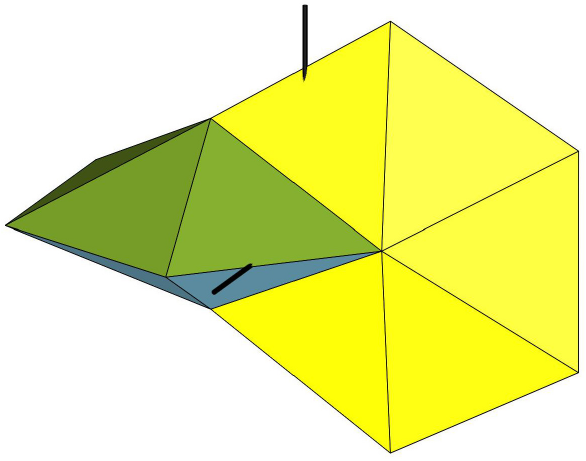}  
\begin{scriptsize}
\end{scriptsize} 		
  \end{overpic} 
\end{center} 
\hfill
\begin{scriptsize}
\end{scriptsize}
\caption{On the left (resp.\ right) side the shaky saddle realizations $G(\Vkt V')$
(resp.\ $G(\Vkt V'')$) is illustrated, which is passed during the snap between $G(\Vkt V_1)$ and 
$G(\Vkt V_2)$ (resp.\  $G(\Vkt V_3)$).  
An animation of the snapping behavior can be downloaded from \cite{data2021}.  
}\label{fig:2_SD}	
\end{figure}

Then the vertices, which are noted according to Fig.\ \ref{fig:1_SD}, can be coordinatized as follows: 
\begin{align}
A_1&=(x_1,y_1,0)^T &\quad \overline{A}_1&=(-x_1,y_1,0)^T &\quad A_2&=(0,u_1,v_1)^T &\quad \overline{A}_2&=(0,u_1,-v_1)^T \\
B_1&=(x_2,y_2,0)^T &\quad \overline{B}_1&=(-x_2,y_2,0)^T &\quad B_2&=(0,u_2,v_2)^T &\quad \overline{B}_2&=(0,u_2,-v_2)^T \\
C_1&=(x_3,y_3,0)^T &\quad \overline{C}_1&=(-x_3,y_3,0)^T &\quad C_2&=(0,u_3,v_3)^T &\quad \overline{C}_2&=(0,u_3,-v_3)^T 
\end{align}
In addition we can assume without loss of generality
that $u_3=-y_3$ holds; i.e. the vertices $C_1$ and $\overline{C}_1$ have the same 
distance from the $xz$-plane as the points $C_2$ and $\overline{C}_2$. 
Therefore the total number of unknowns is 11. 

It is well-known \cite{goldberg,gorkavyy}  that the SD can snap out of the symmetric\footnote{With respect to the height of the two dipyramids.} 
realization $G(\Vkt V_1)$ (cf.\ Fig.\ \ref{fig:1_SD}, left/center)  into one of the two asymmetric realizations $G(\Vkt V_2)$ 
and $G(\Vkt V_3)$, respectively (cf.\ Fig.\ \ref{fig:1_SD}, right). 
A simple procedure for the computation of these three undeformed realizations is given in \cite{gorkavyy}. 
We only give the numerical values of these configurations in Table \ref{tab:siam}.

\begin{table}
\begin{center}
\begin{footnotesize}
\begin{tabular}[h]{c|cccc}
 & $\Vkt V_1$ & $\Vkt V_2$ & $\Vkt V'$ (bar-joint) & $\Vkt V'$ (panel-hinge) \\ \hline\hline
 $x_1$ & -0.5							 & -0.5								& -0.501499108259  	& -0.501518680610  \\ \hline
 $x_2$ & -0.940024410925 	 & -0.997453425271 		& -0.979262620688 	& -0.979200605399  \\ \hline
 $x_3$ & -0.327267375345	 & -0.492373245899 		& -0.432379113707 	& -0.432385909548  \\ \hline
 $y_1$ & -1.245032582350	 & -1.296828963170 		& -1.282364611843 	& -1.282380966624  \\ \hline
 $y_2$ & -0.347046770776	 & -0.429338277522 		& -0.400464708308  	& -0.400381868195  \\ \hline
 $y_3$ &  0.443224584739		& 0.433734148410 		&  0.440320490435 	&  0.440337472811  \\ \hline
 $u_1$ &  1.245032582350		& 1.146172627664 		&  1.193026874842 	&  1.192998833484  \\ \hline
 $u_2$ &  0.347046770776		& 0.205744933405 		&  0.273852988315 	&  0.273969061570  \\ \hline
 $v_1$ &  0.5								& 0.5								&  0.498976790866 	&  0.498943974578  \\ \hline
 $v_2$ &  0.940024410925		& 0.839993752693 		&  0.887603513697 	&  0.887698700553  \\ \hline
 $v_3$ &  0.327267375345		& 0.071185256433 		&  0.190566289070 	&  0.190551925851  \\ \hline
\end{tabular}
\end{footnotesize}
\end{center}
\caption{Coordinates of the undeformed realizations $G(\Vkt V_1)$ and  $G(\Vkt V_2)$, respectively, and of the 
shaky saddle realization  $G(\Vkt V')$ with respect to the two different interpretations.  
The coordinates of $G(\Vkt V_3)$ and $G(\Vkt V'')$ can be obtained from $G(\Vkt V_2)$ and $G(\Vkt V')$
by the following exchange of coordinate entries:
$x_i\leftrightarrow -v_i$ and  $y_i\leftrightarrow -u_i$ for $i=1,2,3$. 
}\label{tab:siam}
\end{table}

\subsubsection{Isostaticity and shakiness}\label{sec:isoshaky}
The bar-joint framework of the SD is isostatic, because every closed polyhedral surface of genus 0 with triangular faces
has this property\footnote{This can easily be followed from Euler's polyhedral formula.}. This isostaticity 
remains intact under the assumption of the 2-fold reflexion-symmetry, as it only corresponds to the 
identification of some of the coordinates within the structure. 

The SD is in a shaky configuration if the rank of its rigidity matrix $\Vkt R_{G(\Vkt V)}$ is less than 30. 
From this one can compute the algebraic characterization, which corresponds to the vanishing of the following polynomial

{\begin{equation}
\underbrace{v_3(x_1y_2 + x_1y_3 - x_2y_1 - x_2y_3)}_{\text{copl}(C_2,\overline{C}_2,A_1,B_1)}
\underbrace{v_3(2x_2y_3 - x_3y_2 - x_3y_3)}_{\text{copl}(C_2,\overline{C}_2,B_1,C_1)}
\underbrace{x_3(u_1v_2 - u_2v_1 + v_1y_3 - v_2y_3)}_{\text{copl}(C_1,\overline{C}_1,A_2,B_2)}
\underbrace{x_3(u_2v_3 + 2v_2y_3 - v_3y_3)}_{\text{copl}(C_1,\overline{C}_1,B_2,C_2)}S
\end{equation}}
where $\text{copl}$ indicates the coplanarity of the vertices given in the round bracket. 
For the condition $x_3=0$ or $v_3=0$ one of the two dipyramids is even in a flat configuration. 
Beside these geometric simple cases of shakiness we also have the factor\footnote{It can be downloaded from \cite{data2021}.}
$S$, which denotes an algebraic expression with 374 terms and a total degree of 9. 
Interestingly $S$ is only quadratic with respect to the two non-zero coordinates of the following points: $A_i$, $B_i$, $\overline{A}_i$ and $\overline{B}_i$ for $i=1,2$.  
For the points $A_i$ and $\overline{A}_i$ it is even linear in $x_1$ (for $i=1$) or $v_1$ (for $i=2$). 

Moreover, if the infinitesimal flexibility of the SD interpreted as bar-joint framework does not result from the degeneration of a triangular  substructure into a collinear arrangement, then 
the corresponding panel-hinge framework is also shaky.

\subsubsection{Interpretation as a bar-joint structure}
We set up our formulation of the deformation energy density $u$  
under the assumption that the SD keeps the 2-fold reflexion-symmetry during the deformation. 
The obtained system of $11$ equations $\nabla u$ results in $177\,147$ paths within a total degree homotopy (cf.\ \cite{bates}). 
The path tracking done by the software Bertini ends up in $22\,153$ finite real solutions (set $\mathcal{R}$). 
After reduction to the set $\mathcal{S}$ we remain with $21\,904$ solutions. This set is the input for  
the algorithm described in Section \ref{isostatic}, which outputs the 
two shaky saddle realizations $G(\Vkt V')$ and  $G(\Vkt V'')$, respectively, displayed in Fig.\ \ref{fig:2_SD}. 
The numerical values of these realizations are also given in Table \ref{tab:siam}. 

We get $s(\Vkt L)=s(\Vkt V_1)=s(\Vkt V_2)=s(\Vkt V_3)=1.661376004928\cdot 10^{-6}$ and due to Theorem \ref{thm:ident} (under consideration of 
Corollary \ref{rep:shaky}) this value also equals $\varsigma(\Vkt L)=\varsigma(\Vkt V_1)=\varsigma(\Vkt V_2)=\varsigma(\Vkt V_3)$. \bigskip

\noindent
{\bf Comparison with the results obtained in \cite{almost}:}
According to \cite{almost} there exists a realization  within the deformation path between two snapping realizations 
$G(\Vkt V_1)$ and $G(\Vkt V_{2/3})$, where the value for  $e:=\sqrt{\sum_{ij}(L_{ij}^2-L_{ij}'^2)^2}$ 
is greater or equal to a value $e_{min}^*$ given by $2.98\cdot 10^{-4}$ for $G(\Vkt V_1)$ and $3.35\cdot 10^{-4}$ for $G(\Vkt V_{2/3})$, respectively. 
As noted in \cite{almost} the value $e_{min}^*$ is a minimum bound and does not say how close this bound is to the true barrier. 

By our approach we can determine this true barrier value numerically as $e_{min}=1.996823079751\cdot 10^{-2}$, which has to be the same for the three realizations $G(\Vkt V_i)$ for $i=1,2,3$ due to the snapping between these 
realizations. 
Therefore the true barrier is approximately 67 times and 60 times, respectively, larger then the given $e_{min}^*$ value.

From $e_{min}^*$ one can also approximate the length $\Delta L_{\diameter}^*$, which an edge must change in average according to \cite[Example 2]{almost}, 
yielding the values $2.720355368942 \cdot 10^{-5}$ and $3.058117612737\cdot 10^{-5}$, respectively. 
Based on $G(\Vkt V')$ we can also compute the absolute average change $\Delta L_{\diameter}^{abs}=1.673630072024\cdot 10^{-3}$ 
(which equals also the relative average change $\Delta L_{\diameter}^{rel}$ as the initial length of the edges is 1). Therefore it is
62 times and 55 times, respectively, larger than the values resulting from the data given in  \cite{almost}. \bigskip

\noindent
{\bf Comparison with the results obtained in \cite{gorkavyy}:}
The intrinsic index given in \cite{gorkavyy} equals $\Delta L_{max}^*=3.94\cdot 10^{-3}$ and correspond to the maximal relative (with respect to the initial length of 1) change  in the length of an edge during the deformation. 
But it should be noted that the setup of the pyramids in \cite{gorkavyy} is more restrictive than ours, as all edges through the vertices  $C_1$, $\overline{C}_1$ and $C_2$, $\overline{C}_2$ cannot be deformed and have a fixed length of 1;  all other edges are restricted to have the same length. 

Based on $G(\Vkt V')$ we can also compute this maximal relative change $\Delta L_{max}^{rel}$ of an edge as $2.998216519082\cdot 10^{-3}$ (which equals also the maximal absolute change $\Delta L_{max}^{abs}$ as the initial length of the edges is 1). 
Therefore the value reported in \cite{gorkavyy} is $31\%$ larger than ours.

\begin{rmk} 
In \cite{gorkavyy} and \cite{almost} also some indices are given to estimate/quantify the variations of the spatial shape of the snapping framework. Due to their extrinsic 
nature they cannot give information about the 
snappability, which only depends on the 
intrinsic geometry. \hfill $\diamond$
\end{rmk}

\subsubsection{Interpretation as a panel-hinge structure}
The same study can also be done by considering the SD as a polyhedral surface composed of 
triangular panels with a Poisson ratio of $\nu=1/2$. The tracking of the $177\,147$ paths of a total degree homotopy using Bertini ends up 
in 20305 real solutions, which can be reduced to $20\,056$ solutions of the set $\mathcal{S}$. 
In this case we get $s(\Vkt L)=s(\Vkt V_i)=\varsigma(\Vkt V_i)=\varsigma(\Vkt L)=4.466362657431\cdot 10^{-6}$ for $i=1,2,3$.

%%%%%%%%%%%%%%%%%%%%%%%%%%%%%%%%%%%%%%%%%%%%%%%%%%%%%%%%%%%%%%%%%%%%%%%%%%%%%%%%%%%%%%%%%%%%%%%%%%%%%%%%%%%%%%%
%%%%%%%%%%%%%%%%%%%%%%%%%%%%%%%%%%%%%%%%%%%%%%%%%%%%%%%%%%%%%%%%%%%%%%%%%%%%%%%%%%%%%%%%%%%%%%%%%%%%%%%%%%%%%%%
%%%%%%%%%%%%%%%%%%%%%%%%%%%%%%%%%%%%%%%%%%%%%%%%%%%%%%%%%%%%%%%%%%%%%%%%%%%%%%%%%%%%%%%%%%%%%%%%%%%%%%%%%%%%%%%

\begin{figure}[t]
\begin{center} 
\begin{minipage}{110mm}
\begin{overpic}
    [width=55mm]{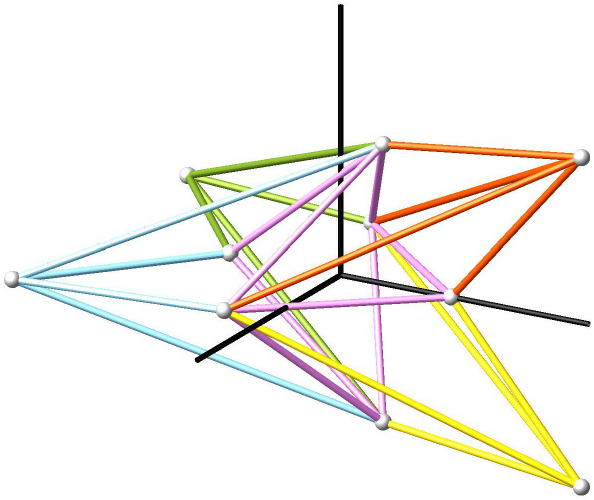}  
\begin{scriptsize}
\put(0.5,31.5){$A_1$}
\put(26,57){$\overline{A}_1$}
\put(98,5.5){$\overline{A}_2$}
\put(98,60.5){${A}_2$}
\put(31,43){$B_1$}
\put(73.3,38.4){$B_2$}
\put(32.5,27.6){$C_1$}
\put(59.5,8){$\overline{C}_2$}
\put(60,62.5){${C}_2$}
\put(59,80){$z$}
\put(65,50.5){$\overline{C}_1$}
\put(97,26){$y$}
\put(30,21){$x$}
\end{scriptsize} 		
  \end{overpic} 
	\quad
\begin{overpic}
    [width=55mm]{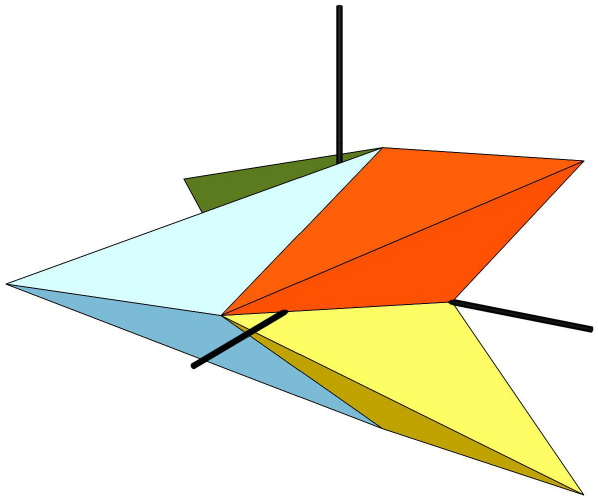}  
\begin{scriptsize}
\end{scriptsize} 		
  \end{overpic} 
\end{minipage}\hfill	
\begin{minipage}{40mm}
\begin{overpic}
    [width=35
		mm]{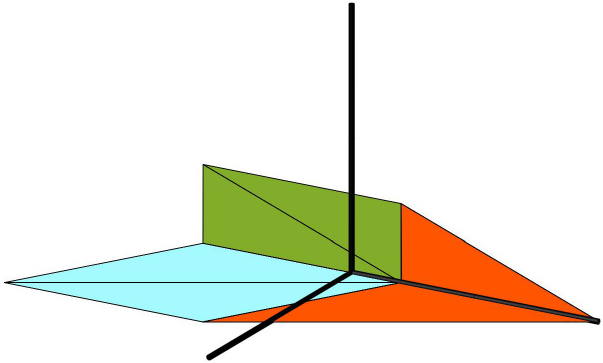}  
  \end{overpic}
\begin{overpic}
    [width=28mm]{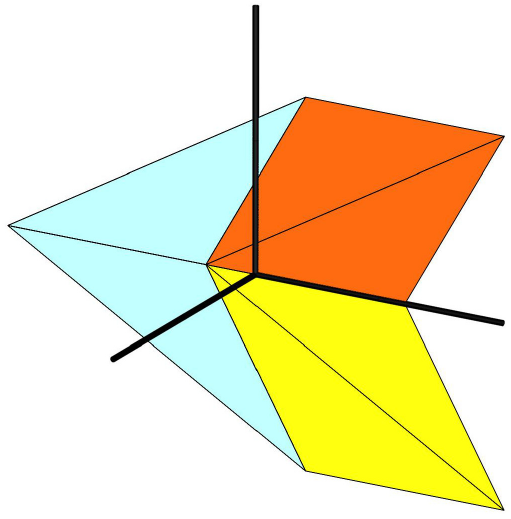}  
  \end{overpic}	
\end{minipage}	
\end{center} 
\hfill
\begin{scriptsize}
\end{scriptsize}
\caption{Left: Illustration of the realization $G(\Vkt V_1)$ of the original four-horn as a bar-joint framework together with the coordinate frame, 
where the axes are of length one. Center: The same configuration as on the left side but illustrated with panels instead of bars. 
Right: The two flat realizations are visualized, where $G(\Vkt V_2)$ is displayed at the top and $G(\Vkt V_3)$  at the bottom. 
}\label{fig:1_FH}	
\end{figure}

\begin{figure}[t]
\begin{center} 
\begin{overpic}
    [width=55mm]{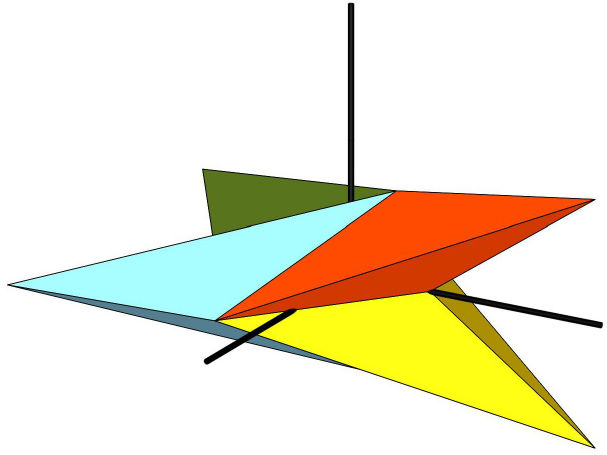}  
\begin{scriptsize}
\end{scriptsize} 		
  \end{overpic} 
	\qquad
\begin{overpic}
    [width=55mm]{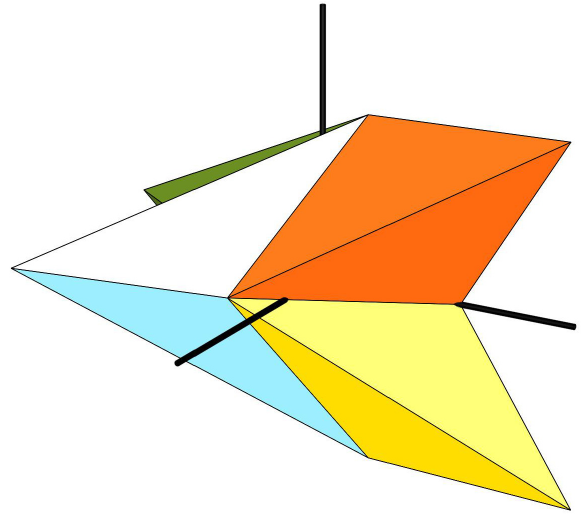}  
\begin{scriptsize}
\end{scriptsize} 		
  \end{overpic} 
\end{center} 
\caption{On the left (resp.\ right) side the shaky saddle realizations $G(\Vkt V')$
(resp.\ $G(\Vkt V'')$) of the design FH$_1$ is illustrated, which is passed during the snap between $G(\Vkt V_1)$ and 
$G(\Vkt V_2)$ (resp.\  $G(\Vkt V_3)$). 
An animation of the snapping behavior can be downloaded from \cite{data2021}.  
}\label{fig:2_FH}	
\end{figure}

\subsection{Four-horn}\label{sec:FH}

The original four-horn (FH) was introduced by Casper Schwabe at the {\it Ph\"anomena} exposition 1984 in Z\"urich, Swizerland 
(cf.\ Fig.\ \ref{fig:1_FH}).  
From the combinatorial point of view the FH equals a SD with pentagonal equatorial polygons. In contrast to a SD, a FH does not consist of congruent equilateral face-triangles but of 
congruent isosceles ones where $\alpha$ denotes the angle enclosed by the base of length $b>0$ and the leg of length $a>0$. 
Under consideration of the two-fold reflexion-symmetry with respect to two orthogonal planes, we can 
insert a Cartesian frame in such a way, that the vertices, which are noted according to Fig.\ \ref{fig:1_FH}, are coordinatized as follows: 
\begin{align}
A_1&=(x_1,y_1,0)^T &\quad \overline{A}_1&=(-x_1,y_1,0)^T &\quad B_1&=(0,y_2,0)^T &\quad C_1&=(x_3,y_3,0)^T &\quad \overline{C}_1&=(-x_3,y_3,0)^T \\
A_2&=(0,u_1,v_1)^T &\quad \overline{A}_2&=(0,u_1,-v_1)^T &\quad B_2&=(0,u_2,0)^T &\quad C_2&=(0,u_3,v_3)^T &\quad \overline{C}_2&=(0,u_3,-v_3)^T 
\end{align}
As in Appendix \ref{sec:SD} we can assume without loss of generality that $u_3=-y_3$ holds. 

It is well-known that the FH can snap out of the symmetric realization $G(\Vkt V_1)$
into one of the two flat realizations $G(\Vkt V_2)$ and $G(\Vkt V_3)$, respectively, which are both shaky due to their planarity.  
According to \cite{schwabe} these three undeformed realizations, which are displayed in Fig.\ \ref{fig:1_FH}, 
exist for all choices of $a$ and $b$ with $2a>b$. 

As done in \cite{schwabe} we will distinguish three different designs FH$_i$ $i=1,2,3$ of four-horns, which differ in the lengths of the leg $a_i$ and base $b_i$ with 
\begin{align}
a_1&=3\sqrt{2} + 6 - \frac{3}{2}\sqrt{20 + 14\sqrt{2}} 	&\quad  a_2&=6 - 3\sqrt{3}  &\quad a_3&=6\sqrt{3} + 12 -\frac{(9\sqrt{3} + 15)\sqrt{2}}{2} \\ 
b_1&=3\sqrt{20 + 14\sqrt{2}} -6\sqrt{2} - 9  						&\quad  b_2&= 6\sqrt{3}-9  	&\quad b_3&=(9\sqrt{3} + 15)\sqrt{2} - 12\sqrt{3} - 21
\end{align}
For these values, which result in an average edge length of 1, 
we get the angles $\alpha_1=22.5^{\circ}$ (the original design of Schwabe), $\alpha_2=30^{\circ}$ and $\alpha_3=15^{\circ}$, respectively. 

How the coordinates of the vertices can be computed for the two flat realizations $G(\Vkt V_2)$ and $G(\Vkt V_3)$ and the symmetric realization $G(\Vkt V_1)$ of these three designs FH$_i$ can be looked up in \cite{schwabe}. We only give the numerical values of these realizations in the Tables \ref{tab:FH1}--\ref{tab:FH3}.

\subsubsection{Isostaticity and shakiness}

The FH is isostatic for the same reasons as pointed out in \ref{sec:isoshaky}. 
Moreover, we can also determine the algebraic condition of shakiness in an analogous way, which yields:
\begin{equation}
x_1x_3^2v_1v_3^2
\underbrace{v_3(2x_1y_3 - x_3y_1 - x_3y_3)}_{\text{copl}(C_2,\overline{C}_2,A_1,C_1)}
\underbrace{x_3(u_1v_3 + 2v_1y_3 - v_3y_3)}_{\text{copl}(C_1,\overline{C}_1,A_2,C_2)}S=0
\end{equation}
where $x_1=0$ means that the triangles $(A_1,B_1,C_2)$ and $(A_1,B_1,\overline{C}_2)$ coincide with $(\overline{A}_1,B_1,C_2)$  and $(\overline{A}_1,B_1,\overline{C}_2)$, respectively.
The same holds for the condition $v_1=0$ by swapping the indices $1$ and $2$ for the above given triangles.
For the condition $x_3=0$ or $v_3=0$ two out of the four horns are in a flat configuration. 
Beside these geometric simple cases of shakiness we also have the factor\footnote{It can be downloaded from \cite{data2021}.}
$S$, which denotes an algebraic expression with 110 terms and a total degree of 9. 
Again $S$ is only quadratic with respect to the two non-zero coordinates of $A_i$ and $\overline{A}_i$, respectively, for $i=1,2$.

\subsubsection{Interpretation as a bar-joint structure}

For the three designs FH$_i$ for $i=1,2,3$ we compute similar to the SD example  given in Appendix
 \ref{sec:SD} the shaky saddle realizations 
$G(\Vkt V')$ and  $G(\Vkt V'')$, respectively, which are displayed in Fig.\ \ref{fig:2_FH} for FH$_1$. 
The numerical values of these realizations are also given in the Tables \ref{tab:FH1}--\ref{tab:FH3}. 
For all three designs the relation  $s(\Vkt L)=s(\Vkt V_i)=\varsigma(\Vkt V_1)$ holds true for $i=1,2,3$ (due to Theorem \ref{thm:ident} under consideration of
Corollary \ref{rep:shaky}) as well as $\varsigma(\Vkt L)=\varsigma(\Vkt V_2)=\varsigma(\Vkt V_3)=0$.
Moreover, we calculated for all three designs FH$_i$ the additional values $e_{min}$,  $\Delta L_{\diameter}^{abs}$, $\Delta L_{\diameter}^{rel}$, $\Delta L_{max}^{abs}$  and $\Delta L_{max}^{rel}$ as in  the  case of the SD. For a better comparison they are arranged in the Tables \ref{tab:FH_data1} and \ref{tab:FH_data2},  respectively. \bigskip

\begin{table}
\begin{center}
\begin{footnotesize}
\begin{tabular}[h]{c|ccccc}
 & \# tracked paths & \# $\mathcal{R}$ & \# $\mathcal{S}$ & $s(\Vkt L)=\varsigma(\Vkt V_1)$  & $e_{min}$ 
\\ \hline\hline
FH$_1$ & $19\,683$ & $924$ & $863$ & $1.753810068479\cdot 10^{-8}$  & $2.503636587824\cdot 10^{-3}$ 
\\ \hline
FH$_2$ & $19\,683$ & $917$ & $819$ & $2.035395987407\cdot 10^{-7}$  & $1.663070753397\cdot 10^{-2}$ 
\\ \hline
FH$_3$ & $19\,683$ & $923$ & $897$ & $9.864008781699\cdot 10^{-11}$ & $1.944647875494\cdot 10^{-4}$ 
\\ \hline
\end{tabular}
\end{footnotesize}
\end{center}
\caption{Computational data for the three designs FH$_1$, FH$_2$ and FH$_3$. Note that the computation of the set $\mathcal{R}$ was done by a 
total degree homotopy using Bertini.}\label{tab:FH_data1}
\end{table}

\begin{table}
\begin{center}
\begin{footnotesize}
\begin{tabular}[h]{c|cccc}
 &  $\Delta L_{\diameter}^{abs}$ & $\Delta L_{\diameter}^{rel}$ & $\Delta L_{max}^{abs}$  & $\Delta L_{max}^{rel}$   \\ \hline\hline
FH$_1$ & $1.755195468044\cdot 10^{-4}$ & $1.684100667119\cdot 10^{-4}$ & $2.932163649725\cdot 10^{-4}$ & $2.318551827789\cdot 10^{-4}$\\ \hline
FH$_2$ & $1.219270735675\cdot 10^{-3}$ & $1.174975973629\cdot 10^{-3}$ & $2.061810888944\cdot 10^{-3}$ & $1.830418324804\cdot 10^{-3}$\\ \hline
FH$_3$ &  $1.315683847557\cdot 10^{-5}$ & $1.257398895852\cdot 10^{-5}$ & $2.221096435873\cdot 10^{-5}$ & $1.598717509714\cdot 10^{-5}$\\ \hline
\end{tabular}
\end{footnotesize}
\end{center}
\caption{Continuation of Table \ref{tab:FH_data1}.}\label{tab:FH_data2}
\end{table}

\noindent
{\bf Comparison with the  method  presented in \cite{almost}:} 
According to \cite{private} the minimum bound $e_{min}^*$ of FH$_1$'s realization $G(\Vkt V_1)$ equals $4.0458\cdot 10^{-4}$, 
which is approximately 1/6 of the true barrier $e_{min}$ (cf.\ Table \ref{tab:FH_data1}). 
Moreover, $e_{min}^*$ implies an approximation of the absolute length $\Delta L_{\diameter}^*=4.129227333\cdot 10^{-5}$ an edge must change in average, which is about $23\%$ of the value  $\Delta L_{\diameter}^{abs}$ (cf.\ Table \ref{tab:FH_data1}).

\begin{rmk}
For the flat realizations the method of  \cite{almost} does not work, as they are not pre-stressed stable. 
Therefore we cannot compare the method of  \cite{almost} with the values 
obtained by our method regarding $G(\Vkt V_2)$ and $G(\Vkt V_3)$, respectively. \hfill $\diamond$
\end{rmk}

\noindent
{\bf Comparison with the results obtained in \cite{schwabe}:}             
The authors of \cite{schwabe} sliced the four-horn along the polylines 
$C_1C_2\overline{C}_1$ and $C_1\overline{C}_2C_1$ with exception of the points $C_1$ and $\overline{C}_1$. In this way they
get two two-horns, which are linked over the points $C_1$ and $\overline{C}_1$. 
Maintaining the 2-fold reflexion-symmetry, the resulting structure has a one-parametric mobility. 
Apart from the configurations $G(\Vkt V_1)$, $G(\Vkt V_2)$ and $G(\Vkt V_3)$ the points on both two-horns, 
which correspond to the 
point $C_2$ do not coincide. This mismatch of points is measured by the relative error in the z-coordinate. 
The maximum of this relative error during the flexion of the two two-horns between the two flat configurations 
equals the index 
given in  \cite{schwabe}. Therefore this index is also of extrinsic nature, but the authors of \cite{schwabe}
also noted an empirical rule of thumb without further explanation, which reads as $\delta^*=0.018\cdot(\alpha/10)^4\%$
where $\alpha$ has to be inserted in degree. This formula only depends on the intrinsic geometry of the 
framework. 

For our considered values $\alpha_i$ for $i=1,2,3$ we get $\delta_1^*=0.4613203125\%$, $\delta_2^*=1.458\%$ and 
$\delta_3^*=0.091125\%$. As this index $\delta^*$ cannot be compared one-to-one with any of our given values, 
we can evaluate the index $\delta^*$ by considering the relation $\delta_1^*:\delta_2^*:\delta_3^*$. 
It can easily be seen that this relation does not go along with the corresponding relation of any of the 
values  $s(\Vkt L)$, $e_{min}$,  $\Delta L_{\diameter}^{abs}$, $\Delta L_{\diameter}^{rel}$, 
$\Delta L_{max}^{abs}$  and $\Delta L_{max}^{rel}$, respectively, given in Tables \ref{tab:FH_data1} and \ref{tab:FH_data2}.

\subsubsection{Interpretation as a panel-hinge structure}
The same study can also be done by considering the FH as a polyhedral surface composed of 
triangular plates with a Poisson ratio of $\nu=1/2$.
We track for each of the three designs $19\,683$ paths of a total degree homotopy performed with Bertini. The computations end up in 
$1\,259$ real solutions for FH$_1$ ($1\,457$ for FH$_2$ and $1\,324$ for FH$_3$). 
After reduction to the set $\mathcal{S}$ we remain with $1\,242$ realizations for FH$_1$ ($1\,360$ for FH$_2$ and $1\,238$ for FH$_3$). 
Also for the interpretation as a panel-hinge structure the relations $\varsigma(\Vkt L)=\varsigma(\Vkt V_2)=\varsigma(\Vkt V_3)=0$ 
and $s(\Vkt L)=s(\Vkt V_i)=\varsigma(\Vkt V_1)$ ($i=1,2,3$) hold true 
for all three designs. 
The corresponding value equals $1.748173013469\cdot 10^{-6}$ for FH$_1$
($2.340885199965\cdot 10^{-5}$ for FH$_2$ and $6.288380657092\cdot 10^{-8}$ for FH$_3$).

\begin{table}
\begin{center}
\begin{footnotesize}
\begin{tabular}[h]{c|cccc}
 & $\Vkt V_1$ & $\Vkt V_2$ & $\Vkt V'$ (bar-joint) & $\Vkt V'$ (panel-hinge) \\ \hline\hline
 $x_1$ & -0.439833121345 	& -0.551313194956		& -0.514676938265  	& -0.513910947926  \\ \hline
 $x_3$ & -0.402578359944 	& -0.551313194956 	& -0.496123528337  	& -0.495616858966  \\ \hline
 $y_1$ & -1.045126760122 	& -1.055331194900 	& -1.049041632768 	& -1.047987400977  \\ \hline
 $y_2$ & -0.401357155967 	& -0.504017999944 	& -0.463165051665 	& -0.461829297096  \\ \hline
 $y_3$ & -0.266342733180 	& -0.275656597478 	& -0.269426558812 	& -0.269389463808  \\ \hline
 $u_1$ &  1.045126760122 	&  1.055331194900		&  1.047880696191		&  1.048797729978  \\ \hline
 $u_2$ &  0.401357155968 	&  0.275656597478 	&  0.331798913977 	&  0.332716034007  \\ \hline
 $v_1$ &  0.439833121346 	&  0								&  0.308030377883 	&  0.310260240907  \\ \hline
 $v_3$ &  0.402578359945 	&  0								&  0.267213277407 	&  0.268392688481  \\ \hline
\end{tabular}
\end{footnotesize}
\end{center}
\caption{
Coordinates of the undeformed realizations $G(\Vkt V_1)$ and  $G(\Vkt V_2)$ of the design FH$_1$ and of the 
shaky saddle realization  $G(\Vkt V')$ of FH$_1$ with respect to the two different interpretations.  
The coordinates of $G(\Vkt V_3)$ and $G(\Vkt V'')$ can be obtained from $G(\Vkt V_2)$ and $G(\Vkt V')$
by the following exchange of coordinate entries:
$x_j\leftrightarrow -v_j$  for $j=1,3$ and  $y_i\leftrightarrow -u_i$ for $i=1,2,3$.
}\label{tab:FH1}
\end{table}

\begin{table}
\begin{center}
\begin{footnotesize}
\begin{tabular}[h]{c|cccc}
 & $\Vkt V_1$ & $\Vkt V_2$ & $\Vkt V'$ (bar-joint) & $\Vkt V'$ (panel-hinge) \\ \hline\hline
 $x_1$ & 	-0.610560396069 	& -0.696152422706 	& -0.674892191647  & -0.673709307095  \\ \hline
 $x_3$ &  -0.514152040259 	& -0.696152422706 	& -0.629718504095  & -0.630052932256  \\ \hline
 $y_1$ &  -0.969412109993 	& -1.004809471616 	& -0.984812341395  & -0.981100971148  \\ \hline
 $y_2$ &  -0.446547949553 	& -0.602885682969 	& -0.546021679456  & -0.541654138725  \\ \hline
 $y_3$ &  -0.171366775159 	& -0.200961894323 	& -0.181083746571  & -0.180980608226  \\ \hline
 $u_1$ &   0.969412109993 	&  1.004809471616 	&  0.975854359558  &  0.978589555099  \\ \hline
 $u_2$ &   0.446547949553 	&  0.200961894323 	&  0.316170888257  &  0.318169735899  \\ \hline
 $v_1$ &   0.610560396069 	&  0								&  0.457391731880  &  0.463293753345  \\ \hline
 $v_3$ &   0.514152040259  	&  0								&  0.344387159440  &  0.344641218655  \\ \hline
\end{tabular}
\end{footnotesize}
\end{center}
\caption{The analogous table to Table \ref{tab:FH1} but with respect to the design FH$_2$.}\label{tab:FH2}
\end{table}

\begin{table}
\begin{center}
\begin{footnotesize}
\begin{tabular}[h]{c|cccc}
 & $\Vkt V_1$ & $\Vkt V_2$ & $\Vkt V'$ (bar-joint) & $\Vkt V'$ (panel-hinge) \\ \hline\hline
 $x_1$ & 	-0.284308975844 	& -0.381499642545 	& -0.346332965123  	& -0.345941194845  \\ \hline
 $x_3$ &  -0.274123668705 	& -0.381499642545 	& -0.341068732408 	& -0.340681988409  \\ \hline
 $y_1$ &  -1.091519316228 	& -1.093387667069 	& -1.092180085853 	& -1.092027657175  \\ \hline
 $y_2$ &  -0.383468247941 	& -0.432610903111 	& -0.412298335661	 	& -0.412002561558  \\ \hline
 $y_3$ &  -0.328588016197 	& -0.330388381978 	& -0.329187289575  	& -0.329179948495  \\ \hline
 $u_1$ &   1.091519316228 	&  1.093387667069 	&  1.092099843041  	&  1.092235340815  \\ \hline
 $u_2$ &   0.383468247944 	&  0.330388381978 	&  0.353323819879 	&  0.353582747289  \\ \hline
 $v_1$ &   0.284308975853 	&  0								&  0.190684818526 	&  0.191590754090  \\ \hline
 $v_3$ &   0.274123668714 	&  0								&  0.179953619106 	&  0.180744480102  \\ \hline
\end{tabular}
\end{footnotesize}
\end{center}
\caption{The analogous table to Table \ref{tab:FH1} but with respect to the design FH$_3$.}\label{tab:FH3}
\end{table}

\end{document}